\newif\ifnotes
\title{New Codes on High Dimensional Expanders}
\author{Irit Dinur\thanks{Email: \texttt{irit.dinur@weizmann.ac.il}. Supported by ERC grant 772839, and ISF grant 2073/21. Part of the work was done while visiting the Simons Institute for the Theory of Computing.}\\Weizmann Institute of Science
\and Siqi Liu\thanks{Email: \texttt{sliu18@berkeley.edu}}\\UC Berkeley
\and Rachel Yun Zhang\thanks{E-mail:\texttt{rachelyz@mit.edu}. This research was supported in part by DARPA under Agreement No. HR00112020023, an NSF grant CNS-2154149, and NSF Graduate Research Fellowship 2141064.}\\MIT}
\date{\today}
\let\pref=\prettyref
\newcommand{\rnote}[1]{\ifnotes $\ll$\textsf{\color{red} Rachel: { #1}}$\gg$ \fi}
\newcommand{\snote}[1]{\ifnotes $\ll$\textsf{\color{magenta} Siqi: { #1}}$\gg$ \fi}
\newcommand{\inote}[1]{\ifnotes $\ll$\textsf{\color{blue} Irit: { #1}}$\gg$ \fi}
\definecolor{denim}{rgb}{0.08, 0.38, 0.74}
\definecolor{periwinkle}{rgb}{0.6, 0.6, 0.95}
\definecolor{wildblueyonder}{rgb}{0.64, 0.68, 0.82}
\definecolor{wisteria}{rgb}{0.91, 0.72, 1.00}
\definecolor{thistle}{rgb}{0.85, 0.75, 0.85}
\definecolor{byzantium}{rgb}{0.44, 0.16, 0.39}
\definecolor{deeplilac}{rgb}{0.6, 0.33, 0.73}
\definecolor{jazzberryjam}{rgb}{0.55, 0.04, 0.37}
\definecolor{fireenginered}{rgb}{0.81, 0.09, 0.13}
\definecolor{deepcarrotorange}{rgb}{0.91, 0.41, 0.17}
\definecolor{mangotango}{rgb}{1.0, 0.51, 0.26}
\newtheorem{theorem}{Theorem}[section]
\newtheorem{lemma}[theorem]{Lemma}
\newtheorem{claim}[theorem]{Claim}
\newtheorem{corollary}[theorem]{Corollary}
\theoremstyle{definition}
\newtheorem{definition}[theorem]{Definition}
\newtheorem{remark}[theorem]{Remark}
\Crefname{theorem}{Theorem}{Theorems}
\Crefname{claim}{Claim}{Claims}
\Crefname{lemma}{Lemma}{Lemmas}
\Crefname{proposition}{Proposition}{Propositions}
\Crefname{corollary}{Corollary}{Corollaries}
\Crefname{definition}{Definition}{Definitions}
\newcommand{\modstar}[1]{\ \text{mod}^*\ #1}
\newcommand{\bbN}{\mathbb{N}}
\newcommand{\bbP}{\mathbb{P}}
\newcommand{\bbF}{\mathbb{F}}
\newcommand{\customlabel}[2]{%
   \protected@write \@auxout {}{\string \newlabel {#1}{{#2}{\thepage}{#2}{#1}{}} }%
   \hypertarget{#1}{#2}
}
\newcommand\numberthis{\addtocounter{equation}{1}\tag{\theequation}}
\newcounter{casenum}
\newenvironment{caseof}{\setcounter{casenum}{0}}{\vskip.5\baselineskip}
\newcommand{\case}[2]{
    \refstepcounter{casenum}
    \ifthenelse{\equal{\value{casenum}}{0}}{
    \vskip.5\baselineskip\par\noindent
    }{}
    {\it Case \arabic{casenum}:} {\it #1}
    \vskip0.1\baselineskip
    \begin{addmargin}[1.5em]{1em}
    #2
    \end{addmargin}
}
\newcounter{subcasenum}
\newcounter{casenumb}
\newcounter{subcasenumb}
\newcommand\T[1]{{X}_{\scalebox{0.5}{+}#1}}
\def\F{\mathbb{F}}
\newcommand{\bits}{\F_2}
\newcommand\sett[2]{\left\{ #1 \left| \; \vphantom{#1 #2} \right. #2  \right\}}
\newcommand{\set}[1]{\{#1\}}
\newcommand{\Set}[1]{\left\{#1\right\}}
\newcommand{\iprod}[1]{\langle#1\rangle}
\def\L{{\cal L}}
\def\ba{\mathbf{a}}
\def\a{\alpha}
\newcommand{\eqdef}{\stackrel{\mathrm{\Delta}}=}
\def\upperrw{\smallfrown}
\def\lowerrw{\smallsmile}
\newcommand{\Esymb}{\mathbb{E}}
\newcommand{\Psymb}{\mathbb{P}}
\DeclareMathOperator{\dist}{dist}
\DeclareMathOperator*{\E}{\Esymb}
\DeclareMathOperator*{\ProbOp}{\Psymb}
\renewcommand{\Pr}{\ProbOp}
\def\one{{\mathbf{1}}}
\newcommand{\R}{\mathbb R}
\newcommand{\Rnn}{\R_+}
\newcommand{\norm}[1]{\lVert#1\rVert}
\newcommand{\snorm}[1]{\norm{#1}^2}
\def\sM{\tilde M}
\newcommand{\card}[1]{\lvert#1\rvert}
\def\lin{\ell}
\newcommand{\remove}[1]{}
\def\calB{{\cal B}}
\def\calC{{\cal C}}
\def\calG{{\cal G}}
\def\calP{{\cal P}}
\def\calT{{\cal T}}
\newcommand\res[2]{{\rm Res}_{{#2}\leftarrow{#1}}}
\def\FX{{\cal F}X}
\begin{document}

\sloppy
\maketitle
\begin{abstract}

We describe a new parameterized family of symmetric error-correcting codes with low-density parity-check matrices (LDPC). 

Our codes can be described in two seemingly different ways. First, in relation to Reed-Muller codes: our codes are functions on a subset of $\F^n$ whose restrictions to a prescribed set of affine lines has low degree. Alternatively, they are Tanner codes on high dimensional expanders, where the coordinates of the codeword correspond to triangles of a $2$-dimensional expander, such that around every edge the local view forms a Reed-Solomon codeword. 

For some range of parameters our codes are provably locally testable, and their dimension is some fixed power of the block length. For another range of parameters our codes have distance and dimension that are both linear in the block length, but we do not know if they are locally testable. The codes also have the {\em multiplication property}: the coordinate-wise product of two codewords is a codeword in a related code.

The definition of the codes relies on the construction of a specific family of simplicial complexes which is a slight variant on the coset complexes of Kaufman and Oppenheim. We show a novel way to embed the triangles of these complexes into $\F^n$, with the property that links of edges embed as affine lines in $\F^n$.

We rely on this embedding to lower bound the rate of these codes in a way that avoids constraint-counting and thereby achieves non-trivial rate even when the local codes themselves have arbitrarily small rate, and in particular below $1/2$. 

\end{abstract}
\thispagestyle{empty}
\newpage

\tableofcontents
\pagenumbering{roman}
\newpage
\pagenumbering{arabic}

\section{Introduction}

A locally testable code (LTC) is an error correcting code that has a property-tester. The tester reads $q$ bits that are randomly chosen, and rejects words with probability that grows with their distance from the code. 

LTCs were initially studied and constructed side by side with PCPs (probabilistically checkable proofs). It was only recently that the question of existence of LTCs with the $c^3$ property was resolved in the affirmative \cite{DinurELLM2022, PanteleevK22}. A code has the $c^3$ property if it has constant rate, constant distance and testable with constant locality. 

It was initially hoped that high dimensional expanders, a la \cite{LubotzkySV2005a, LubotzkySV2005b} can serve as the combinatorial structure underlying the code, and all one needs to do is to find an appropriate collection of local codes (at the links) to match up with the combinatorics, see \cite{DinurK2017, DinurHKR2018, DiksteinDHR20}. 

This approach has not borne fruit up until now, essentially due to the stringent requirements on the local codes, which turned out difficult to fulfill. Nevertheless, $c^3$ codes were eventually constructed by circumventing the problem and switching from simplicial to square complexes. The main benefit of square complexes is that their local views support tensor codes, which satisfy the requirements for local testability.

In this work we return to the simplicial setting and construct a new parameterized family of locally testable codes on simplicial (bounded-degree) high dimensional expanders. 
In addition to serving as a new and potentially interesting family of LDPC error-correcting codes,  these codes satisfy 
further properties that could be potentially useful for other applications such as PCPs and beyond. In particular, the codes are symmetric in the sense of \cite{KaufmanW16, KaufmanL12}, meaning that there is a group acting on the coordinates of the codeword, that takes every coordinate to every coordinate. In addition, the fact that local views of our codes are Reed-Solomon, immediately implies that
the codes satisfy the {\em multiplication property}: the coordinate-wise product of two codewords is a codeword in a related code. 


Our codes can be described in two seemingly different ways. First, in relation to Reed-Muller codes: our codes are functions on a subset  $\bar X_n\subset \F^n$ whose restrictions to a prescribed set of affine lines has low degree. Alternatively, they are Tanner codes on high dimensional expanders, where the coordinates of the codeword correspond to triangles of a $2$-dimensional expander $X$, such that around every edge the local view forms a Reed-Solomon codeword. 
The definition of the codes relies on the construction of a specific family of simplicial complexes whose triangles embed naturally into $\F^n$, with the property that links of edges embed as affine lines\footnote{An affine line in $\F^n$ is  given by $\ba_0\in \F^n$ and $\ba_1\in \F^n\setminus\set 0$ so that $\ell_e = \sett{\ba_0 +t\ba_1}{t\in \F}$} in $\F^n$.
\begin{theorem}\label{thm:mainX}
    Let $\F=\F_q$ be a fixed finite field. For every $n$ divisible by $9$ there exists a connected $2$-dimensional $3$-partite simplicial complex $X_{n}$, such that 
    \begin{enumerate}
         \item[(a)]\label{item:mainX-link} For each vertex $v\in X_n(0)$, the link of $v$ is a bipartite $q$-regular graph on $2q^2$ vertices whose normalized adjacency matrix has second largest eigenvalue $\lambda_2=1/\sqrt{q}$.
        \item[(b)]\label{itm:size} There is a set of points $\bar X_n \subset \F^n$ and an injective map $\iota:X_n(2)\to \F^n$, such that $\bar X_n=Im(\iota) \subset \F^n$, and \[\card{X_n(2)} = \card{\bar X_n} \geq q^{ c n}\] for some absolute constant $c>0$.
        \item[(c)] \label{item:mainX-edge} There is a set $\L_n$ of affine lines in $\F^n$ with one line per edge $e\in X_n(1)$. The line corresponding to an edge $e$ is given by a bijection $\lin_e:\F\to \T e$ such that $\iota\circ\lin_e:\F\to\F^n$ is an affine line in $\F^n$, where we denote $\T e = \sett{t\in X_n(2)}{t\supset e}$. Let $\L_n=\sett{\iota\circ\lin_e}{e\in X(1)}$. 
        \item[(d)] $X_n$ is $3$-partite, so the edges have $3$ distinct types and we denote by $\L^i_n$ the set of lines corresponding to type-$i$ edges.
   \end{enumerate}
    Furthermore, $X_n$ and the maps $\iota,\set{\lin_e}_e$ are constructible in polynomial time in the size of the complex $X_n$.
    \end{theorem}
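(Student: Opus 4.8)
The plan is to construct $X_n$ explicitly as a variant of the Kaufman--Oppenheim coset complex and then to exhibit the embedding $\iota$ by hand, reading off the affine-line structure from the algebra of the group elements. First I would set up the group-theoretic data: fix the field $\F=\F_q$ and work inside a matrix group over $\F[y]$ (or a suitable truncated polynomial ring), with three subgroups $H_0,H_1,H_2$ playing the role of the three colors; the vertices of $X_n$ of type $i$ are the cosets of $H_i$, edges are cosets of the pairwise intersections $H_i\cap H_j$, and triangles are cosets of $H_0\cap H_1\cap H_2$ (which here is trivial, so triangles correspond to group elements). The $3$-partiteness in (d) is then immediate from the coloring by subgroup index, and connectedness follows once I check the $H_i$ generate the group. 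The link calculation in (a) is the standard one: the link of a type-$i$ vertex is the bipartite incidence graph between cosets of $H_i\cap H_j$ and $H_i\cap H_k$ inside $H_i$, and one chooses the subgroups so that this is exactly the bipartite point-line incidence graph of $\mathrm{AG}(2,q)$ or $\mathrm{PG}(1,q)$-type structure on $2q^2$ vertices; its spectral gap $\lambda_2 = 1/\sqrt q$ is a Ramanujan-type bound obtained either from an explicit eigenvalue computation or by recognizing the graph as a known expander (e.g. a bipartite double cover of an incidence graph whose eigenvalues are $\pm\sqrt q$ and $\pm q$, normalized by $q$).

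The heart of the argument is part (b)--(c): the embedding. I would define $\iota$ by taking a triangle (= group element $g$) to a vector in $\F^n$ obtained by extracting the $n$ "coordinates" of $g$ — concretely the coefficients appearing in the relevant matrix entries of $g$ written as polynomials in $y$ modulo $y^{n/\text{something}}$, which is why $9 \mid n$ is natural (three blocks of polynomial-degree data, one per coordinate of a $3\times 3$ unipotent-type matrix, each of length $n/9$, or some similar bookkeeping). The key structural fact to verify is that for a fixed edge $e = g(H_i\cap H_j)$, the $q$ triangles containing $e$ are $g h (H_0\cap H_1\cap H_2)$ as $h$ ranges over a set of coset representatives of $H_i\cap H_j$ that form a one-parameter subgroup $\{u(t) : t\in\F\}$ with $u(t)$ affine-linear in $t$ in the relevant entries; then $\iota(g\,u(t))$ is $\iota(g)$ plus $t$ times a fixed vector (the entrywise coefficients of the generator of that one-parameter subgroup, left-translated by $g$), which is exactly an affine line. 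This gives the bijection $\ell_e : \F \to \T e$ and shows $\iota\circ\ell_e$ is affine. Injectivity of $\iota$ reduces to checking that distinct group elements have distinct coordinate-vectors, i.e. that the chosen coordinates determine $g$ — this should hold if $g$ ranges over a unipotent subgroup on which the matrix entries are genuinely independent polynomials of bounded degree. The lower bound $\card{X_n(2)} \ge q^{cn}$ follows by counting: the group has order roughly $q^{\Theta(n)}$ because it is built from polynomials of degree $\Theta(n)$ over $\F_q$, and since triangles biject with group elements, $\card{X_n(2)} = \card{\bar X_n}$ is of this size; I would just track the exponent to extract the absolute constant $c$.

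Finally the "furthermore" clause: polynomial-time constructibility is essentially free once everything is explicit — generating the vertex/edge/triangle sets is a BFS/closure computation in the group starting from the generators of the $H_i$, each group element is stored as an $O(n)$-size object, the map $\iota$ is a direct read-off of coefficients, and the lines $\ell_e$ are computed from the one-parameter subgroup generators; all of this is polynomial in $\card{X_n}$, which is the stated bound. The main obstacle I anticipate is part (c): arranging the three subgroups and the ambient ring so that simultaneously (i) the edge-links have the right $2q^2$-vertex structure and Ramanujan spectrum, (ii) the coset representatives of each edge-subgroup inside each vertex-subgroup really do form an affine one-parameter family whose action on the extracted coordinates is affine-linear, and (iii) $\iota$ stays injective globally and not just locally along lines. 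Getting all three to coexist is what forces the "slight variant" on the Kaufman--Oppenheim construction, and verifying (ii)+(iii) together is the calculation I'd expect to be most delicate — it amounts to choosing a system of coordinates on the group in which left-translation by edge-stabilizer generators is affine, which is a nontrivial constraint on how the three subgroups sit relative to one another.
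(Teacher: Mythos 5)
Your proposal follows essentially the same route as the paper: a coset complex over $\mathsf{SL}_3$ of a quotient polynomial ring (a variant of Kaufman--Oppenheim), with vertices/edges/triangles given by cosets of three unipotent subgroups and their pairwise intersections, links identified as affine point--line incidence graphs on $2q^2$ vertices with normalized second eigenvalue $1/\sqrt q$, and $\iota$ defined by reading off the $\F_q$-coefficients of the nine matrix entries so that right-multiplication by a one-parameter edge subgroup $h_i(\alpha)=I+\alpha t N$ becomes an affine line $\iota(g)+\alpha\,\iota(gN)$. The only minor over-complication is your worry about injectivity of $\iota$ on a ``unipotent subgroup'': since $\iota$ is just the tautological coordinate map on all of $M_3(R_n)$, injectivity is immediate.
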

The complexes $\set{X_n}$ are a slight variant on the coset complexes of \cite{KaufmanO181}. 
The embedding of $X_n(2)$ into $\F^n$, so that the links of edges map into affine lines is new, and allows us to define a new family of codes on $X_n$,
\begin{definition}[HDX local Reed-Solomon Codes]\label{def:code}
    Let $\F=\F_q$ and $\set{X_n}_n$ be as above, and let $d_1,d_2,d_3 < q$ be positive integers. We define a family of codes as $n\to\infty$, 
    \begin{align}\label{eq:RMcode}
    C_{n,d_1,d_2,d_3} &= \sett{f:\bar X_n\to\F}{ \forall i=1,2,3,\,\forall\lin\in \L_n^i,\quad f \circ \lin \in RS(q,d_i)}
    \end{align}
    where $RS(q,d)\subset \F^{\F}$ is the Reed Solomon code of degree $d$ over $\F$. Note that we have three distinct degree parameters since the edges of the complex have three distinct types.
    
    The code $C_{n,d_1,d_2,d_3}$ can alternatively be described as 
    \begin{align}\label{eq:Tannercode}
        C_{n,d_1,d_2,d_3} &= \sett{f:X_n(2)\to\F}{ \forall e\in X(1),\; f|_{\T e} \in C_e}. 
    \end{align}
for an appropriate choice of codes $C_e\subset \F^{\T e}$ such that $C_e$ isomorphic to $RS(q,d_i)$ whenever $e$ is an edge of type $i$.

We also denote $C_{n,d}=C_{n,d,d,d}$. In the sequel we will often focus on $C_{n,d}$ as it contains all of the ideas, but we wanted to include the slightly more general definition of $C_{n,d_1,d_2,d_3}$.
\end{definition}
The isomorphism between definitions \eqref{eq:RMcode} and \eqref{eq:Tannercode} is essentially given by $\iota$ from \pref{thm:mainX}. 
The map $\iota$ identifies $X(2)$ and $\bar X_n\subset \F^n$, and further identifies the set of triangles containing an edge $e\in X(1)$ with an affine line in $\F^n$ (see also Section \ref{subsec:globalcode} 
and Claim \ref{claim:iso} for a full proof). 

By definition \eqref{eq:RMcode} we can see that every $n$-variate polynomial of total degree at most $d=\min(d_1,d_2,d_3)$ gives rise to a codeword in $C_{n,d}$. This allows us to give a non-trivial lower bound on the dimension of $C_{n,d}$ even when $d$ is small and constraint-counting fails. On the other hand, definition \eqref{eq:Tannercode} allows us to prove distance and local testability through the high dimensional expansion machinery. 

\begin{theorem}\label{thm:mainCode}        
    \remove{so the local rate stuff only holds for prime field size, and i think that gets used in local testability} 
    Fix a prime power $q$ and let $d<q$. The family $\set{C_{n,d}}_n$ has the following properties
\begin{enumerate}
    \item \label{item:thm1.3-rate} \textbf{Rate:} The dimension of the code is at least the dimension of the Reed-Muller code with $n/3$ variables and degree $d$. Furthermore, for $d>\frac 2 3 q$ the code has dimension $\Omega(\card{X_n(2)})$ (namely, linear rate as $n\to\infty$).
    \item \label{item:thm1.3-distance} \textbf{Distance:} If $d < q- \Omega(\sqrt{q})$ the code has constant relative distance $\Omega_q(1)$. 
    \item \label{item:thm1.3-testability} \textbf{LDPC and Local Testability:} 
    For all $d$ and $n$, the code is defined by parity checks of length $d+2$. When $q$ is prime, if $d<q/4$ the code is locally testable with $d+2$ queries. 
    \item \label{item:thm1.3-multiplication} \textbf{Multiplication:} For any $d_1,d_2$, and for any  $w_1\in C_{n,d_1}$ and $w_2\in C_{n,d_2}$, we have $w_1\odot w_2\in C_{n,d_1+d_2}$ where $w_1\odot w_2$ is the coordinate-wise product of $w_1$ and $w_2$.
    \item \label{item:thm1.3-transitivity} \textbf{Symmetric:} There is a transitive group action on the coordinates of the code that preserves the code. Transitivity means that for every pair of coordinates $t,t'$ there is a group element that takes $t$ to $t'$.
\end{enumerate}
\end{theorem}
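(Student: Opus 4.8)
The plan is to establish the five items separately, using the Reed--Muller description \eqref{eq:RMcode} for the rate and multiplication properties and the Tanner description \eqref{eq:Tannercode} together with the spectral data of \pref{thm:mainX} for the distance and testability properties. \emph{Rate.} Since restricting a total-degree-$\le d$ polynomial to an affine line yields a univariate polynomial of degree $\le d$, the evaluation map $p\mapsto p|_{\bar X_n}$ sends $\F[x_1,\dots,x_n]_{\le d}$ into $C_{n,d}$. To get $\dim C_{n,d}\ge\dim RM(n/3,d)$ I would read off from the construction of $\iota$ a projection of $\F^n$ onto a coordinate subspace $\F^{n/3}$ under which $\bar X_n$ maps onto all of $\F^{n/3}$ --- by $3$-partiteness this is the sub-tuple of coordinates recording one of the three vertices of a triangle; then $p\mapsto p|_{\bar X_n}$ is injective on polynomials in those $n/3$ variables, because for $d<q$ a nonzero polynomial of degree $\le d$ in $n/3$ variables is not identically zero on $\F^{n/3}$. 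For the linear-rate statement when $d>\tfrac23q$ I would count constraints: \eqref{eq:Tannercode} imposes one constraint per edge of codimension $q-1-d$, and $\card{X_n(1)}=3\card{X_n(2)}/q$ (each triangle has three edges, each edge lies in $q$ triangles), so $\dim C_{n,d}\ge\card{X_n(2)}\bigl(1-3(q-1-d)/q\bigr)=\Omega(\card{X_n(2)})$ once $d>\tfrac23q$.

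\emph{Distance.} Work with \eqref{eq:Tannercode}; the local codes $C_e$ have relative distance $\delta_0=(q-d)/q$, and by \pref{thm:mainX}(a) $X_n$ is a $(1/\sqrt q)$-local-spectral expander. Let $f\ne 0\in C_{n,d}$ with support $S$ of relative size $\sigma$, and call an edge \emph{active} if $f|_{\T e}\ne 0$, so that $f|_{\T e}$ then has weight $\ge\delta_0 q$. Summing weights ($\sum_e\card{S\cap\T e}=3\card S$) bounds the fraction of active edges by $\sigma/\delta_0$. Every triangle of $S$ has all three of its edges active, so $\sigma$ is at most the fraction of triangles all of whose edges lie in a given set of relative size $\alpha\le\sigma/\delta_0$; since two edges of a random triangle that share a vertex are related by a walk of spectral expansion $1/\sqrt q$, expander mixing in the vertex links bounds this fraction by $O\bigl(\alpha(\alpha+1/\sqrt q)\bigr)$. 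Hence $\sigma\le O\bigl((\sigma/\delta_0)(\sigma/\delta_0+1/\sqrt q)\bigr)$, which forces $\sigma\ge\Omega\bigl(\delta_0(\delta_0-1/\sqrt q)\bigr)$; for $d<q-C\sqrt q$ with $C$ a large enough absolute constant this is $\Omega(1/q)=\Omega_q(1)$.

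\emph{LDPC and local testability.} That $C_{n,d}$ is LDPC with checks of length $d+2$ is immediate from \eqref{eq:Tannercode}: $f\in C_{n,d}$ iff for every edge $e$ and every $(d+2)$-subset $T\subseteq\T e$ the values $f|_T$ are consistent with a univariate polynomial of degree $\le d$, a single parity check of weight $d+2$. For local testability the tester picks a uniform edge $e$ and a uniform $(d+2)$-subset of $\T e$ and rejects if that check fails; one must show $\Pr[\text{reject}]\ge\Omega(\dist(f,C_{n,d}))$. I would prove this by the local-to-global correction paradigm: given $f$ with rejection probability $\varepsilon$, form $g$ by voting at each triangle $t$ among the edges $e\ni t$ for the symbol predicted by the nearest $RS(q,d)$ codeword to $f|_{\T e}$, then show $\dist(f,g)=O(\varepsilon)$ and $g\in C_{n,d}$ via expander-mixing estimates inside the vertex links. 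The hypotheses ``$q$ prime'' and ``$d<q/4$'' are what make this run: $d<q/4$ gives the local codes relative distance $>3/4$, so the votes are well defined and consistent on overlaps with slack that swamps the $O(1/\sqrt q)$ error coming from link expansion, while $q$ prime is used so that the basic univariate low-degree test of $RS(q,d)$ has constant soundness on its own. I expect the crux to be the second point --- promoting ``$g$ satisfies most edge constraints'' to ``$g$ satisfies every edge constraint'' by propagating local agreement around and across the complex through its spectral gap --- which is exactly where the $d<q/4$ slack gets spent.

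\emph{Multiplication and symmetry.} For multiplication, for any line $\lin$, $(w_1\odot w_2)\circ\lin=(w_1\circ\lin)\cdot(w_2\circ\lin)$ is a product of univariate polynomials of degrees $\le d_1$ and $\le d_2$, hence of degree $\le d_1+d_2$, so (assuming $d_1+d_2<q$) $(w_1\odot w_2)\circ\lin\in RS(q,d_1+d_2)$ for every line, i.e.\ $w_1\odot w_2\in C_{n,d_1+d_2}$ by \eqref{eq:RMcode}. For symmetry, $X_n$ is a (variant of a) coset complex of a group $G$, which acts on $X_n$ by left translation; this action is an automorphism of the complex, is transitive (indeed regular) on the triangles $X_n(2)$, and preserves edge-types. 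Since each $C_e$ is determined by the $\F$-affine structure that $\iota\circ\lin_e$ puts on $\T e$, which is transported equivariantly by this action, and $RS(q,d)$ is affine-invariant (closed under precomposing with affine bijections of $\F$), the action preserves $C_{n,d}$, giving the required transitive symmetry.
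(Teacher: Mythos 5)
Your treatment of multiplication, symmetry, and both rate bounds is essentially the paper's (Lemma \ref{lem:mult}, Claim \ref{clm:trans}, Section \ref{sec:rate-global}): the paper likewise embeds the Reed--Muller code by working with the $n/3$ coordinates of the strictly upper-triangular entries of the unipotent matrices (it restricts to that sub\emph{set} of $\bar X_n$, which covers all of $\F^{n/3}$ in those coordinates, rather than exhibiting a surjective projection of all of $\bar X_n$), and does the same constraint count for $d>\tfrac23 q$. The remaining two items have genuine gaps.

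For distance, the inequality you invoke --- that the fraction of triangles all of whose edges lie in a set $R$ of edge-density $\alpha$ is $O(\alpha(\alpha+1/\sqrt q))$ --- is false for general $R$ in a bounded-degree complex, because the walk from a random triangle to two of its edges is \emph{not} a $1/\sqrt q$-expander on $X(1)$: its second eigenvalue is about $1/2$ (take the indicator of the edges at a single vertex; Lemma \ref{lemma:updown} only compares $M^+$ to $UD$, whose nontrivial spectrum is not small). Concretely, letting $R$ be the $3q^2$ edges of the triangles containing one fixed vertex gives an all-$R$-triangle fraction of at least $\alpha$, which swamps $O(\alpha(\alpha+1/\sqrt q))$. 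The paper's Lemma \ref{lemma:distance} avoids this by localizing: Alon--Chung inside each link shows every ``active'' vertex has a $\ge\delta-\gamma$ fraction of active edges in its link, and only then is Alon--Chung applied to $(X(0),X(1))$ to lower-bound the density of active vertices; your one-shot mixing step would fail.

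For local testability, the sketch is missing the central mechanism. Testability cannot be derived directly from the distance of the edge codes $C_e$ plus link expansion; the paper first proves that each \emph{vertex} code $C_v\cong C_{d_x,d_y}$ of \eqref{eq:def:vertexcode} is agreement-testable (Theorem \ref{thm:local-testability}, a Polishchuk--Spielman-style argument with an error-locator polynomial, a resultant computation in the skew coordinates, and the local rate bounds), and only then runs the local-to-global argument of Theorem \ref{thm:loc2glob}. Your attributions of the hypotheses are also off: $d<q/4$ is needed so that $d_x+d_y=2d<q/2$ in the local agreement test, not to push the Reed--Solomon distance above $3/4$; and $q$ prime is needed for the degree/rate analysis of $C_{d_x,d_y}$ (Lemmas \ref{lemma:invertible_matrix} and \ref{lem:dx+dy}) on which the error-locator construction relies, not for soundness of a univariate test, which holds over any field.
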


\begin{remark}
Here is a table that summarizes the rate, relative distance, and local testability of $\set{C_{n,d}}_n$ in different regimes of $d$.

\begin{center}
\begin{tabular}{ |p{3cm} ||p{2cm} |p{1.5cm}|p{3cm}|  }
 \hline
 $d$ & Rate  & Distance & Local testability\\
 \hline 
 $(0, q/4)$   & \multirow{2}{2cm}{$\ge{{n/3}\choose{d}}$}   & \multirow{3}{1.5cm}{$\Omega_q(1)$} &   $d+2$\\[0.5ex] 
 $[q/4, \frac{2}{3}q)$ &     &   & \multirow{2}{3cm}{?}\\[0.5ex] 
 $[\frac{2}{3}q, q -\Omega(\sqrt{q}))$ &$\Omega(\card{X_n(2)})$ &   &  \\[0.5ex] 
 \hline
\end{tabular}
\end{center}
We fail to give a family of $c^3$ codes (with constant relative rate, constant relative distance, and local testability with constant number of queries). Specifically, in the low degree regime $d< q/4$, the lower bound on the relative rate is subconstant $>\frac{{{n/3}\choose{d}}}{\card{X_n(2)}} = q^{-\Omega(n)}$. In the high degree regime $d > \frac{2}{3}q$, our approach for showing local testability fails, and we don't know whether or not local testability occurs.
\end{remark}
%
%
There is a natural way to generalize our complexes and codes to higher dimensions. In \pref{sec:higher} we describe this generalization and show that distance as well as local testability ``trickles down''. We also show that the codes have a homological description as the space of cycles on the top dimension. 
\subsection{Background}
Recent works \cite{DinurELLM2022,PanteleevK22} give locally testable codes that have the $c^3$ property, namely they have constant relative rate, constant relative distance and locally testable with a constant number of queries. These codes are constructed on an especially designed squares complex. Similar schemes have been hypothesized to exist on simplicial complexes (see discussion in \cite{DinurELLM2022}, and see \cite{firstK2022good}). 

The simplicial complex codes are defined as 
\begin{equation}\label{eq:HDXcode} C = \sett{w\in \F^{X(2)}}{w|_{\T e}\in C_e}
\end{equation}
where we fix, for each edge $e$, a local code $C_e \subseteq \F^{\T e}$ where $\T e$ is the set of triangles containing the edge $e$. In other words, $C$ is defined by aggregating the local codes $C_e$ into one big parity check matrix, using the structure of the HDX. The definition is simple, but the challenge is in analysing properties of $C$ such as rate, distance, or local testability.
\begin{enumerate}
    \item Distance follows quite directly from the  distance of each $C_e$ together with expansion of $X$.
    \item Local testability can be shown if we manage to show a local version of local testability, per each vertex $v$. Namely, we look at $C_v := C|_{\T v}\subseteq \F^{\T v}$, the restriction of the code to the set $\T v$ of triangles containing a vertex $v$. This itself is a linear subspace of $\F^{\T v}$. We say that $C_v$ itself is locally testable if a good test for  $z\stackrel ?\in C_v$ is to select a random $e\ni v$ and check if $z|_{\T e}\in C_e$.  
    
    If each $C_v$ is locally testable, then the high dimensional expansion of $X$ would allow us to deduce local testability of the entire code $C$. 

    This is the content of \pref{thm:loc2glob}, whose proof is based on the same ideas underlying the proof of local testability in the the squares complex \cite{DinurELLM2022,PanteleevK22} and also earlier the proof of cosystolic expansion in high dimensional expanders \cite{KaufmanKL2014,EvraK2016}, and is similar to \cite[Proposition 6.4]{firstK2022good}, although there are technical differences.\footnote{They show that if the links are coboundary expanders, then the global sheaf is a cosystolic expander. This is morally equivalent to saying that local robustness implies global local-testability. However, our proof makes use of a weaker local robustness condition, which is what we manage to prove for the local codes.}
    \item The dimension (or rate) of $C$ could a priori be $0$, so one needs to show that $C$ is non-trivial somehow. So far the only successful method has been through constraint counting. One shows that the total number of parity checks is smaller than the number of degrees of freedom. Of course the constraints are often highly dependent so this argument is not tight, and one would like new techniques for lower bounding the dimension.  
\end{enumerate}
A major difficulty in construction of LTCs is in coming up with a collection of local codes $C_e$ that simultaneously allow proving that each $C_v$ is locally testable, and at the same time maintain non-trivial dimension for $C$.
Indeed, several prior works gave a general framework for HDX codes  but did not know how to instantiate them non-trivially. In \cite{DDHR}, the authors gave a definition based on double samplers, \cite{DELLM} constructed codes on squares complex but also considered codes on the LSV complex with no non-trivial rate, and \cite{firstK2022good} defined sheaves on high dimensional expanders. 

In this work we give a first family of codes on high dimensional expanders that are defined by instantiating \eqref{eq:HDXcode}. As mentioned before, our local codes $C_e$ are Reed-Solomon codes. The codes $C_v$ turn out to be the following
\begin{equation}\label{eq:local-codes}
    C_v\cong C_{d_x,d_y} = \sett{f:\F_q^3\to \F_q}{ \forall a,b,c,\,\, deg_x(f(x,b,c))\leq d_x; \,\, deg_y (f(a,y,ay+c)) \leq d_y}.     
    \end{equation}
We prove that $C_{d_x,d_y}$ is agreement-testable when $d_x + d_y < q/2$ (see \pref{thm:local-testability}) and when $q$ is a prime. Namely, given two functions $X,Y:\F_q^3\to\F_q$ such that
for all $b,c\in \F_q$, $X(x,b,c)$ is a low degree  function of $x$; and such that for all $a,c\in \F_q$, $Y(a,y,ay+c)$ is a low degree  function of $y$, if
\[ \Pr_{a,b,c}[X(a,b,c) \neq Y(a,b,c)] < \epsilon\]
then there is some $f\in C_{d_x,d_y}$ that is close to both $X$ and $Y$.
To show that this is a good test we adapt the analysis of Polyschuk and Spielman \cite{PolishchukS94} of the local testability of bivariate polynomial functions.  
We remark that the local testability here is slightly weaker: there is a quadratic, rather than linear, relation between the distance of the word to a code and the probability of failing the test. Nevertheless we derive local-testability of $C$ from this slightly weaker local testability of the $C_v$'s. 

Finally, as for the dimension of the code $C$. For degree parameters that allow local testability, constraint counting is out of the question, because there are more constraints than degrees of freedom. Instead we show that the code contains many codewords by showing that every low degree multi-variate polynomial is a valid codeword in of our code. We leave for future research to try and get tighter bounds on the rate of these codes.

\paragraph{Multiplication Property} A triple $C_1,C_2,C_3$ of codes are so-called {\em multiplication codes} if for any $w_1\in C_1$ and $w_2\in C_2$ the word $w_1\odot w_2$ obtained by coordinate-wise product satisfies $w_1\odot w_2\in C_3$. This property was introduced in \cite{Meir13} who used it towards a proof for $IP=PSPACE$ involving abstract codes, not necessarily Reed-Muller. The original proof of \cite{LFKN,Sha:IP:PSPACE} relied on
Reed-Muller codes which are indeed multiplication codes. The multiplication property is used for example in the proof of the PCP theorem \cite{LFKN,AS,ALMSS}, as well as for further recent cryptographic applications. For such applications it would be very useful to have linear-time encodable multiplication codes.

Since our codes are defined as lifts of Reed-Solomon codes they are automatically multiplication codes for appropriate degree choices. By choosing the parameters appropriately, we can get $C_1,C_3$ an LDPC code with linear rate, and $C_2$ a locally testable (LDPC) code with polynomial rate, such that this triple has the multiplication property (see Lemma \ref{lem:mult}).  

\paragraph{Connection to Reed-Muller codes, lifted codes, and Tanner Codes.} 
Recall that a Tanner code is given by two pieces of data:
\begin{itemize}
    \item A bipartite graph $\calG=(\calB,\calP,E)$ with left vertices $\calB$ (for bits), right vertices $\calP$ (for parity checks), and edges $E$. 
    \item For each $p\in \calP$ there is a local code $C_p \subset \F^{\Gamma(p)}$ where $\Gamma(p)\subset \calB$ are the neighbors of $p$.
\end{itemize}
Given $\calG$ and $\set{C_p}$, the Tanner code is  
\[ \calT(\calG, \set{C_p}) = \sett{w\in \F^\calB}{\forall p\in\calP,\; w|_{\Gamma(p)}\in C_p}.\]

The Reed-Muller code $RM_{n,d}$ is the space of all $n$-variate polynomials of total degree at most $d$. 
For a broad set of parameters, Reed-Muller codes are themselves known to be Tanner codes, with local parity check codes being Reed Solomon codes, and with a $\calB = \F^n$ and $\calP$ being the set of all affine lines in $\F^n$, so that the bipartite graph $\calG_{RM}$ is the point-vs.-line graph which connects a point in $\F^n$ to all lines passing through it. 
For some parameter regimes (when the degree is high) such Tanner codes form a larger space than the Reed-Muller codes, as was discovered and studied in \cite{GuoKS13}, where such codes are  termed ``lifted codes''.
Later in \cite{FGW2017}, partial lifts are considered where some of the vertices of $\calP$ are erased and one considers the remaining punctured code. This resembles the situation in our codes, as we explain next.

The code $C_{n,d_1,d_2,d_3}$, as per \eqref{eq:RMcode}, can be viewed a Tanner code whose graph is a subgraph of $\calG_{RM}$ obtained by keeping a subset $\L_n$ of the lines, and a subset $\bar X_n\subset \F^n$ of the points. We also allow different degree restrictions on different types of lines. If we set $d=d_1=d_2=d_3$ the parity check matrix of $C_{n,d}$ is thus a sub-matrix of the parity check matrix of the Reed-Muller code matrix.
As for \eqref{eq:Tannercode}, it can be seen as a Tanner code on the bipartite graph $\calG=(X(2),X(1),E)$ connecting each edge $e\in X(1)$ on the right to the set $\T e \subset X(2)$ of triangles containing it, and putting a Reed-Solomon code on every $\T e$ in a specified way. 
\paragraph{Two-query testability}
One can often convert a code that is testable with $q$ queries to another code that is testable with two queries, while increasing the alphabet. This is done by converting each codeword to the list of its local views. For example, in the case of Reed-Muller codes, instead of representing a polynomial by its evaluation on points, we represent it by providing its restriction to each plane. This is sometimes called the planes table, and is two-query testable by the plane-vs.-plane test, see \cite{RaSa}. In the case of $c^3$ LTCs of \cite{DinurELLM2022, PanteleevK22}, this conversion is immediate, although it is not described there explicitly. For our codes it also holds, and we include details in Claim \ref{claim:2q}. This property was also highlighted in \cite{firstK2022good}.

\subsection{Further Work}
This work raises many questions for further investigation. 
\begin{itemize}
    \item {\bf Better bounds.} What is the exact rate of our codes? We give a lower bound based on Reed-Muller codes, but we don't know how close this bound is to the truth. Are there parameter regimes where our codes are $c^3$ ? 
    \item {\bf Efficient encoding.} Our codes are specified by their parity-check matrices. A generator matrix can be computed in $O(N^3)$ time, where $N$ is the block length. Given the generator matrix, computing an encoding takes time $O(N^2)$. Can encoding be done more efficiently, ideally even in linear-time?
    \item {\bf Quantum codes.} It has recently become clear that locally testable HDX codes are related to quantum LDPC codes of the CSS type. Our construction gives a $2$-chain (from vertices to edges to triangles) which can automatically be viewed as a quantum LDPC code. Indeed, we show in Section \ref{subsec:sheaves} how any HDX code gives rise to a sheaf which, by \cite[Section 7.4]{firstK2022good}, gives rise to a quantum code.  Our local testability analysis implies distance in one direction, whereas for a good quantum LDPC code, one needs to prove both distance as well as co-distance. This is an interesting challenge. Moreover, in Section \ref{sec:higher} we show how our construction generalizes to $k$ dimensions, which, by the same recipe as above, can be converted to a sheaf and a $k+1$-chain. Studying the resulting quantum codes at intermediate levels $0<i<k$ seems like an interesting direction.  

    \item {\bf Sparsified Grassmannian.} We have constructed a collection of affine lines with the property that each point has exactly three lines passing through it. This is a very sparse collection of lines, that never-the-less expands quite well when we walk from point to line to point. How does this generalize to higher dimensional complexes? The question of finding sparse Grassmannians has been studied in \cite{MR-LDT, DiksteinDFH2018, KaufmanT23, Gol23} and could provide new PCP gadgets with properties similar to \cite{KMS}.
    
    \item {\bf Other coset complexes.} We have considered a variant of the coset complexes of Kaufman and Oppenheim, by choosing subgroups based on a sub-ring of the ring chosen in \cite{KaufmanO181}. Many high dimensional expanders are based on matrix groups \cite{LubotzkySV2005a, LubotzkySV2005b, KaufmanO181, ODonnellP2022}. Thus similar embddings of the complexes into $\F^n$ could potentially be useful for coming up with new codes, whose properties can be studied through the high dimensional expansion framework.

    \item {\bf Generalized sumcheck.} Many interactive proof systems reduce proving certain relations to checking that a multi-variate polynomial $p(x_1,\dots,x_m)$ over $\F^m$ sums up to zero over some subset $\mathbb{H}^m \subseteq \F^m$. The sumcheck protocol is designed to check this relation by recursively restricting variables of $p$ and checking the condition over smaller sets of random variables. 
    Can one design sumcheck-like protocols for codes other than Reed-Muller? For example, in the case of our code, we would need to find subsets of points that play the role of $H^m$, so that one can test whether some partial sums of a codeword sum up to zero. 
    
    Perhaps a first step would be to find a sumcheck protocol for our local codes $C_{d_x,d_y}$. \inote{I rewrote the above. I hope I am not missing any point you tried to make here}

\end{itemize}

\subsection{Organization}

\rnote{rewrote this section to make it a bit more readable (like have words and not just references), but could probably still be improved if we care at all}
We define our coset complex in Section~\ref{sec:complex} and prove its properties (\pref{thm:mainX}) in Sections \ref{subsec:X}, \ref{subsec:links}, and \ref{subsec:embed}. We then define our code on this coset complex in Section~\ref{subsec:globalcode}. Its properties as detailed in Theorem~\ref{thm:mainCode} are proved in the following sections:
\begin{itemize}
    \item In Section~\ref{subsec:mult} we show that our code has the multiplication property, and in Section~\ref{subsec:distance}, we show that our code has constant distance in appropriate parameter regimes (items~\ref{item:thm1.3-distance} and~\ref{item:thm1.3-multiplication} of Theorem~\ref{thm:mainCode}). We also show that our code has the transivity property (item~\ref{item:thm1.3-transitivity}).
    \item We discuss the rate (item~\ref{item:thm1.3-rate}) of the global and local codes in Sections~\ref{sec:rate-global} and~\ref{sec:rate-local} respectively.
    \item In Sections~\ref{sec:locLTC} and~\ref{sec:globLTC}, we prove that the local and global codes are agreement testable.
\end{itemize}
Finally, in Section \ref{sec:higher} we describe the generalization to higher dimensions.

\section{Preliminaries}\label{sec:prelim}

\subsection{Expander Graphs}
A $d$-regular graph $G$ is said to be a $\gamma$-one-sided expander if it has eigenvalues $d=\lambda_1 \ge \lambda_2 \ge ... \ge \lambda_n \geq -d$ which satisfy $\lambda_i \leq \gamma\cdot d$ for all $i>1$.

\begin{lemma}[Alon-Chung]\label{lemma:AC}
Let $G=(V,E)$ be a $d$-regular $\gamma$ one-sided expander. Let $T\subseteq V$ be such that the graph induced on $T$ has average degree at least $\delta d$. Then $|T| \ge (\delta - \gamma )\cdot |V|$.
\end{lemma}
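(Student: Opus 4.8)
The plan is to use the expander mixing lemma (or equivalently, a direct spectral/variance argument) applied to the indicator vector $\one_T$ of the set $T$. Let $N = |V|$ and write $\one_T \in \bbR^V$ for the $0/1$ indicator of $T$, so $\norm{\one_T}^2 = |T|$. Decompose $\one_T = \frac{|T|}{N}\one + g$ where $\one$ is the all-ones vector and $g \perp \one$; then $\norm{g}^2 = |T| - |T|^2/N$. Let $A$ be the (un-normalized) adjacency matrix of $G$, so that $\one_T^\top A \one_T = \sum_{u,v\in T} A_{uv} = 2e(T)$ counts (twice) the number of edges inside $T$, where $e(T)$ is the number of edges induced on $T$. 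The hypothesis that the induced graph on $T$ has average degree at least $\delta d$ says exactly that $2e(T) \ge \delta d \,|T|$, i.e. $\one_T^\top A \one_T \ge \delta d\,|T|$.

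Next I would bound $\one_T^\top A \one_T$ from above using expansion. Writing $\one_T = \frac{|T|}{N}\one + g$ and using $A\one = d\one$, we get
\[
\one_T^\top A\one_T = \frac{|T|^2}{N^2}\,\one^\top A\one + g^\top A g = \frac{d\,|T|^2}{N} + g^\top A g.
\]
Since $g \perp \one$ and every eigenvalue of $A$ other than $\lambda_1 = d$ is at most $\gamma d$ in the relevant (one-sided) sense, we have $g^\top A g \le \gamma d \,\norm{g}^2 = \gamma d\left(|T| - |T|^2/N\right) \le \gamma d\,|T|$. Combining the lower and upper bounds:
\[
\delta d\,|T| \;\le\; \frac{d\,|T|^2}{N} + \gamma d\,|T|.
\]
Dividing through by $d\,|T|$ (assuming $|T|>0$; the case $T=\emptyset$ is trivial) gives $\delta \le |T|/N + \gamma$, i.e. $|T| \ge (\delta - \gamma)\,N$, which is the claim.

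The argument is essentially routine, so there is no serious obstacle; the one point requiring mild care is the direction of the spectral inequality $g^\top A g \le \gamma d \norm g^2$. This uses only the upper eigenvalue bound $\lambda_i \le \gamma d$ for $i > 1$ (a one-sided expander), which is exactly the hypothesis in the definition quoted just above the lemma, so the negative eigenvalues cause no problem since they only make $g^\top Ag$ smaller. I would also remark that one may replace $\norm g^2 = |T|-|T|^2/N$ by the weaker $\norm g^2 \le |T|$ to keep the bound clean, at no cost to the stated conclusion; keeping the sharper form would yield the slightly stronger bound $|T| \ge \frac{\delta-\gamma}{1-\gamma}N$, but the stated version suffices.
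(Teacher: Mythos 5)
Your proof is correct and follows essentially the same route as the paper: decompose $\one_T$ into its component along $\one$ plus an orthogonal part, bound $\one_T^\top A\one_T$ below by $\delta d|T|$ via the average-degree hypothesis and above by $d|T|^2/|V| + \gamma d|T|$ via one-sided expansion, then divide by $d|T|$. No gaps.
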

\begin{proof}
Let $A$ be the normalized adjacency matrix of $G$ and let $f$ be the indicator function of $T$. Using the spectral decomposition $f = \frac {|T|}{|V|} {\mathbf 1} + f^\perp$ we get
\[\delta d |T| \leq 2E(T) = f^\top A f \leq \card T^2 d/|V| + \gamma d \card T
\] 
where $E(T)$ denotes the number of edges in the induced graph on $T$.
Dividing both sides by $d|T|$ and rearranging gives the lemma.
\end{proof}

\subsection{High Dimensional Expanders}
A pure \(k\)-dimensional simplicial complex \(X\) is a set system (or hypergraph) consisting of a set of vertices $X(0)$ and an arbitrary collection of subsets of size \(k + 1\) together with all their subsets. The sets of size \(i+1\) in \(X\) are denoted by \(X(i)\). We will sometimes omit set brackets and write for example \(uvw\in X(2)\) instead of \(\set{u,v,w}\in X(2)\). As convention \(X(-1) = \set{\emptyset}\). Unless it is otherwise stated, we always assume that \(X\) is finite.

Let \(i < k\) and \(s \in X(i)\). It is standard to define the link of \(s\) to be a \(k-i-1\)-dimensional simplicial complex defined by
\(X_s = \sett{t \setminus s}{t \in X, t \supseteq s}\). We also define the less-standard but useful notation 
\begin{definition}[Star]
    For a $k$-dimensional complex $X$ and a face $s\in X(i)$ for some $i<k$, the star of $s$ is the $k$-dimensional complex containing all faces that contain $s$.
\[\T s(j) = \sett{t \in X(j)}{t \supseteq s}.\]
\end{definition}
For a face $s\in X(i)$, there is a natural bijection $\T s(j) \to X_s(j-i-1)$ mapping $t\in \T s$ to $t\setminus s \in X_s$.
\begin{definition}[High dimensional local spectral expander]
    Let \(X\) be a \(k\)-dimensional simplicial complex. Let \(\lambda \geq 0\). We say that \(X\) is a \(\lambda\)-one sided local spectral expander if for every \(s \in X^{\leq k-2}\), the graph \((X_s(0),X_s(1))\) is a \(\lambda\)-one sided spectral expansion.
\end{definition}

\begin{definition}[Coset complex]\label{def:cosetcomplex}
A $k$-dimensional {\em coset complex} is given by a group $G$ and subgroups $K_1,\ldots,K_{k+1}$. The vertices are all cosets of $K_i$, and the $i$-faces are all $i+1$-tuples of cosets that have a non-empty intersection. The complex is denoted 
$X[G;K_1,\ldots,K_{k+1}]$. 
\end{definition}
A beautiful construction of a constant-degree coset complex that is a high dimensional expander was given in \cite{KaufmanO181}, see also \cite{ Harsha_Saptharishi_2019,ODonnellP2022}.

It is not hard to see that links in a coset complex are themselves coset complexes.

\subsection{Random Walks on High Dimensional Expanders}
Let $X$ be a regular two-dimensional complex, so that every vertex touches the same number edges, and every edge touches the same number of triangles. This assumption is not needed, but it is satisfied by our coset complexes and it slightly simplifies the definitions below.

Let $V=X(0), E = X(1), \T{}=X(2)$, and also denote $X(-1) = \set \phi$. 

The down operator $D^i:\R^{X(i)}\to \R^{X(i-1)}$ (for $0\leq i\leq 2$) and the up operator $U^i:\R^{X(i)}\to\R^{X(i+1)}$ (for $-1\leq i\leq 1$) are defined by 
\begin{align*}
\forall a\in X(i-1),\quad D^i f (a) &= \E_{b\supset a} [f(b)],\\
\forall b\in X(i+1),\quad  U^i g (b) &= \E_{a\subset b} [g(a)].
\end{align*} 
We will often drop the superscript $i$ in $U^i$ and $D^i$ when it is clear what $i$ is.

Let $e\lowerrw e'$ denote the lower random walk on the edges (choose a random $e$, then $v\in e$, then $e'\ni v$). It is easy to see that the Markov operator corresponding to this walk is just $UD$.
Let $e\upperrw e'$ denote the non-lazy upper random walk on the edges (choose a random $e$, then $t\supset e$, then $e'\in t$ such that $e'\neq e$). It is not hard to see that the Markov operator corresponding to this walk, denoted $M^+$, satisfies $DU = \frac 2 3 M^+ + \frac 1 3 I$. 

We define inner products on the spaces $\R^V,\R^E$ by expectation according to the uniform distribution (here we are using the regularity assumption). 

For example, for $f,g:V\to \R$
\[\iprod{f,g} = \E_{x\in V} [f(x)g(x)],\qquad \norm f = (\iprod{f,f})^{1/2}.\]
The following is by now well known, see \cite{DinurK2017,KaufmanO-RW20}, and we include a proof for completeness.

\begin{lemma}\label{lemma:updown}
Let $X$ be a one-sided $\gamma$-link expander. Every $g\in \R^E$ satisfies
\[\iprod{g,M^+g} \leq \iprod{g,(UD + \gamma I) g} .
\]
\end{lemma}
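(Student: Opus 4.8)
The plan is to relate the three operators $M^+$, $DU$, and $UD$ using the two identities already recorded in the text, namely $DU = \frac23 M^+ + \frac13 I$ (so $M^+ = \frac32 DU - \frac12 I$) and the fact that $UD$ is the lower walk operator, and then to control the ``defect'' $DU - UD$ using the link expansion. Concretely, for $g\in\R^E$ I would write
\[
\iprod{g,M^+g} = \tfrac32\iprod{g,DUg} - \tfrac12\iprod{g,g} = \iprod{g,UDg} + \tfrac32\iprod{g,(DU-UD)g} - \tfrac12\norm g^2 + \bigl(\iprod{g,UDg}-\iprod{g,UDg}\bigr),
\]
so it suffices to show $\tfrac32\iprod{g,(DU-UD)g} - \tfrac12\norm g^2 \le \gamma\norm g^2$, or more cleanly that $\iprod{g,(DU-UD)g}$ is bounded in terms of $\gamma\norm g^2$ plus a term that cancels the $-\frac12\norm g^2$. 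In fact the standard route is cleaner: decompose $g$ through the links.

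The key step is the \emph{localization to links}. For each vertex $v$, let $g_v$ denote the restriction of $g$ to the edges $\T v(1)$ containing $v$, viewed as a vector on $X_v(0)$, and let $A_v$ be the normalized adjacency operator of the link graph $(X_v(0),X_v(1))$. One checks by unwinding the definitions of $U$, $D$, and the triangle-walk $M^+$ that
\[
\iprod{g,M^+g} = \E_{v}\,\iprod{g_v, A_v g_v}_{X_v}, \qquad \iprod{g,UDg} = \E_{v}\,\iprod{g_v,\,J_v g_v}_{X_v} = \E_v \bigl(\E_{e\ni v} g(e)\bigr)^2,
\]
where $J_v$ is the all-averaging operator on the link of $v$ (projection onto constants), and $\norm g^2 = \E_v \iprod{g_v,g_v}_{X_v}$ by double counting. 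Then $\gamma$-one-sided link expansion gives, for each $v$, the spectral bound $\iprod{g_v, A_v g_v} \le \iprod{g_v, J_v g_v} + \gamma\,\iprod{g_v, (I - J_v) g_v} \le \iprod{g_v, J_v g_v} + \gamma\,\iprod{g_v,g_v}$, using that the component of $g_v$ orthogonal to constants has Rayleigh quotient at most $\gamma$ in absolute value and $\iprod{g_v,(I-J_v)g_v}\le\iprod{g_v,g_v}$. Taking expectation over $v$ and plugging in the two identities above yields exactly $\iprod{g,M^+g}\le\iprod{g,(UD+\gamma I)g}$.

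The main obstacle — really the only point requiring care — is verifying the three local-to-global identities, i.e. that averaging the link quadratic forms $\iprod{g_v,A_vg_v}$ over a uniformly random vertex reproduces $\iprod{g,M^+g}$, that the ``constant part'' in the link reproduces $\iprod{g,UDg}$, and that $\norm g^2=\E_v\iprod{g_v,g_v}$. These all follow from the regularity assumption on $X$ (every edge meets the same number of triangles, every vertex the same number of edges) by a careful bookkeeping of the walk distributions: a step of $M^+$ from $e$ to $e'\neq e$ inside a common triangle $t\supset e$ is, conditioned on the shared vertex $v\in e\cap e'$, precisely a step of the link random walk at $v$; and picking a uniform edge and a uniform vertex on it is the same as picking a uniform vertex and a uniform edge through it. Once these are in place the spectral estimate per link is immediate and the lemma follows by linearity of expectation.
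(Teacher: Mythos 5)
Your proposal is correct and takes essentially the same route as the paper: the paper's proof is exactly the link localization you describe, writing $\iprod{g,M^+g}=\E_a\E_{bc\in X_a(1)}[g(ab)g(ac)]$ and applying the one-sided spectral bound in each link $X_a$ to obtain $\iprod{g,UDg}+\gamma\norm g^2$ after averaging over $a$. (Your aborted opening computation via $M^+=\tfrac32 DU-\tfrac12 I$ drops a factor of $\tfrac32$ on the $\iprod{g,UDg}$ term, but you discard that route anyway, so it does not affect the argument.)
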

In particular, for a set $R\subset X(1)$ such that $\beta= \Pr_{e\upperrw e' } [ e \in R | e'\in R ]$ it must be that
\[ \Pr_{ e\lowerrw e' } [ e \in R | e'\in R ] \ge\beta - \gamma\]
by applying the lemma on $g = \one_R$ and observing that $\Pr_{t\in X(2), e\neq e'\subset t}[e,e'\in R]=\beta\Pr[R]$. 
\begin{proof}
\begin{align*}\label{eq:verta}
\iprod{g,M^+g} &= \E_{abc \in \T{}} g(ab)g(ac) \\
&=\E_a\E_{bc\in X_a(1)}[g(ab)g(ac)] \\
&\le \E_a\E_{b,c\in X_a(0)}[g(ab)g(ac)]+\gamma\E_a\E_{b\in X_a(0)}[g(ab)^2]\\
&= \iprod{g,UD g}  + \gamma\norm g^2
\end{align*}  
where the inequality follows since $X_a$ is a $\gamma$-one-sided expander for each $a$, and this means that for any $f:X_a(0)\to \R$ (and in particular setting $f(b) := g(ab)$),
\[ \E_{bc\in X_a(1)} f(b)f(c) \leq \E_{b,c\in X_a(0)}f(b)f(c) + \gamma\cdot\E_{b\in X_a(0)} f(b)^2.
\]
Note that the distribution of choosing $a$ and then two edges $ab,ac$ independently, is equivalent to choosing a random edge $e'$, then a random vertex $a\in e'$ and then another edge $e\ni a$. 
\end{proof}
The following swap walk $S_{0,1}$ that starts from a random edge in $X(1)$ and ends at some vertex in $X(0)$ will be used in the analysis of local testability of the global code. Starting with a random edge $e$, choose a random triangle $t \ni e$ and output $v = t \setminus e$.
\begin{lemma}\label{lemma:sM}
Let $X$ be a two-dimensional $\gamma$-link-expander. The random walk $\tilde{M} = S_{0,1}D$ on the edges has second largest eigenvalue bounded by $3\gamma$.
\end{lemma}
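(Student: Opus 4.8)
The plan is to show that $\tilde M$ has the same nonzero spectrum as the simple random walk on the $1$-skeleton of $X$, and then to bound the latter via the trickle-down (local-to-global) principle. Recall $\tilde M = S_{0,1}D$, where $D = D^1\colon\R^{X(1)}\to\R^{X(0)}$ and $S_{0,1}\colon\R^{X(0)}\to\R^{X(1)}$ is the operator $(S_{0,1}g)(e) = \E_{t\supset e}[g(t\setminus e)]$. For any matrices $P,Q$ the products $PQ$ and $QP$ have the same nonzero eigenvalues with the same multiplicities; since $\tilde M = S_{0,1}D$ and $N := D\,S_{0,1}\colon\R^{X(0)}\to\R^{X(0)}$ are both stochastic (so both have Perron eigenvalue $1$), the eigenvalues of $\tilde M$ are exactly those of $N$ together with (possibly) some extra zeros, and hence $\lambda_2(\tilde M)\le\max(\lambda_2(N),0)$. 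It therefore suffices to bound $\lambda_2(N)$.

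Next I would identify $N$ explicitly. Unwinding the definitions, $(Nf)(v) = \E_{b\ni v}\,\E_{t\supset b}[f(t\setminus b)]$: from a vertex $v$, the walk $N$ picks a uniform incident edge $b = \{v,w\}$, then a uniform triangle $t = \{v,w,u\}\supset b$, and steps to the opposite vertex $u$. Now $u$ is always a neighbour of $v$, and for a fixed neighbour $u$ of $v$, the number of ``intermediate'' vertices $w$ with $\{v,w,u\}\in X(2)$ is exactly the number of triangles containing the edge $\{v,u\}$, which is the common value $r$ by the edge-regularity of $X$. Hence $\Pr_N[v\to u] = r\cdot\frac{1}{\deg(v)}\cdot\frac{1}{r} = \frac{1}{\deg(v)}$ for every neighbour $u$, so $N$ is precisely the simple random walk $A$ on the $1$-skeleton of $X$ -- which is a regular graph by the vertex-regularity of $X$, so $A$ is self-adjoint for the uniform inner product and has real spectrum. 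Thus $\lambda_2(\tilde M)\le\max(\lambda_2(A),0)$.

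Finally, since $X$ is a $\gamma$-link expander and (as always in our setting) connected, the $1$-skeleton satisfies $\lambda_2(A)\le\frac{\gamma}{1-\gamma}$ when $\gamma<\tfrac12$. This is the trickle-down theorem and may simply be cited; alternatively it follows from the same Garland-type computation as in the proof of Lemma~\ref{lemma:updown}: letting $g$ be the unit $\lambda_2(A)$-eigenvector and writing, inside each $r$-regular link $X_v$, $g|_{X_v} = (Ag)(v)\cdot\one + g_v^\perp$, the $\gamma$-expansion of $X_v$ gives $\iprod{g|_{X_v},A_{X_v}\,g|_{X_v}}\le(1-\gamma)\,(Ag)(v)^2 + \gamma\,\E_{w\sim v}[g(w)^2]$; averaging over $v$ (using that the uniform measure is stationary for $A$, and that $\iprod{g,Ag} = \E_v\iprod{g|_{X_v},A_{X_v}\,g|_{X_v}}$) yields $\lambda_2(A)\le(1-\gamma)\lambda_2(A)^2 + \gamma$, and since $\lambda_2(A)<1$ this forces $\lambda_2(A)\le\frac{\gamma}{1-\gamma}\,(<2\gamma\le 3\gamma)$. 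For $\gamma\ge\tfrac12$ the claim is trivial, since then $3\gamma\ge\tfrac32 > 1\ge\lambda_2(\tilde M)$. Either way $\lambda_2(\tilde M)\le 3\gamma$.

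The one non-routine step is the middle one: recognizing that the composite walk ``random incident edge, then hop to the opposite vertex of a random triangle on it'' is literally a single step of the simple random walk on the $1$-skeleton. This is exactly where the regularity hypotheses on $X$ are used essentially -- in particular that every edge lies in the same number $r$ of triangles, which also makes every link $r$-regular and makes the $1$-skeleton regular. Given this identity, the reduction via $PQ$ versus $QP$ and the concluding trickle-down bound are both standard.
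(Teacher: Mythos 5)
Your reduction is appealing and the key identification is correct: by edge-regularity, the composite walk $DS_{0,1}$ on vertices (random incident edge, then the opposite vertex of a random triangle on it) is exactly the non-lazy simple random walk $A$ on the $1$-skeleton, and the $PQ$-versus-$QP$ identity then shows that the \emph{spectrum} of $\tilde M=S_{0,1}D$ is that of $A$ padded with zeros, so every eigenvalue of $\tilde M$ other than the top one is at most $\lambda_2(A)\le\gamma$ (note that the definition of a $\gamma$-link-expander already covers $s=\emptyset$, so you do not even need trickle-down here). This is genuinely different from the paper's route, which proves the operator identity $M^+UD=\frac12 S_{0,1}D+\frac12 UD$, solves for $S_{0,1}D$, and bounds the resulting quadratic form via Lemma~\ref{lemma:updown} together with the expansion of the $1$-skeleton.

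However, there is a real gap: the two routes bound different quantities, and yours bounds the wrong one for how the lemma is used. The operator $\tilde M=S_{0,1}D$ is not self-adjoint on $\R^{X(1)}$: up to normalization, $\tilde M(e,e')$ counts the vertices $w\in e'$ with $e\cup\set w\in X(2)$, and this is not symmetric in $e,e'$ (in a sparse complex, most edges $e'$ through the apex of a triangle over $e$ admit no vertex of $e$ completing $e'$ to a triangle). For a non-normal operator, a bound on the eigenvalues does not bound the quadratic form; e.g.\ $\left(\begin{smallmatrix}0&1\\0&0\end{smallmatrix}\right)$ has spectrum $\set 0$ yet $\iprod{f,Mf}$ can equal $\frac12\snorm f$. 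What the paper's proof actually establishes, and what is invoked downstream (Lemma~\ref{lemma:AC} applied to ``the graph corresponding to $\sM$'' in the proof of Theorem~\ref{thm:loc2glob}), is the numerical-range bound $\iprod{f,\tilde Mf}\le3\gamma\snorm f$ for all $f\perp\one$; your argument does not yield this. Nor does your identification rescue it via Cauchy--Schwarz, since $\norm{Df}$ and $\norm{S_{0,1}^*f}$ can each be of order $\norm f$ for mean-zero $f$: the cancellation between these two factors is exactly what the paper's decomposition of $M^+UD$ captures. So you have verified the literal wording of the lemma but not the statement the paper needs.
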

\begin{proof}
We claim that 
\[  M^+ UD = \frac 1 2 S_{0,1}D + \frac 1 2 UD.\]
Let us analyze the random walk corresponding to $M^+ UD$. 
We start from an edge $e$, go (via $M^+$) to a random edge $e_1$ such that $e\cup e_1\in X(2)$. We then 
 go from $e_1$, via $UD$, to an edge $e'$. In this step two things can happen, each with probability $1/2$. Either $e'$ doesn't contain $\set v = e_1\cap e$, in which case we end up with the distribution of $S_{0,1}D$; or $e'\ni v$, in which case we end up with $UD$. This proves the required equality.
%
%
We can thus write $S_{0,1}D = 2M^+UD-UD$. Furthermore, by \cref{lemma:updown} we have that $\norm{2M^+UD-UD}  \leq \norm{2(UDUD+\gamma UD)-UD}$. So for any $f\in\R^E$ with $\E[f]=0$ we get,
\[ \iprod{f,S_{0,1}D f} \leq \iprod{ f,(2UDUD -  UD)f} + 2\gamma\norm f^2 = \iprod{ f,U(2DU - I)Df} + 2\gamma\norm f^2  \leq 3\gamma \norm f^2 \]
where in the last inequality we have used the fact that $2 DU-I$ is nothing other than the random walk from vertex to vertex in the graph $(X(0),X(1))$, so by assumption on $X$, $\iprod{(2DU - I)g,g}\leq \gamma\norm g^2$ for every function $g\in \R^V$ such that $\E[g]=0$.
\end{proof}

\subsection{HDX Codes}\label{sec:HDXcode}
An HDX code is defined by two objects 
\begin{itemize}
    \item A $k$-dimensional simplicial complex $X$, and a dimension $0<i\leq k$.
    \item A collection $\set{C_s}$ of local codes $C_s \subseteq \bits^{\T s(k)}$, one per face $s\in X(k-1)$.
\end{itemize} 
The HDX code at dimension $k$ is defined as 
\begin{equation}\label{eq:Tanner}
    \calC^k[X,\set{C_s}] = \sett{ f\in \bits^{X(k)} }{ \forall s\in X(k-1),\; ( \;f(s\cup v) : v\in X_s(0)\;)\in C_s}
\end{equation}
When $X$ is one dimensional these are the expander codes of \cite{SipserS96}.

\paragraph{Example 1: expander code}
An expander code is given by an expander graph $X=(V,E)$ and a local code $C_v$ for every vertex $v\in V$ such that $C_v\subseteq \bits^{\set{e\ni v}}$. A word $f\in \bits^E$ is in the code if for every vertex $v$, the bits on the edges touching $v$ form a codeword in a small code $C_v$. Formally, if $(f(e): e\ni v)\in C_v$. Often the graph $(V,E)$ is $d$-regular and the local codes are taken to all be copies of some $C_0\subset\bits^d$. 

This is a special case of \eqref{eq:Tanner} for dimension $k=1$, where $X(0)=V$ and $X(1)=E$ and $C_v\subseteq \bits^{X_v(0)}$ via the identification $X_v(0)\leftrightarrow \set{e\ni v}$.

\paragraph{Example 2: cocycle codes}
Fix some simplicial complex $X$, and some dimension $0<k<dim(X)$. Suppose for every $s\in X(k-1)$, the local code $C_s$ is taken to be the parity code consisting of all even length words, namely the local code at $s\in X(k-1)$ is,
\[ Z_s = \sett{ f\in \bits^{X_s(0)}}{\sum_{v \in X_s(0)}f(s\cup v)=0}.\]
Then the $k$ dimensional HDX code $C[X,\set{Z_s}]$ coincides with the space of $k$-cocycles of $X$. For example, when $k=1$, the code is spanned by all closed walks.
\subsection{Local-Testability and Agreement-Testability}\label{sec:agr}
For a function $f:A\to B$ and a subset $A'\subset A$ we denote by $f|_{A'}:A'\to B$ the restriction of $f$ to $A'$.

Let $X$ be a $k$-dimensional simplicial complex and assume that for every $v\in X(0)$ we are given a local code $C_v\subseteq \F^{\T v}$. Let $C=\calC^k[X,\set{C_v}]\subseteq \F^{X(k)}$ be an HDX code. 

\begin{definition}[Locally testable code]\label{def:LTC-classic}
    Let $\rho:\Rnn\to \Rnn$ be a strictly increasing function with $\rho(0)=0$, and let $\epsilon>0$. A code $C\subseteq \bits^n$ is a $(Q,\epsilon,\rho(\cdot))$-locally testable code if there is a randomized tester that, upon receiving a given word $f\in\bits^{n}$, queries $f$ in at most $Q$ locations and then accepts or rejects, such that
    if $p=\Pr[\hbox{Tester rejects }f]\leq \epsilon$ then $\dist(f,C) \leq \rho( p )$.
    \end{definition}
We specialize this definition to the case of HDX codes by considering a ``cannonical'' local tester that selects a random vertex $v$ and checks if the restriction of the codeword to the star of $v$ is in the local code $C_v$. Namely, if $f|_{\T v}\in C_v$. The number of queries made by this tester is equal to the maximal number of $k$-faces containing a vertex $v$.   
\begin{definition}[Local testability of HDX Codes]\label{def:LTC}
    Let $\rho:\Rnn\to \Rnn$ be a strictly increasing function with $\rho(0)=0$, and let $\epsilon>0$.
    The HDX code $C=\calC^k[X,\set{C_v}]$ is 
    $(\epsilon,\rho(\cdot))$-locally testable if for any $f\in\F^{X(k)}$, denoting $p=\Pr_{v}[ f|_{\T v}\not\in C_v]$, the following holds. If $p \leq \epsilon$ then
    \[\dist(f,C) \leq \rho(p).\]
\end{definition}

\begin{remark}
    If an HDX code is locally testable as per Definition \pref{def:LTC}, then the number of queries made by the tester is at most $\max_v|X_{+v}(k)|$ which is the maximal number of $k$-faces containing a vertex $v$. In a bounded-degree complex $X$ this is a constant number.
    Moreover, if each $C_v$ is an LDPC, we can reduce the query complexity further (without changing the code), as in the following Claim.
\end{remark}

\begin{claim} \label{claim:loctest-to-loctest}
    Suppose each local code $C_v$ is defined by at most $m_0$ parity checks, each looking at most $q_0$ bits. If an HDX code $\calC^k[X,\set{C_v}]$ is $(\epsilon,\rho(\cdot))$-locally testable per Definition \ref{def:LTC}, then it is $(q_0,\frac\epsilon{m_0}, \rho'(\cdot))$-locally testable as per Definition \ref{def:LTC-classic}, where $\rho'(x):=\rho(m_0 x)$.
\end{claim}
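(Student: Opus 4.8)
The plan is to convert the local HDX tester (which, for a random vertex $v$, checks the full membership $f|_{\T v}\in C_v$) into a tester that reads only one parity check of one local code $C_v$. The new tester picks a random vertex $v$, then picks one of the $\le m_0$ parity checks defining $C_v$ uniformly at random, and reads the $\le q_0$ bits of $f$ it involves, rejecting iff that single check is violated. This obviously makes at most $q_0$ queries, so the query bound in Definition~\ref{def:LTC-classic} is met with $Q=q_0$; the content is in relating the rejection probability of the new tester to that of the old one.

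\textbf{Key steps.} First I would relate the two rejection probabilities. Fix a word $f\in\F^{X(k)}$ and let $p=\Pr_v[f|_{\T v}\notin C_v]$ be the rejection probability of the canonical tester. For each vertex $v$ with $f|_{\T v}\notin C_v$, at least one of the $\le m_0$ parity checks of $C_v$ is violated, so conditioned on having picked such a $v$, the new tester rejects with probability at least $1/m_0$; conditioned on a ``good'' $v$ it never rejects. Hence if $p'$ denotes the rejection probability of the new tester, $p' \ge p/m_0$, i.e.\ $p \le m_0 p'$. Second, I invoke the hypothesis: by Definition~\ref{def:LTC}, if $p\le\epsilon$ then $\dist(f,C)\le\rho(p)$. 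Third, I translate this to the new tester. Suppose $p' \le \epsilon/m_0$. Then $p \le m_0 p' \le \epsilon$, so the hypothesis of Definition~\ref{def:LTC} applies, giving $\dist(f,C)\le\rho(p)\le\rho(m_0 p')=\rho'(p')$, where I used that $\rho$ is increasing (stated in the definition) and the definition $\rho'(x):=\rho(m_0 x)$. This is exactly the conclusion of Definition~\ref{def:LTC-classic} with parameters $(q_0,\epsilon/m_0,\rho'(\cdot))$, provided one also checks $\rho'$ is still a strictly increasing function with $\rho'(0)=0$, which is immediate since $x\mapsto m_0 x$ is.

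\textbf{Main obstacle.} There is no serious obstacle here — this is a bookkeeping argument. The only point requiring a little care is the direction of the inequality $p\le m_0 p'$ (it is tempting to write it backwards): one must argue from ``$v$ bad $\Rightarrow$ some check fails'' that the probability of rejection drops by at most a factor $m_0$, not that it is exactly $p/m_0$. A secondary subtlety is making sure the sampling in the new tester is well-defined when different $C_v$ have different numbers of checks; padding each $C_v$'s list of checks to exactly $m_0$ checks (repeating a check if necessary, which does not change the code) makes the ``pick a uniformly random check'' step clean and preserves the bound $p'\ge p/m_0$. Finally one should note, as the remark preceding the claim already does, that in a bounded-degree complex all these quantities $m_0,q_0$ are constants, so the resulting tester has constant query complexity.
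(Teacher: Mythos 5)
Your proof is correct and follows essentially the same route as the paper's: define the tester that samples a random vertex and a uniformly random one of its (padded-to-$m_0$) parity checks, observe by a union/averaging bound that the fraction of vertices whose local view fails is at most $m_0$ times the tester's rejection probability, and then invoke Definition \ref{def:LTC}. Your write-up is in fact slightly more careful than the paper's about the direction of the inequality and the padding of check lists.
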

\begin{proof}
    For any vertex $v$,  $f|_{\T v}\in C_v$ iff none of the $m_0$ parity checks fail. By union bound, the fraction of vertices $v$ for which at least one of the $m_0$ parity checks fail is at most $m_0 \cdot \frac {\epsilon} {m_0} = \epsilon$. This means that  $\Pr_{v}[ f|_{\T v}\not\in C_v] \leq m_0p <\epsilon$, and by the local testability per Definition \ref{def:LTC} we deduce that $\dist(f,C) \leq \rho(m_0 p)$.
\end{proof}

\begin{definition}[Agreement testability of HDX Codes]\label{def:agrtest}
    Let $\rho:\Rnn\to \Rnn$ be a strictly increasing function, and let $\epsilon>0$. 
    The HDX code $\calC^k[X,\set{C_v}]$ is called $(\epsilon,\rho(\cdot))$-agreement testable if, for any given collection $\sett{z_v\in C_v}{v\in X(0)}$, if 
    \[\alpha := \Pr_{uv\in X(1)}[z_v(\T {uv}(k))\neq z_{u}(\T {uv}(k))] < \epsilon\] then there exists some $x\in C$ such that 
    \[ \Pr_{v\in X(0)}(z_v \neq x|_{\T v(k)}) \le \rho (\alpha) . \]
\end{definition}

\begin{claim}[Agreement testability implies local testability]\label{claim:agr}

    Let $X$ be a $k$-dimensional simplicial complex, and assume that every vertex is contained in the same number of $k$ faces. Let $C=\calC^k[X,\set{C_v}]$ be an HDX code. 
    If $C$ is $(\epsilon_0,\rho_0(\cdot))$-agreement testable then it is $(\epsilon_0/2,\rho_1(\cdot))$-locally testable, where $\rho_1(\xi):=\rho_0(2\xi)+\xi$.
\end{claim}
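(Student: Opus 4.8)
The plan is to take a word $f\in\F^{X(k)}$ that rarely fails the canonical local test, manufacture from it a candidate agreement ensemble $\set{z_v}$, apply the agreement testability hypothesis to get a global codeword $x\in C$, and then bound $\dist(f,C)$ by $\dist(f,x)$. First I would set $p := \Pr_v[f|_{\T v}\notin C_v]$ and assume $p\le \epsilon_0/2$. For each vertex $v$ with $f|_{\T v}\in C_v$ define $z_v := f|_{\T v}$; for the (at most $p$ fraction of) vertices where $f|_{\T v}\notin C_v$ define $z_v$ to be an arbitrary element of $C_v$. By construction $z_v\in C_v$ for every $v$, so the ensemble $\set{z_v}$ is eligible input to the agreement test.

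The next step is to bound the disagreement parameter $\alpha := \Pr_{uv\in X(1)}[z_u(\T{uv}(k))\ne z_v(\T{uv}(k))]$. Whenever both $u$ and $v$ are "good" vertices (i.e. $f|_{\T u}\in C_u$ and $f|_{\T v}\in C_v$), we have $z_u = f|_{\T u}$ and $z_v = f|_{\T v}$, which agree on $\T{uv}(k)\subseteq \T u(k)\cap \T v(k)$ because both equal $f$ there. Hence a disagreeing edge must be incident to a bad vertex. Using the regularity assumption (every vertex lies in the same number of $k$-faces, hence the marginal of a uniformly random edge onto its endpoints is the vertex-uniform distribution), a union bound over the two endpoints gives $\alpha \le 2p \le \epsilon_0 < \epsilon_0$. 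Thus the agreement testability hypothesis applies and yields $x\in C$ with $\Pr_{v}[z_v \ne x|_{\T v(k)}] \le \rho_0(\alpha) \le \rho_0(2p)$.

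Finally I would convert this vertex-level closeness of $x$ to $f$ into the coordinate distance $\dist(f,x)$ on $X(k)$. A $k$-face $\sigma$ on which $f(\sigma)\ne x(\sigma)$ forces, for every vertex $v\in\sigma$, either $v$ to be a bad vertex (contributing at most $p$ to the vertex-fraction) or $z_v\ne x|_{\T v(k)}$ (contributing at most $\rho_0(2p)$); averaging $\one[f(\sigma)\ne x(\sigma)]$ over $\sigma$ and then over $v\in\sigma$ and swapping the order of expectation (again using regularity, so that picking a random $k$-face and then a random vertex in it yields a uniform vertex) bounds $\dist(f,x)$ by $p + \rho_0(2p)$. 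Since $\dist(f,C)\le \dist(f,x)\le \rho_0(2p)+p = \rho_1(p)$, and $p\le\epsilon_0/2$, this is exactly the claimed $(\epsilon_0/2,\rho_1(\cdot))$-local testability with $\rho_1(\xi)=\rho_0(2\xi)+\xi$.

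The only mildly delicate point — and the step I would be most careful about — is the bookkeeping of the various probability distributions (vertex-uniform, edge-uniform, $k$-face-uniform, and the "pick a face then a vertex in it" distribution) and verifying they match up under the regularity hypothesis; the logical skeleton is otherwise a routine good-vertex/bad-vertex argument with two union bounds.
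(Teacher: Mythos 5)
Your proposal is correct and follows essentially the same route as the paper's proof: define $z_v=f|_{\T v}$ on good vertices (the paper uses $0\in C_v$ where you say ``arbitrary element''), bound $\alpha\le 2p$ by a union bound over bad endpoints, invoke agreement testability, and convert back to $\dist(f,C)\le p+\rho_0(2p)$ via the face-then-vertex distribution swap under the regularity assumption. No gaps.
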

\begin{proof}
    Suppose $f\in\F_2^{X(k)}$.   
    Assume that $\varepsilon= \Pr_{v}[ f|_{\T v}\not\in C_v]$. Let $V^*=\sett{v\in X(0)}{f|_{\T v}\in C_v}$. Set $z_v= 
        \begin{cases}
            f|_{\T v}  & v\in V^* \\
            0 & v\not\in V^*
          \end{cases}
        $. 
    Choose a random edge $uv\in X(1)$. The probability that either $u$ or $v$ are not in $V^*$ is at most $2\varepsilon$. In the remaining probability they surely agree, so the disagreement is at most $\alpha \leq 2\varepsilon$. 
    If $2\varepsilon \leq \epsilon_0$, then by the assumption on agreement testability, there must be some codeword $h\in C$ such that 
    $\Pr_v[z_v\neq h|_{\T v}] \leq \rho_0(\alpha) \le \rho_0(2\varepsilon)$. 
    The codeword $h$ disagrees with $f$ on some $\T v$ either when $v\not\in V^*$ or when $z_v\neq h|_{\T v}$. This event is upper bounded by $\varepsilon + \rho_0(2\varepsilon)=\rho_1(\varepsilon)$. By assumption every vertex is contained in the same number of $k$-faces. So choosing a random vertex and then a random $k$-face containing it, is the same as choosing a random $k$-face. Therefore,
    \[\dist(f,C)\leq \dist(f,h) = \Pr_{s\in X(k)}[f(s)\neq h(s)]\leq 
    \Pr_{v\in X(0)}[v\not\in V^*] + \Pr_{v\in X(0)}[z_v\neq h|_{\T v}]\leq 
    \rho_1(\varepsilon).
    \]
\end{proof}

Since there has been some discussion of this in the literature \cite{firstK2022good}, we spell out how any HDX code that is agreement testable automatically gives rise to the ``local-view'' code that is two-query testable. The idea is to move from a codeword $f\in C$ to the collection $\set{z_v}_{v\in X(0)}$ of local views where $z_v = f|_{\T v}$.
\begin{claim}[Agreement testability implies $2$-query LTCs]\label{claim:2q}    
Suppose $C$ is an HDX code. Let $\Sigma$ be a finite alphabet such that $\card\Sigma = \max_v \card{C_v}$, and fix an injection $\sigma_v:C_v\to\Sigma$ for each $v\in X(0)$. Define
\begin{equation}\label{eq:LC}
    LC = \sett{z:X(0)\to \Sigma}{ \exists f\in C,\hbox{ s.t. } z(v) = \sigma_v(f|_{\T v})\,\forall v\in X(0)}.
\end{equation}
If $C$ is agreement testable, then $LC$ is locally testable with two queries. If $C$ is $(\epsilon, \rho(\cdot))$-agreement testable, then $LC$ is $(2, \epsilon, \rho'(\cdot))$-locally testable per Definition~\ref{def:LTC-classic}, where $\rho'(p) = p + \rho(p)$.
\end{claim}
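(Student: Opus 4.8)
The plan is to set up a two-query tester for $LC$ that mimics the agreement test for $C$, and then translate the agreement-testability guarantee on $C$ into a local-testability guarantee on $LC$. First I would describe the tester: given a purported codeword $z:X(0)\to\Sigma$, pick a random edge $uv\in X(1)$, read $z(u)$ and $z(v)$, decode them (using the fixed injections $\sigma_u,\sigma_v$) to local views $z_u=\sigma_u^{-1}(z(u))\in C_u$ and $z_v=\sigma_v^{-1}(z(v))\in C_v$ — this is well-defined only if $z(u),z(v)$ lie in the images of $\sigma_u,\sigma_v$, and if either does not, the tester rejects — and finally accept iff $z_u$ and $z_v$ agree on the shared faces $\T{uv}(k)$. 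This is clearly a two-query test.

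Next I would analyze soundness. Let $z$ be given and let $p=\Pr[\text{tester rejects }z]$. If $p\geq\epsilon$ there is nothing to prove, so assume $p<\epsilon$. For each $v\in X(0)$ for which $z(v)$ is in the image of $\sigma_v$, set $z_v=\sigma_v^{-1}(z(v))\in C_v$; for the (possibly empty) set of bad vertices where $z(v)$ is not in the image, the tester rejects whenever such a vertex is sampled, so the fraction of bad vertices is at most $p<\epsilon$. Define a modified collection $\{z_v'\}$ by replacing $z_v$ at bad vertices with an arbitrary element of $C_v$. The agreement parameter $\alpha=\Pr_{uv}[z_u'(\T{uv}(k))\neq z_v'(\T{uv}(k))]$ is at most $p$: a disagreeing edge for $\{z_v'\}$ is either an edge touching a bad vertex (handled above as part of $p$) or an edge on which the original decoded views disagree (which causes the tester to reject). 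Hence $\alpha\leq p<\epsilon$, and agreement testability of $C$ produces $f\in C$ with $\Pr_v[z_v'\neq f|_{\T v}]\leq\rho(\alpha)\leq\rho(p)$.

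Finally I would convert this to a distance bound for $LC$. The claimed nearby codeword of $LC$ is $\hat z$ defined by $\hat z(v)=\sigma_v(f|_{\T v})$, which by definition lies in $LC$. Then $\hat z(v)\neq z(v)$ only if $v$ is bad or $z_v'\neq f|_{\T v}$, so $\dist(z,\hat z)=\Pr_v[\hat z(v)\neq z(v)]\leq \Pr_v[v\text{ bad}] + \Pr_v[z_v'\neq f|_{\T v}] \leq p+\rho(p)=\rho'(p)$, giving the $(2,\epsilon,\rho'(\cdot))$-local testability. The step that needs the most care is bookkeeping the bad vertices — making sure the rejection probability $p$ simultaneously upper-bounds the fraction of non-image vertices, the agreement defect $\alpha$, and contributes correctly to the final distance — but this is routine union-bounding rather than a genuine obstacle; the substantive content is entirely imported from the agreement-testability hypothesis on $C$.
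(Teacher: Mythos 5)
Your proposal is correct and follows essentially the same route as the paper's proof: the identical two-query edge test with rejection on failed inversion, a default local view at the non-image ("bad") vertices so that the rejection probability upper-bounds both the bad-vertex fraction and the agreement defect $\alpha$, an application of agreement testability to get $f\in C$, and the final union bound $\dist(z,LC)\le p+\rho(p)$. No gaps.
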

\begin{proof}
    We only give a sketch. Given $z\in LC$, the tester will select a random edge $uv$ and read $z(u),z(v)\in \Sigma$. It will interpret $z(u),z(v)$ as words in $C_u,C_v$ respectively. This is done by computing $z_v = \sigma_v^{-1}(z(v))$ and similarly $z_u = \sigma_u^{-1}(z(u))$. This inversion may fail since $\sigma_v$ is an injection but not necessarily a bijection, in which case the tester rejects. Otherwise, the tester will accept iff $z_u|_{\T {uv}}=z_v|_{\T {uv}}$. The analysis follows from the definition of agreement testability: define $\hat{z}_v = \sigma^{-1}_v(z(v))$ for each $v \in X(0)$, where if the inversion fails we define $\hat{z}_v = 0 \in C_v$, then the probability $p \le \epsilon$ the tester fails is at least $\Pr_{uv \in X(0)} \left[ \hat{z}_u|_{X_{+uv}} \not= \hat{z}_v|_{X_{+uv}} \right] =: \alpha$, which by Definition~\ref{def:agrtest} means that there is some $x \in C$ such that $\Pr_{v \in X(0)} \left[ \hat{z}_v \not= x|_{X_{+v}(k)} \right] \le \rho(\alpha) \le \rho(p)$. Define $y : v \rightarrow \sigma_v(x|_{X_{+v}}) \in LC$, so that $\Pr_{v \in X(0)}\left[ \sigma_v(\hat{z}_v) \not= y(v) \right] \le \rho(p)$. 
    
    Now note that 
    \begin{align*}
        \dist(z, LC) 
        &= \min_{lc \in LC}\Pr_{v \in X(0)} \left[ z(v) \not= lc(v) \right] \\
        &\le \Pr_{v \in X(0)} \left[ z(v) \not= y(v) \right] \\
        &\le \Pr_{v \in X(0)} \left[ z(v) \not\in \sigma_v(C_v) \right] + \Pr_{v \in X(0)} \left[ \sigma_v(\hat{z}_v) \not= y(v) \right] \\
        &\le p + \rho(p),
    \end{align*}
    where in the last line we used that $\Pr_{v \in X(0)} \left[ z(v) \not\in \sigma_v(C_v) \right] \le \Pr_{uv \in X(1)} \left[ z(u) \not\in \Sigma_u(C_v) \vee z(v) \not\in \Sigma_v(C_v) \right] \le p$. 
\end{proof}

\subsection{Sheaves and HDX codes}\label{subsec:sheaves}
Meshulam shows in \cite{meshulam2018graph} how to view the expander codes of \cite{SipserS96} as a twisted homology of a graph with certain local coefficients. He also describes a higher dimensional generalization of systems of local coefficients attached to higher dimensional simlicial complexes. First and Kaufman \cite{firstK2022good} focus on the cohomological (as opposed to homological) variant and give a framework for studying codes as sheaves. The HDX codes we have defined can be placed in this framework, as we briefly explain next.

An $\F$-sheaf $\FX$ over a simplicial complex $X$ is a collection of $\F$-vector spaces $F(x)$ for every $x\in X$, together with linear maps $\res s t : F(s)\to F(t)$ for every pair of faces $s\subset t$. These maps are called {\em restriction maps}, and are required to satisfy certain transitive consistency, see more details in \cite{firstK2022good}.

A $k$-dimensional HDX code naturally gives rise to a sheaf as follows. 
\begin{definition}
    Let $\F$ be a field, let $X$ be a $k$-dimensional simplicial complex, and let $\set{C_s\subseteq \F^{\T s}}_{s\in X(k-1)}$ be a collection of $\F$-linear local codes. Define a sheaf over $X$ with respect to $\set{C_s}$ to be
\begin{itemize}
    \item $F(t)=\F$ for every $t\in X(k)$,
    \item $F(s) = C_s \subseteq \F^{\T s(k) }$ for every $s \in X(k-1)$,
    \item $F(r) = \sett{ f\in \F^{\T r(k)} }{ \forall s\in \T r(k-1),\; f|_{\T s(k)}\in C_s}$ for every $r \in X(i)$ and every $i$. In other words, $F(r)$ is isomorphic to the HDX code defined over $X_r$ with a local code appropriately isomorphic to $C_s$ at a face $s\setminus r$,
    \item Since $\T s \subset \T r$ whenever $s\supset r$, we can define the restriction maps by actual restriction.
\end{itemize}
\end{definition}
The coboundary operator from vertices to edges is $\delta:\oplus_{v\in X(0)} F(v)\to \oplus_{e\in X(1)}F(e)$ is defined by 
$\delta f (e) = \sum_{v\subset e} \res v e f(v)$ (and the boundary operator is defined as the dual).  
For full definitions of the coboundary and boundary operators, see \cite{firstK2022good} (the coboundary operators are defined in Section 4 and the boundary operators in Section 7.4).
\begin{claim}
    Let $\F$ be a field, let $X$ be a $k$-dimensional simplicial complex, and let $\set{C_s\subseteq \F^{\T s}}_{s\in X(k-1)}$ be a collection of $\F$-linear local codes. Let $C_s^\perp = \sett{f\in \F^{\T s}}{f\perp C_s}$ be the code dual to $C_s$, and let $\FX^\perp$ be the sheaf over $X$ with respect to $\set{C_s^\perp}$.
    Then, letting $\partial_k:C^k(X,\FX^\perp)\to C^{k-1}(X,\FX^\perp)$ denote the $k$-th boundary operator, $Z^k=Ker \partial_k$ is the HDX code $\calC^k[X,\set{C_s}]$. \qed
\end{claim}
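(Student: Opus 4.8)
The plan is to unwind the definitions on both sides and check that the kernel of $\partial_k$ coincides, coordinate-by-coordinate, with the condition defining $\calC^k[X,\set{C_s}]$. The key structural fact is that in the sheaf $\FX^\perp$ we have $F(t) = \F$ for every top face $t \in X(k)$ (since the local codes live one level down, at $X(k-1)$), so the top cochain space $C^k(X, \FX^\perp) = \bigoplus_{t \in X(k)} F(t)$ is literally $\F^{X(k)}$. Thus an element $f \in C^k(X, \FX^\perp)$ is just a word $f \in \F^{X(k)}$, and the question is precisely which words lie in $\ker \partial_k$.

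First I would recall (or extract from \cite{firstK2022good}) the formula for the boundary operator $\partial_k : C^k(X,\FX^\perp) \to C^{k-1}(X,\FX^\perp)$. It is the adjoint of the coboundary $\delta^{k-1}: C^{k-1}(X,\FX^\perp) \to C^k(X,\FX^\perp)$, which on a face $t \in X(k)$ is $(\delta^{k-1} g)(t) = \sum_{s \subset t, s\in X(k-1)} \res{s}{t}^{\perp} g(s)$, where $\res{s}{t}^\perp : F^\perp(s) = C_s^\perp \to \F$ is the restriction map of the dual sheaf, which (by the definition of the sheaf via actual restriction of functions) sends $h \in C_s^\perp \subseteq \F^{\T s(k)}$ to its $t$-coordinate $h(t)$. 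Dualizing, $(\partial_k f)(s) \in F^\perp(s)^* \cong F(s) = C_s$ for $s \in X(k-1)$, and the pairing of $(\partial_k f)(s)$ with an arbitrary $h \in C_s^\perp$ equals $\sum_{t \supset s} f(t)\, h(t) = \iprod{f|_{\T s(k)}, h}$. Hence $(\partial_k f)(s)$, viewed inside $C_s$ under the identification $(C_s^\perp)^* \cong \F^{\T s(k)}/C_s^\perp{}^\perp$... more cleanly: $(\partial_k f)(s) = 0$ in $F(s) = C_s$ if and only if $f|_{\T s(k)}$ is orthogonal to all of $C_s^\perp$, i.e. $f|_{\T s(k)} \in (C_s^\perp)^\perp = C_s$.

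Putting this together: $f \in \ker \partial_k$ iff for every $s \in X(k-1)$ we have $f|_{\T s(k)} \in C_s$, which is exactly the defining condition of $\calC^k[X,\set{C_s}]$ in \eqref{eq:Tanner}. So $Z^k = \ker \partial_k = \calC^k[X,\set{C_s}]$, as claimed.

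The main obstacle is bookkeeping rather than mathematical depth: one must be careful about the sign/orientation conventions in the boundary operator (for $\F$ of characteristic $2$ these vanish, but the claim is stated over a general field $\F$), and about the identification of $(\partial_k f)(s)$ as an element of $F(s)=C_s$ rather than of $\F^{\T s(k)}$ — the point being that the natural map $\F^{\T s(k)} \to (C_s^\perp)^*$ has kernel exactly $(C_s^\perp)^\perp = C_s$, so "$(\partial_k f)(s)=0$" unpacks to "$f|_{\T s(k)} \perp C_s^\perp$" and not to "$f|_{\T s(k)}=0$". Once the dual-sheaf restriction maps are correctly identified with coordinate projections, the computation is immediate, which is presumably why the statement is given with a \qed and no written proof.
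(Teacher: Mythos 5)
Your proposal is correct and matches the paper's (unwritten) intent exactly: the claim is stated with a \qed precisely because, once one unwinds that $C^k(X,\FX^\perp)=\F^{X(k)}$ and that $(\partial_k f)(s)=0$ means $f|_{\T s(k)}\perp C_s^\perp$, the identity $(C_s^\perp)^\perp=C_s$ gives the result immediately. The only blemish is the passing assertion $F^\perp(s)^*\cong C_s$ (in fact $(C_s^\perp)^*\cong \F^{\T s(k)}/C_s$), but you correct this yourself in the same sentence and the final argument is sound.
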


Moreover, for the sheaf $\FX$ with respect to $\set{C_s}$, the cocycle code $Z_0 = Ker \delta_0$ is the related ``local-views'' code defined in \eqref{eq:LC}, given by replacing each codeword with its ensemble of local views (see the definition in \eqref{eq:LC} and the ensuing discussion).
Agreement testability of our code is equivalent to cosystolic expansion of the sheaf $\FX$ in dimension $0$, as proven in \cite[Proposition 7.6]{firstK2022good}.

Finally, First and Kaufman describe in \cite[Section 7.4]{firstK2022good} how a sheaf gives rise to a quantum CSS code. 

\section{Coset Complex and Code}\label{sec:complex}

\def\a{\alpha}
\subsection{The Triangle Complex}\label{subsec:X}
We describe a family of simplicial complexes $\set{X_n}_n$.
The construction is a variant of the coset complexes constructed by Kaufman and Oppenheim \cite{KaufmanO181}. 

Let \(\F=\F_q\) be a fixed finite field. Let $\varphi\in \F_q[t]$ be a primitive (and irreducible) polynomial of degree $n$ and let \inote{I guess we should denote $\F_{q^n}$ instead of $R_n$} \(R_n = \F_q [t] / \langle \varphi \rangle \cong \F_{q^n}\) (i.e. the ring of univariate polynomials of degree \(\leq n-1\) where multiplication is done modulo \(\varphi\)). Further assume that we choose $n$ so that $3\nmid q^n-1$ (this is easy as long as $q\not \equiv 1\mod 3$).

We define three matrix groups
\begin{equation}\label{eq:Kgroups}
\begin{aligned}
K_1 = \Set{ \begin{pmatrix}
1 & a t & c t^2\\
0 & 1 & b t \\
0 & 0 & 1
\end{pmatrix} \in M_3(R_n) },\\
K_2 = \Set{ \begin{pmatrix}
1 & 0 & 0 \\
ct^2 & 1 & a t \\
b t & 0 & 1
\end{pmatrix} \in M_3(R_n) },\\
K_3 = \Set{ \begin{pmatrix}
1 & a t& 0\\
0 & 1 & 0 \\
b t & c t^2 & 1
\end{pmatrix} \in M_3(R_n) }.
\end{aligned}
\end{equation}
and let $G=G_n$ be the group generated by $K_1,K_2,K_3$.
Clearly $G_n \subseteq SL_3(R_n)$. We will show that for $\phi$ and $n$ as we specified above, it will hold that $G_n = SL_3(R_n)$. 

We define three additional (smaller) subgroups, 
\begin{equation}\label{eq:Hgroups}
\begin{aligned}
H_1 = K_2\cap K_3 = \sett{ \begin{pmatrix}
1 & 0 & 0\\
0 & 1 & 0 \\
\a t & 0 & 1
\end{pmatrix}}{\a\in \F_q},\\
H_2 = K_1\cap K_3= \sett{ \begin{pmatrix}
1 & \a t& 0\\
0 & 1 & 0 \\
0 & 0 & 1
\end{pmatrix}}{\a\in \F_q},\\
H_3 = K_1\cap K_2 = 
\sett{ \begin{pmatrix}
1 & 0 & 0\\
0 & 1 & \a t \\
0 & 0 & 1
\end{pmatrix} }{\a\in \F_q}.
\end{aligned}
\end{equation}
\begin{claim}
    Each subgroup $H_{1}, H_{2}, H_{3}$ is isomorphic to the abelian group $(\F_q, +)$ via the isomorphism $\alpha \leftrightarrow h_i(\alpha)$ for 
    $h_i(\a)\in H_i$ the matrix with $\a t$ in the appropriate location. \qed\end{claim}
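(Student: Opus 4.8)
The plan is to verify directly that, for each $i\in\{1,2,3\}$, the map $h_i\colon(\F_q,+)\to H_i$ sending $\alpha$ to the matrix of \eqref{eq:Hgroups} carrying the entry $\alpha t$ in the designated off-diagonal slot is a bijective group homomorphism. Surjectivity is built into the description of $H_i$ in \eqref{eq:Hgroups} (every element is by definition of the form $h_i(\alpha)$), so the two things to check are that $h_i$ respects the group operations and that it is injective.

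For the homomorphism property I would note that each $h_i(\alpha)$ has the shape $I+\alpha t\,E$, where $I$ is the $3\times 3$ identity and $E=E_{jk}$ is a single off-diagonal matrix unit: $E=E_{31},E_{12},E_{23}$ for $i=1,2,3$ respectively. Since $j\neq k$ we have $E^2=0$, hence $(I+\alpha tE)(I+\beta tE)=I+(\alpha+\beta)tE+\alpha\beta t^2E^2=I+(\alpha+\beta)tE$. This one computation simultaneously shows that $H_i$ is closed under products and under inverses (with $h_i(\alpha)^{-1}=h_i(-\alpha)$), so that $H_i$ really is a subgroup, and that $h_i(\alpha)h_i(\beta)=h_i(\alpha+\beta)$, i.e.\ $h_i$ is a homomorphism out of $(\F_q,+)$.

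For injectivity, $h_i(\alpha)=h_i(\beta)$ forces $\alpha t=\beta t$ in $R_n$, i.e.\ $(\alpha-\beta)t=0$. Since $R_n=\F_q[t]/\langle\varphi\rangle\cong\F_{q^n}$ is a field and $t\neq 0$ in $R_n$ (because $\deg\varphi=n\ge 1$, so $\varphi\nmid t$), the element $t$ is not a zero divisor, whence $\alpha=\beta$. Combining the three points, $h_i$ is an isomorphism of groups $(\F_q,+)\cong H_i$. I do not anticipate any real obstacle; the only input beyond formal matrix algebra is that $t$ is a non-zero-divisor of $R_n$, which is immediate from $R_n$ being a field.
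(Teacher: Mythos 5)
Your proof is correct and is exactly the direct verification the paper leaves implicit (the claim is stated with no written proof): writing $h_i(\alpha)=I+\alpha tE_{jk}$ with $E_{jk}^2=0$ gives the homomorphism property at once, and injectivity follows since $t\neq 0$ in the field $R_n$. No gaps; this matches the intended argument.
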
The coset complex considered in this paper is 
\begin{equation}
    \label{eq:coset-complex}
    X = X[G; K_1,K_2,K_3]
\end{equation}
as per Definition \ref{def:cosetcomplex}. The group $G=G_n$ depends on the underlying ring $R=R_n$. When we let the size of the ring $R_n$ grow by increasing the degree $n$, we get a family of complexes $X_n$ as required. 

By definition $X$ is a $3$-partite simplicial complex satisfying the following, 
\begin{claim}\label{claim:coset-complex}
    Let $G,K_1,K_2,K_3,H_1,H_2,H_3$ be as above. Let $X = X[G;K_1,K_2,K_3]$. Then 
\begin{enumerate}
    \item The vertices correspond to cosets of $K_i$: $X(0) \cong G/K_1\;\sqcup\; G/K_2 \;\sqcup\; G/K_3$. Each vertex is contained in $q^3 = |K_i|$ triangles.      
    \item The edges of $X$ connect a vertex $u = g_iK_i$ to a vertex  $v = g_jK_j$ iff $i\neq j$ and the cosets intersect. In this case their intersection corresponds to a coset of $H_k$ for $k\neq i,j$. The elements of this coset are in $1-1$ correspondence with the set $T_{uv}$ of triangles containing the edge $uv$. There are exactly $|H_k|=q$ such triangles.
    \item $X(2)\cong G$. Moreover, given three vertices $u=g_1K_1$, $v=g_2K_2$, $w= g_3K_3$, the triangle $uvw$ belongs to $X(2)$ iff the three cosets have a nonempty intersection $g_1K_1\cap g_2K_2\cap g_3K_3 = \set g$.
    \item Assuming that $\varphi$ is primitive and $3 \nmid q^n-1$, then $G = \mathsf{SL}_3(R_n)$. 
\end{enumerate} 
\end{claim}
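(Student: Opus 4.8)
The statement splits into a soft part---items~1--3, which are general facts about coset complexes once the pairwise and triple intersections of $K_1,K_2,K_3$ are known---and the substantive part: item~4, that $G=\mathsf{SL}_3(R_n)$ (this is the only place the hypotheses on $\varphi$ and on $q^n-1$ are used). For the soft part I would first recall the generic dictionary for a coset complex $X[G;K_1,K_2,K_3]$: the triangles are in bijection with $G/(K_1\cap K_2\cap K_3)$; the triangles containing a vertex $gK_i$ are in bijection with the set of left $(K_1\cap K_2\cap K_3)$-cosets inside $gK_i$, hence number $|K_i|/|K_1\cap K_2\cap K_3|$; and two cosets $g_iK_i,g_jK_j$ with $i\neq j$ span an edge iff they meet, in which case $g_iK_i\cap g_jK_j$ is a single coset of $K_i\cap K_j$, whose elements biject with the triangles on that edge (a triangle on the edge $\{gK_i,gK_j\}$ is pinned down by which coset of $K_k$ contains the common intersection point, and $(K_i\cap K_j)\cap K_k$ is trivial). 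Everything then reduces to three direct matrix computations: each $K_i$ is a subgroup of order $q^3$ (the three ``slots'' $at,bt,ct^2$ each range over a set of size $q$, and closure under multiplication is a one-line check); $K_i\cap K_j=H_k$ exactly as in~\eqref{eq:Hgroups} (two of the three slots are forced to zero, one free parameter $\alpha t$ survives); and $K_1\cap K_2\cap K_3=\{I\}$ since no off-diagonal slot is shared by all three. Feeding $|K_i|=q^3$, $|K_1\cap K_2\cap K_3|=1$, $|H_k|=q$ into the dictionary gives items~1, 2, 3 and the identification $X(2)\cong G$.

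For item~4: each generator has determinant $1$, so $G\subseteq\mathsf{SL}_3(R_n)$, and it suffices to show $G$ contains every elementary transvection $e_{ij}(r)$ ($i\neq j$, $r\in R_n$), since over a field these generate $\mathsf{SL}_3$. Setting two of the three parameters to zero in~\eqref{eq:Kgroups} already shows $e_{12}(\F_q t),e_{23}(\F_q t),e_{31}(\F_q t)$ and $e_{13}(\F_q t^2),e_{21}(\F_q t^2),e_{32}(\F_q t^2)$ lie in $G$. For each pair $(i,j)$ the set $A_{ij}:=\{r\in R_n : e_{ij}(r)\in G\}$ is an additive subgroup of $R_n$, and the Steinberg relation $[e_{ij}(a),e_{jk}(b)]=e_{ik}(ab)$ for distinct $i,j,k$ gives $A_{ij}\cdot A_{jk}\subseteq A_{ik}$. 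A short induction---alternately chaining these commutators around the two cyclic families of index pairs $\{(1,2),(2,3),(3,1)\}$ and $\{(1,3),(2,1),(3,2)\}$ against the base transvections $e_\bullet(\F_q t)$ and $e_\bullet(\F_q t^2)$, which advances the attainable exponents in steps of $3$---then shows $A_{12},A_{23},A_{31}$ contain $t^m$ for every $m\equiv 1\pmod 3$ and $A_{13},A_{21},A_{32}$ contain $t^m$ for every $m\equiv2\pmod 3$. Finally, $\varphi$ primitive means $t$ generates $\F_{q^n}^\times$, of order $q^n-1$; since $3\nmid q^n-1$, the progression $\{m : m\equiv1\pmod 3\}$ hits every residue mod $q^n-1$, so $\{t^m : m\equiv1\pmod 3\}=\F_{q^n}^\times$, whose $\F_q$-span is all of $R_n$ (and likewise for $\equiv2\pmod 3$). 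Hence each $A_{ij}=R_n$, $G$ contains all transvections, and $G=\mathsf{SL}_3(R_n)$.

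I expect the commutator bootstrap in item~4 to be the main obstacle: one must check that the cyclic arrangement of index pairs is compatible with the direction of the Steinberg relations---so that the commutators land in the $A_{ij}$ one wants rather than producing inverses or trivial elements---and that the residues mod $3$ line up so the induction closes. The hypotheses enter exactly to defuse the only obstruction the bootstrap leaves behind, namely that a priori only exponents in a single residue class mod $3$ are reachable in each $A_{ij}$; this is harmless precisely because $3$ is coprime to the order of $\langle t\rangle=\F_{q^n}^\times$. The whole argument mirrors the one of Kaufman and Oppenheim~\cite{KaufmanO181} for their coset complexes.
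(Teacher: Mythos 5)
Your proposal is correct and follows essentially the same route as the paper: items 1--3 are handled by the standard coset-complex dictionary (which the paper simply cites from Kaufman--Oppenheim) together with the computations $|K_i|=q^3$, $K_i\cap K_j=H_k$, $K_1\cap K_2\cap K_3=\{I\}$, and item 4 is proved by the same Steinberg-commutator bootstrap that generates $e_{ij}(t^m)$ for $m$ in the appropriate residue class mod $3$, followed by the observation that primitivity of $\varphi$ and $3\nmid q^n-1$ make $\{t^m: m\equiv 1 \pmod 3\}$ exhaust $R_n\setminus\{0\}$, so all elementary matrices---and hence all of $\mathsf{SL}_3(R_n)$---are generated. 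Your repackaging via the additive subgroups $A_{ij}$ with $A_{ij}\cdot A_{jk}\subseteq A_{ik}$ is only a cosmetic variant of the paper's induction on the exponent.
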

\begin{proof}
    The first three items are properties of coset complexes, as in \cite{KaufmanO181}. We prove the last item:

    Let $e_{ij}(\alpha)$ denote the matrix with $1$'s along the diagonal and $\alpha$ in the $(i,j)$'th position. (Note that $h_1(\alpha) = e_{12}(\alpha t), h_2(\alpha) = e_{23}(\alpha t)$, and $h_3(\alpha) = e_{31}(\alpha t)$.) The following so-called Steinberg relations hold for all $i \not= j \not= k$, 
    \[[e_{ij}(\alpha), e_{jk}(\beta)] = e_{ik}(\alpha \beta),\] where $[g,h] = ghg^{-1}h^{-1}$ denotes the commutator (this can be verified by direct calculation).

    We claim that using $h_1(1) = e_{12}(t)$, $h_2(1) = e_{23}(t)$, and $h_3(1) = e_{31}(t)$ it's possible to generate all matrices $e_{ij}(t^\beta)$ where $(i,j) \in \{ (1,2), (2,3), (3,1) \}$ and $\beta \equiv 1~\text{mod}~3$, as well as where $(i,j) \in \{ (1,3), (2,1), (3,2) \}$ and $\beta \equiv 2~\text{mod}~3$. We will prove this via induction. Suppose that for some $B \in \bbN_{\ge 0}$ all $e_{ij}(t^\beta)$ with $\beta \le B, \beta \equiv j-i~\text{mod}~3$ are generatable. Then, we will show the claim for $B+1$. If $B+1$ is a multiple of $3$, then there's nothing to show. Otherwise, if $B+1$ is $1~\text{mod}~3$, then by assumption we have that $e_{ik}(t^2)$ and $e_{kj}(t^{B-1})$ are both generatable where $k-i \equiv j-k \equiv 2 ~\text{mod}~3$, so $e_{ij}(t^{B+1}) = [e_{ik}(t^2), e_{kj}(t^{B-1})]$ is also generatable. If $B+1$ is $2~\text{mod}~3$, then we have that $e_{ik}(t)$ and $e_{kj}(t^B)$ are both generatable where $k-i \equiv j-k \equiv 1~\text{mod}~3$ by inductive hypothesis, so $e_{ij}(t^{B+1}) = [e_{ik}(t), e_{kj}(t^B)]$ are also both generatable.    

    Now, since $\varphi$ is primitive, it holds that the elements $t, \dots, t^{q^n-1}$ are all distinct and thus must cover all of $R_n \backslash \{ 0 \}$. If $3 \nmid q^n-1$, then the elements $t^{3\beta+1}$ where $0 \le \beta < q^n-1$ are also distinct and range over $R_n \backslash \{ 0 \}$, as do the elements $t^{3\beta+2}$. 
    The reason is that if $t$ generates the multiplicative group of $R_n=\F_q[t]/\iprod{\varphi}$, whose order is $q^n-1$, then as long as $3\nmid q^n-1$ then $t^3$ also generates the group as well.
    In other words, from $h_1(1),h_2(1),h_3(1)$ we can generate all $e_{ij}(\gamma)$ for any $i \not= j$ and $\gamma \in R_n \backslash \{ 0 \}$ (that is, we can generate all elementary matrices).

    To finish, it is well known that if we can generate all elementary matrices then we can generate all of $\textsf{SL}_3(R_n)$.
    
\end{proof}
This claim proves item (b) in \pref{thm:mainX}. Item (a) is proven in the next subsection.

\subsection{Structure of the Links}
\label{subsec:links}
In this section we describe the links. That is, we understand the structure of  \(X(K_i; H_{i+1}, H_{i-1})\). Without loss of generality we focus on the link $G_1 = (K_1/H_2\sqcup K_1/H_3, E_1)$.

Recall that the vertices of $G_1$ corresponds to cosets of the form $kH_2$ and $k'H_3$ while the edges correpsond to pairs of cosets $\{kH_2, k'H_3\}$ that have non-empty intersection $k_1H_1\cap k_2H_2$. For any coset $kH_2$ with representative $k= \begin{pmatrix}
        1 & 0 & ct^2 \\
        0 & 1 & bt \\
        0 & 0 & 1
    \end{pmatrix}$, we use $(*,b,c)$ to denote the vertice in $G_1$. Similarly for any coset  $k'H_3$ with representation $k'= \begin{pmatrix}
        1 & \alpha t & \gamma t^2 \\
        0 & 1 & 0 \\
        0 & 0 & 1
    \end{pmatrix}$ we use $(\alpha,*,\gamma)$ to denote the corresponding vertice in $G_1$. 
So an edge connects $(*,b,c)$ and $(\alpha,*,\gamma)$ iff the following equation has a solution in $\F^2_q$:
\begin{align}\label{eq:edge-in-G1}
    \begin{pmatrix}
        1 & xt & ct^2 \\
        0 & 1 & bt \\
        0 & 0 & 1
    \end{pmatrix} = \begin{pmatrix}
        1 & \alpha t & (\alpha y + \gamma)t^2 \\
        0 & 1 & y t \\
        0 & 0 & 1
    \end{pmatrix}.
\end{align} This system of equations is solvable iff $c = \alpha b + \gamma$. Therefore we can deduce the following statement about the degree of $G_1$.

\begin{claim}
    Every vertex in $G_1$ has degree $q$. 
\end{claim}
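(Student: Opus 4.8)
The plan is to analyze equation \eqref{eq:edge-in-G1} and count, for a fixed vertex, how many neighbors it can have. The key observation, already recorded in the excerpt, is that the two coset representatives being equal (up to right multiplication by the relevant $H$'s) boils down to the single scalar equation $c = \alpha b + \gamma$ in $\F_q$, where $(*,b,c)$ is a $K_1/H_2$-type vertex and $(\alpha,*,\gamma)$ is a $K_1/H_3$-type vertex. Everything hinges on this bilinear relation.

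First I would fix a vertex of the first type, say $u = (*,b,c)$, and ask for which vertices $(\alpha,*,\gamma)$ of the other type the edge relation $c = \alpha b + \gamma$ holds. Since $\alpha$ ranges freely over $\F_q$ and, once $\alpha$ is chosen, $\gamma$ is forced to be $\gamma = c - \alpha b$, there are exactly $q$ solutions $(\alpha,\gamma)$, hence exactly $q$ neighbors. Symmetrically, I would fix a vertex $(\alpha,*,\gamma)$ of the second type; then for each choice of $b\in\F_q$ the value $c = \alpha b + \gamma$ is determined, again giving exactly $q$ neighbors. This shows that $G_1$ is $q$-regular (and bipartite on $q^2 + q^2 = 2q^2$ vertices, matching item (a) of \pref{thm:mainX}).

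One small point I would be careful about is that the informal notation $(*,b,c)$ etc.\ should genuinely parametrize the cosets bijectively — i.e.\ that distinct pairs $(b,c)\in\F_q^2$ give distinct cosets $kH_2$, and likewise for the other side — so that "number of solutions $(\alpha,\gamma)$" really equals "number of neighbors." This follows because $H_2 \subset K_1$ consists of matrices that only modify the $(1,2)$ entry, so the $(2,3)$ and $(1,3)$ entries ($bt$ and $ct^2$) are coset invariants; the analogous statement holds for $H_3$. I would state this as a one-line remark rather than belabor it. I don't anticipate any real obstacle here — the whole claim is essentially the observation that a bilinear equation $c = \alpha b + \gamma$ has exactly $q$ solutions when one of the "free" variables is unconstrained; the only thing to get right is the bookkeeping that the vertices are correctly parametrized by $\F_q^2$ on each side.
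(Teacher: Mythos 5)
Your proposal is correct and follows essentially the same route as the paper, which treats the claim as an immediate consequence of the edge characterization $c = \alpha b + \gamma$ derived from \eqref{eq:edge-in-G1}: fixing one side of the bipartition leaves one free parameter and one determined one, giving exactly $q$ neighbors. Your extra remark that $(b,c)$ (resp.\ $(\alpha,\gamma)$) faithfully parametrizes the cosets of $H_2$ (resp.\ $H_3$) in $K_1$ is a sensible point of care that the paper leaves implicit.
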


Furthermore let $A$ be the normalized adjacency matrix of $G_1$. By the edge characterization \cref{eq:edge-in-G1}, we derive the following spectral gap for $A$.

\begin{claim} \label{claim:eigenvalue}
    $|\lambda_2(A)| = \frac{1}{\sqrt{q}}$. 
\end{claim}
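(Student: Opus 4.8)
The plan is to compute the second eigenvalue of the bipartite biregular graph $G_1$ directly by understanding the structure of its biadjacency matrix. Recall from \eqref{eq:edge-in-G1} that the vertices on one side are indexed by pairs $(b,c)\in\F_q^2$ (cosets $kH_2$), the vertices on the other side by pairs $(\alpha,\gamma)\in\F_q^2$ (cosets $k'H_3$), and there is an edge between $(*,b,c)$ and $(\alpha,*,\gamma)$ exactly when $c=\alpha b+\gamma$. So if $N\in\{0,1\}^{\F_q^2\times\F_q^2}$ denotes the $\F_q^2\times\F_q^2$ biadjacency matrix, then $N_{(b,c),(\alpha,\gamma)}=\one[c=\alpha b+\gamma]$. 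Since the graph is $q$-regular on both sides on $q^2$ vertices per side, the normalized adjacency matrix $A$ (acting on $\R^{2q^2}$) has the block form $A=\frac1q\begin{pmatrix}0&N\\ N^\top&0\end{pmatrix}$, and its nonzero eigenvalues are $\pm\frac1q\sigma$ where $\sigma$ ranges over the singular values of $N$. Thus it suffices to show that the singular values of $N$ are $q$ (once, for the all-ones/stationary direction) and $\sqrt q$ (with multiplicity $q^2-1$); equivalently that $NN^\top$ has eigenvalue $q^2$ once and eigenvalue $q$ with multiplicity $q^2-1$.

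Next I would compute $M:=NN^\top$ explicitly. We have
\[
M_{(b,c),(b',c')}=\sum_{\alpha,\gamma}\one[c=\alpha b+\gamma]\,\one[c'=\alpha b'+\gamma].
\]
For fixed $\alpha$, the two conditions force $\gamma=c-\alpha b=c'-\alpha b'$, i.e. $c-c'=\alpha(b-b')$. If $b=b'$: the pair of conditions is consistent (for every $\alpha$, with $\gamma=c-\alpha b$) iff $c=c'$, contributing $q$ when $(b,c)=(b',c')$ and $0$ otherwise. If $b\ne b'$: there is exactly one $\alpha=(c-c')(b-b')^{-1}$ that works, and then $\gamma$ is determined, contributing exactly $1$. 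So $M = q\cdot I + (J - \text{(block structure)})$; more precisely, writing the rows in blocks indexed by $b\in\F_q$ (each block of size $q$, indexed by $c$), $M$ equals $q I$ plus the all-ones matrix $J_{q^2}$ minus the block-diagonal matrix whose $b$-th block is $J_q$. That is, $M = (q-1)I_{q^2} + J_{q^2} - (I_q\otimes J_q) + I_{q^2}\cdot 0$... let me just say $M = q I_{q^2} - (I_q\otimes J_q) + J_{q^2}$ after collecting the diagonal correctly (the diagonal entry is $q$, off-diagonal within a block is $0$, off-diagonal across blocks is $1$, matching $q I - I_q\otimes J_q + J_{q^2}$ on the diagonal: $q - q + 1 = 1$? — so I will recompute the constants carefully when writing the proof, the point is $M$ is a linear combination of $I_{q^2}$, $I_q\otimes J_q$, and $J_{q^2}$).

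Then I would diagonalize using the eigenspace decomposition of these three commuting matrices. Both $I_q\otimes J_q$ and $J_{q^2}=J_q\otimes J_q$ are simultaneously diagonalized by the tensor of the eigenbasis of $J_q$ (eigenvalue $q$ on $\mathbf 1_q$, eigenvalue $0$ on $\mathbf 1_q^\perp$). Splitting $\R^{q^2}=\R^q\otimes\R^q$ into the four pieces $(\mathbf 1\otimes\mathbf 1)$, $(\mathbf 1\otimes\mathbf 1^\perp)$, $(\mathbf 1^\perp\otimes\mathbf 1)$, $(\mathbf 1^\perp\otimes\mathbf 1^\perp)$ of dimensions $1, q-1, q-1, (q-1)^2$, one reads off the eigenvalue of $M$ on each piece as a simple arithmetic expression in $q$; I expect the value $q^2$ on the first piece (the stationary direction, as it must be by regularity) and the value $q$ on the other three, totalling multiplicity $q^2-1$. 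Hence $NN^\top$ has largest eigenvalue $q^2$ and all others equal to $q$, so the singular values of $N$ are $q$ and $\sqrt q$, and $\lambda_2(A)=\frac1q\sqrt q=\frac1{\sqrt q}$, with $|\lambda_2(A)|=1/\sqrt q$ as claimed. The only mildly delicate points are bookkeeping the constants in the expression for $M$ and confirming the multiplicities add up to $2q^2$ after accounting for the $\pm$ symmetry of the bipartite spectrum; there is no real obstacle, just care with the linear algebra.
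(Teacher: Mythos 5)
Your proposal is correct and follows essentially the same route as the paper: both compute the Gram matrix of the biadjacency matrix entrywise (diagonal entry $q$, entry $0$ within a $b$-block, entry $1$ across blocks), express it as a combination of $I_q\otimes J_q$ and $J_q\otimes J_q$, and read off the spectrum to get $|\lambda_2(A)|=1/\sqrt q$. One small correction for when you finalize the constants: the correct expression is $NN^\top = q\,(I_q\otimes I_q) + (J_q-I_q)\otimes J_q$, whose eigenvalues are $q^2$ (once), $q$ (multiplicity $q(q-1)$), and $0$ (multiplicity $q-1$) --- in particular the piece $\mathbf{1}^\perp\otimes\mathbf{1}$ gets eigenvalue $0$ rather than $q$ --- but the second-largest eigenvalue is still $q$, so the conclusion is unaffected.
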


\begin{proof}

Let $B$ be the $|K_1/H_2|\times|K_1/H_3|$ unnormalized biadjacency matrix of $G_1$ such that 
\[A = \begin{pmatrix}
    \mathbf{0} & \frac{1}{q}\cdot B \\
    \frac{1}{q}\cdot B^T & \mathbf{0}
\end{pmatrix}.\]

Then the matrix $BB^T$ is the $|K_1/H_2|\times|K_1/H_2|$ matrix whose value in its $((*,b,c), (*,b',c'))$-th entry is the number of 2-step walks from $(*,b,c)$ to $(*,b',c'))$. Equivalently, by \cref{eq:edge-in-G1}, the value is also the number of solutions to the equations $c = x b + y$ and $c' = x b' + y$ over $\F^2_q$. Therefore

\[BB^T[(*,b,c), (*,b',c')] = 
\begin{cases}
1 \quad & \text{when } b \neq b' \\
q &\text{when } b = b' \land c = c'\\
0 &\text{o.w.}
\end{cases}\]

We can thus explicitly write $BB^T = (J_q - I_q)\otimes J_q + q \cdot I_q\otimes I_q$ where $J_q \in \R^{q\times q}$ is the all-ones matrix. 

So its top eigenvalue satisfies $\lambda_1(BB^T) = q^2$ and the second-largest eigenvalue is $\lambda_2(BB^T) = q$. From this we can deduced that $|\lambda_2(A)| =  \frac{1}{\sqrt{q}}$.
\end{proof}

\subsection{Embedding the Complex into a Vector Space}\label{subsec:embed}
Recall that there is a natural isomorphism between the group $G$ and the triangles of the complex, $X(2)$.  We describe a natural way to biject $G$ to a set of points $S\subset \F_q^{9n}$, \rnote{im changing all the $m$'s to $n$'s}
\[X(2) \cong G\stackrel \iota\hookrightarrow S\subset R^9\cong \F_q^{9n}.
\]
Every $g\in G$ is a $3\times 3$ matrix $(r_{ij})_{1\leq i,j\leq 3}$ such that $r_{ij}\in R$ for each $i,j$. An element in $R$ is a univariate polynomial of degree at most $n-1$ so it is specified by $n$ coefficients $r_{ij}(t) = \sum_{\ell=0}^{n-1} r_{ij}^{(\ell)} t^\ell$. We simply map each of the nine matrix entries into a vector of coefficients in $\F_q^n$ and concatenate them all: 
\begin{equation}\label{eq:embedding}
    ( r_{ij})\stackrel \iota\longmapsto \left(r^{(0)}_{11},r^{(1)}_{11},\ldots r^{(n-1)}_{11}, r^{(0)}_{12},r^{(1)}_{12},\ldots r^{(n-1)}_{12}, \ldots, r^{(0)}_{33},r^{(1)}_{33},\ldots r^{(n-1)}_{33}  \right)
\end{equation}
This embedding is clearly injective and it is linear in the coefficients of the matrix entries, namely $\iota(\alpha g+ \beta g') = \alpha\iota(g)+\beta\iota(g')$ for any $\alpha,\beta\in \F_q$ and $g,g'\in G$.
\begin{claim}\label{claim:line-embed}
    For $g\in G$ and $i=1,2,3$ let $\ell_{g,i}:\F\to gH_i$ be defined by $\ell_{g,i}(\a) = gh_i(a)$. Then $\iota\circ\ell_{g,i}:\F\to\F^n$ is an affine line that can be written as 
    \[\iota(\ell_{g,i}(\a)) = v_0 + \a v_i\] where $v_0=\iota(g)$ and for  $g = \begin{pmatrix} | & | & | \\ g_1 & g_2 & g_3 \\ | & | & | \end{pmatrix}$, we have
    \[
    v_1=\iota\left(\begin{pmatrix} | & 0 & 0 \\ tg_3 & 0 & 0 \\ | & 0 & 0 \end{pmatrix}\right);\quad v_2=\iota\left(\begin{pmatrix}  0 & |& 0 \\ 0 & tg_1  & 0 \\ 0& | & 0 \end{pmatrix}\right);\quad v_3=
    \iota\left(\begin{pmatrix} 0 & 0 & | \\ 0 & 0 & tg_2 \\ 0 & 0 & | \end{pmatrix}\right).
    \]
    Here the entries in $tg_i$ are taken modulo $\varphi(t)$, and $v_i\neq 0$ for $i=1,2,3$.

    Moreover, for any $g'\in gH_i$, the line $\iota\circ\ell_{g',i}$ is a re-parameterization of $\iota\circ\ell_{g,i}$, satisfying
    \[ \iota\circ\ell_{g',i}(\a) = \iota\circ\ell_{g,i}(\a+\a')\]
    for $\a'\in\F$ such that $g' = gh_i(\alpha')$.
\end{claim}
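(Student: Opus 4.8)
The plan is to verify the claimed formula for $\iota(\ell_{g,i}(\alpha))$ by direct matrix computation, and then derive the re-parameterization statement as an easy consequence. I will treat $i=1$ in detail; the cases $i=2,3$ are completely analogous by the cyclic symmetry of how $H_1,H_2,H_3$ sit inside the matrix.

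\textbf{Step 1: Compute $g\cdot h_i(\alpha)$ as a matrix.} Write $g$ in column form as $\begin{pmatrix} g_1 & g_2 & g_3\end{pmatrix}$, where $g_j\in R^3$ is the $j$-th column. Recall $h_1(\alpha) = e_{12}(\alpha t)$, so right-multiplication by $h_1(\alpha)$ sends $\begin{pmatrix} g_1 & g_2 & g_3\end{pmatrix}$ to $\begin{pmatrix} g_1 & g_2 + \alpha t g_1 & g_3\end{pmatrix}$, where the entries of $\alpha t g_1$ are reduced modulo $\varphi(t)$. Hence $g h_1(\alpha) = g + \alpha\cdot\begin{pmatrix} 0 & tg_1 & 0\end{pmatrix}$ as matrices over $R$ (with the $tg_1$ column reduced mod $\varphi$). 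For $i=2$ one gets the column $tg_2$ in position $3$ (since $h_2(\alpha)=e_{23}(\alpha t)$), and for $i=3$ one gets $tg_3$ in position $1$ (since $h_3(\alpha)=e_{31}(\alpha t)$); this matches the displayed $v_1,v_2,v_3$ (note the index shift: $v_i$ corresponds to the perturbation for $H_i$, and the nonzero column lands in the expected slot).

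\textbf{Step 2: Apply $\iota$ and use linearity.} Since $\iota$ is $\F_q$-linear in the matrix entries (noted right after \eqref{eq:embedding}), and since reduction mod $\varphi$ is also $\F_q$-linear, we get $\iota(g h_i(\alpha)) = \iota(g) + \alpha\,\iota(M_i)$ where $M_i$ is the matrix with the single nonzero column $tg_j$ in the appropriate slot. Setting $v_0 = \iota(g)$ and $v_i = \iota(M_i)$ gives exactly the claimed affine-line formula. It remains to check $v_i\neq 0$: this holds because $g\in G\subseteq \mathsf{SL}_3(R)$ is invertible, so no column $g_j$ is zero, hence $g_j$ has some nonzero entry $r\in R\setminus\{0\}$; then $tr\bmod\varphi \neq 0$ because $R$ is a field (as $\varphi$ is irreducible) and $t$ is a unit in $R$ (indeed $t$ generates $R^\times$), so multiplication by $t$ is injective. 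Therefore $M_i\neq 0$ and $v_i = \iota(M_i)\neq 0$ since $\iota$ is injective.

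\textbf{Step 3: The re-parameterization.} Let $g'\in gH_i$, say $g' = g h_i(\alpha')$. Using that $H_i\cong(\F_q,+)$ via $h_i$ (the Claim right before \eqref{eq:coset-complex}), we have $h_i(\alpha')h_i(\alpha) = h_i(\alpha'+\alpha)$, so $\ell_{g',i}(\alpha) = g' h_i(\alpha) = g h_i(\alpha') h_i(\alpha) = g h_i(\alpha+\alpha') = \ell_{g,i}(\alpha+\alpha')$. Applying $\iota$ gives $\iota\circ\ell_{g',i}(\alpha) = \iota\circ\ell_{g,i}(\alpha+\alpha')$, as claimed; equivalently, in the affine-line coordinates of Step 2, $\iota\circ\ell_{g',i}(\alpha) = v_0 + \alpha' v_i + \alpha v_i$, i.e.\ the same line through the point $v_0+\alpha'v_i = \iota(g')$ with the same direction $v_i$.

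I do not expect any genuine obstacle here — the statement is essentially a bookkeeping exercise once one sets up the column-multiplication picture. The only point requiring a little care is the non-vanishing of $v_i$: one must invoke both that $R_n$ is a field (so that multiplication by $t$ is injective, ensuring $tg_j\bmod\varphi\neq 0$) and that $g$ is invertible (so that $g_j\neq 0$); both facts are available from Claim~\ref{claim:coset-complex} and the setup in Section~\ref{subsec:X}. A secondary subtlety is matching the index conventions (which $H_i$ perturbs which column, and the resulting labelling $v_1,v_2,v_3$), which is why I would write out $i=1$ explicitly and then state that $i=2,3$ follow by the analogous computation.
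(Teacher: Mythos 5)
Your proof is correct and follows essentially the same route as the paper's: right-multiplication by $h_i(\a)$ is an elementary column operation, so $gh_i(\a)$ equals $g$ plus $\a$ times a fixed matrix, and the $\F_q$-linearity of $\iota$ turns this into an affine line; the re-parameterization then follows from $h_i(\a')h_i(\a)=h_i(\a+\a')$ exactly as in the paper. Your Step 2 argument for $v_i\neq 0$ (columns of an invertible matrix are nonzero, and multiplication by $t$ is injective since $R_n$ is a field) is a detail the paper's proof does not spell out, and it is a welcome addition. One bookkeeping caveat: you adopt the convention $h_1(\a)=e_{12}(\a t)$, $h_2(\a)=e_{23}(\a t)$, $h_3(\a)=e_{31}(\a t)$ (quoted from the parenthetical in the proof of Claim~\ref{claim:coset-complex}), but the displayed definitions in \eqref{eq:Hgroups} give $h_1(\a)=e_{31}(\a t)$, $h_2(\a)=e_{12}(\a t)$, $h_3(\a)=e_{23}(\a t)$, and it is the latter that Claim~\ref{claim:line-embed} is written against. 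With your convention, $gh_1(\a)$ perturbs the second column by $\a t g_1$, so the direction of $\iota\circ\ell_{g,1}$ would be the vector the claim calls $v_2$, not $v_1$, and the displayed identity $\iota(\ell_{g,i}(\a))=v_0+\a v_i$ would fail by a cyclic shift of indices. The fix is purely notational --- use the $h_i$ from \eqref{eq:Hgroups}, under which $gh_1(\a)$ adds $\a t$ times the third column to the first column and the indices line up with no shift --- but your ``note the index shift'' remark should be replaced by this explicit correction rather than left as an unresolved discrepancy.
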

\begin{proof}
Fix $g\in G$. We prove the first part for $i=1$, the other cases are similar. The matrix $gh_1(\a)$ is obtained from $g$ by adding $\a t$ times the third column of $g$ to the first column of $g$, namely, 
\[gh_1(\a) = (g_1,g_2,g_3) + (\a t\cdot  g_3,0,0)\] 
Since the embedding $\iota$ is linear in the coefficients, we get that
\[\iota(gh_1(\a)) = \iota((g_1,g_2,g_3)) + \iota((\a t\cdot  g_3,0,0))= v_0 + \a v_1\] 
as in the claim. 

Regarding the moreover part, by definition $\ell_{g',i}(\a)= g'h_i(\a) = gh_i(\a')h_i(\a) = gh_i(\a+\a') = \ell_{g,i}(\a+\a')$. Since the expression is linear in $\a$, and since $\iota$ is additive, $\iota\circ\ell_{g,i}$ is clearly the same affine line as $\iota\circ\ell_{g',i}$, reparameterized.
\end{proof}
We define, for $i=1,2,3$, 
\begin{equation}\label{eq:lines}
        \L_n^i = \sett{\iota\circ\ell_{g,i}:\F\to\F^n}{g\in G},
\end{equation}
and $\L_n = \L_n^1\cup \L_n^2\cup \L_n^3$.
This establishes item (c) in \pref{thm:mainX}.
\subsection{The Global Code}\label{subsec:globalcode}
Let $RS(q,d)$ be the Reed Solomon code of degree $d$ over $\F_q$. Namely, $RS(q,d)$ is the set of length-$q$ tuples $(p(\a)\,:\,\a\in \F_q)$ where $p$ is a univariate polynomial of degree at most $d$.

For any three parameters $0\leq d_1,d_2,d_3\leq q$, we defined the code $C_{d_1,d_2,d_3}$ in two ways, see Definition \ref{def:code}.
First, we defined it as a (punctured) lifted Reed-Solomon code,
    \begin{align*}
    C^1 = C_{n,d_1,d_2,d_3} &= \sett{f:\bar X_n\to\F}{ \forall i=1,2,3,\,\lin\in \L_n^i,\; f \circ \lin \in RS(q,d_i)} \tag{\ref{eq:RMcode}}
    \end{align*}
where $\L_n^i$ is the set of affine lines defined in \eqref{eq:lines},
and then we defined it as an HDX code,
    \begin{align*}
        C^2 = C_{n,d_1,d_2,d_3} &= \sett{f:X_n(2)\to\F}{ \forall e\in X(1),\; f|_{\T e}\in C_e} \tag{\ref{eq:Tannercode} }
    \end{align*}
    where $C_e$ is isomorphic to $RS(q,d_i)$ for edges of type $i$. We now define $C_e$ more explicitly.  Recall from Claim \ref{claim:coset-complex} (item 2) that every edge $e$ of type $i$ corresponds to a coset $gH_i$, in the sense that the triangles containing $e$, which we denote $\T e$, correspond to the elements of $gH_i$. Choose, for each edge, one group element $g$ to be a coset representative. We write $\T e = \set{gh_i(\a)}_{\a\in\F}$ and define 
    \begin{equation}\label{eq:localcode-iso} C_e = \sett{f\in \F^{\T e}}{f(gh_i(\a))\hbox{ is a degree }d_i\hbox{ function of }\a}. \end{equation}
The definition of $C_e$ appears to depend on the choice of coset representative but it does not, because the degree of $f(gh_i(\a))$ as a function of $\a$ is the same as the degree of $f(gh_i(\a+\a')) = f(g'h_i(\a))$ as a function of $\a$.
\begin{claim}\label{claim:iso}
    The codes $C^1,C^2$ are isomorphic, and $\iota:G\to \bar X_n$ gives the isomorphism.
\end{claim}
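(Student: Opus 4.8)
The plan is to show that the map $\iota\colon G\to\bar X_n$, which identifies $X_n(2)$ with $\bar X_n$, carries the defining constraints of $C^2$ exactly onto the defining constraints of $C^1$, so that $f\mapsto f\circ\iota$ is a bijection between the two codes. Concretely, I would argue: a function $\tilde f\colon\bar X_n\to\F$ lies in $C^1$ if and only if $f:=\tilde f\circ\iota\colon X_n(2)\to\F$ lies in $C^2$. Since $\iota$ is a bijection of the coordinate sets and both codes are $\F$-linear subspaces of $\F^{\bar X_n}\cong\F^{X_n(2)}$, establishing this equivalence of membership conditions immediately gives the claimed isomorphism.

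The core step is to match up, for each edge $e\in X_n(1)$, the local constraint ``$f|_{\T e}\in C_e$'' with a line constraint ``$\tilde f\circ\lin\in RS(q,d_i)$'' for the appropriate line $\lin\in\L_n^i$. Fix an edge $e$ of type $i$ and a coset representative $g\in G$ with $\T e=\{gh_i(\a)\}_{\a\in\F}$, as in \eqref{eq:localcode-iso}. By Claim~\ref{claim:line-embed}, the map $\a\mapsto\iota(gh_i(\a))$ is precisely the affine line $\lin_{g,i}:=\iota\circ\ell_{g,i}\in\L_n^i$, and it is a genuine line since $v_i\neq 0$. Hence, for $f=\tilde f\circ\iota$,
\[
f(gh_i(\a)) \;=\; \tilde f\bigl(\iota(gh_i(\a))\bigr) \;=\; (\tilde f\circ\lin_{g,i})(\a).
\]
Therefore ``$f(gh_i(\a))$ is a degree-$d_i$ function of $\a$'' (the defining condition of $C_e$) is literally the same statement as ``$\tilde f\circ\lin_{g,i}\in RS(q,d_i)$''. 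As $g$ ranges over coset representatives of the type-$i$ edges, the lines $\lin_{g,i}$ range over all of $\L_n^i$ (this is the definition \eqref{eq:lines} of $\L_n^i$), and the ``moreover'' part of Claim~\ref{claim:line-embed} guarantees that a different representative $g'\in gH_i$ only reparameterizes the same line, so neither the code $C_e$ nor the degree condition depends on the choice — consistent with the remark following \eqref{eq:localcode-iso}. Running this equivalence over all edges $e$ (equivalently, over all three types and all $g\in G$) shows $\tilde f\in C^1\iff f\in C^2$.

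The one point requiring a little care — and the only place I expect any friction — is bookkeeping the three-partite structure: I must check that every line in $\L_n=\L_n^1\cup\L_n^2\cup\L_n^3$ arises from exactly one edge constraint (with the matching degree parameter $d_i$), and conversely that every edge of $X_n(1)$ contributes exactly one line, with no double counting and no omission. This follows from item~2 of Claim~\ref{claim:coset-complex} (type-$i$ edges are in bijection with cosets $gH_i$, and $\T e$ is in bijection with the elements of that coset) together with the definition of $\ell_{g,i}$, but it is worth stating explicitly. Once this correspondence is pinned down, linearity of $\iota$ is not even needed for the isomorphism of the two codes as sets; it is only needed (already used in Claim~\ref{claim:line-embed}) to know that the images of the cosets $gH_i$ are affine lines in the first place. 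This completes the proof.
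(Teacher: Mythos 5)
Your proposal is correct and follows essentially the same route as the paper's proof: both use the key identity $f(gh_i(\a)) = (\tilde f\circ\iota\circ\ell_{g,i})(\a)$ together with the fact that $\iota\circ\ell_{g,i}$ ranges over $\L_n^i$, so the edge constraints of $C^2$ transform exactly into the line constraints of $C^1$. The extra bookkeeping you flag (well-definedness under change of coset representative, and the edge--line bijection) is handled in the paper in the discussion immediately preceding the claim, so your write-up is just a slightly more explicit version of the same argument.
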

\begin{proof}
    Fix $f\in C^1$. We define $\tilde f:G\to \F$ by $\tilde f = f \circ\iota$, and show $\tilde f\in C^2$.
    We need to check that $\tilde f|_{\T e}\in C_e$ for each $e$. 
    By definition this is true iff $(\tilde f(gh_i(\a)))_{\a} \in RS(q,d_i)$ for $e$ a type $i$ edge. But $\tilde f (gh_i(\cdot))= \tilde f\circ \ell_{g,i} = f\circ\iota\circ\ell_{g,i} \in RS(q,d_i)$ where the last inclusion is because $f\in C^1$ and $\iota\circ \ell_{g,i}\in \L_n^i$.

    The other direction is easy as well. Given $\tilde f\in C^2$, we let $f = \tilde f\circ \iota^{-1}:\bar X_n\to\F$  the unique function such that $f\circ\iota = \tilde f$. To check that $f\in C^1$ consider any line $\lin=\iota\circ \ell_{g,i}\in \L_n$ and observe that $f\circ\lin= f\circ(\iota\circ\ell_{g,i}) =\tilde f \circ\ell_{g,i}  \in RS(q,d_i)$. 
\end{proof}

Since there is a $1-1$ correspondence between $X(2)$ and $G$, we may also write codewords of $C^2$ as $f:G\to\F$. A group always acts transitively on itself by left multiplication. Moreover,
\begin{claim}\label{clm:trans}
    If $w:G\to\F_q$ is a codeword, then $w^g$ is a codeword, where $w^{g}(g') = w(gg')$
\end{claim}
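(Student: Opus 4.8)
The plan is to show that left multiplication by a fixed $g\in G$ maps the code $C^2$ to itself, which then implies transitivity of the symmetry group since $G$ acts transitively on itself (hence on $X(2)\cong G$). So I would fix a codeword $w:G\to\F_q$ and the element $g\in G$, and verify the defining condition of $C^2$ for $w^g$, namely that $w^g|_{\T e}\in C_e$ for every edge $e\in X(1)$.

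First I would recall how left multiplication acts on the combinatorial structure of the coset complex: multiplication by $g$ sends the coset $g'K_i$ to $gg'K_i$, hence it is an automorphism of $X$ permuting vertices, edges, and triangles while preserving types (it commutes with the partition into $G/K_1\sqcup G/K_2\sqcup G/K_3$). In particular, if $e$ is a type-$i$ edge corresponding to the coset $g'H_i$, then the edge $e'$ defined by the coset $gg'H_i$ is also a type-$i$ edge, and left multiplication by $g$ gives a bijection $\T{e'}\to\T e$ sending the triangle $gg'h_i(\a)$ (an element of $gg'H_i$) to $g'h_i(\a)$ (an element of $g'H_i$). Actually, to match the direction of $w^g$, I want: $\T e = \{g'h_i(\a)\}_\a$ with representative $g'$, and then $w^g$ restricted to $\T e$ is the function $\a\mapsto w^g(g'h_i(\a)) = w(gg'h_i(\a))$. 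Since $gg'$ is a coset representative for the type-$i$ edge $e'$ (corresponding to $gg'H_i$), the tuple $(w(gg'h_i(\a)))_\a$ is exactly $w|_{\T{e'}}$ up to the identification, and this lies in $RS(q,d_i)$ because $w\in C^2$ and $e'$ has type $i$.

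The key point — and really the only thing to check carefully — is that $C_e$ as defined in \eqref{eq:localcode-iso} is independent of the choice of coset representative, which is already noted in the text immediately after that equation: replacing $g'$ by $g'h_i(\a')$ merely reparameterizes $\a\mapsto\a+\a'$, an affine substitution that preserves the Reed–Solomon code $RS(q,d_i)$. Thus ``$(w(g'h_i(\a)))_\a\in RS(q,d_i)$'' is a well-defined statement about the edge $e$, and the computation above shows $w^g|_{\T e}$ satisfies it because it equals $w|_{\T{e'}}$ for the type-$i$ edge $e' = gg'H_i$. Running this over all $e\in X(1)$ gives $w^g\in C^2$.

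I do not expect any real obstacle here; the statement is essentially an observation that $C^2$ is defined purely in terms of left-coset data and Reed–Solomon constraints, both of which are left-multiplication invariant. The only mild subtlety is bookkeeping the direction of the action ($w^g(g') = w(gg')$ corresponds to pulling back along right translation on cosets, or equivalently to the automorphism $h'K_i\mapsto g h'K_i$), and making sure types are preserved — which is immediate since left multiplication does not mix the three families $G/K_1, G/K_2, G/K_3$. Combined with the transitivity of the $G$-action on $X(2)\cong G$, this establishes item \ref{item:thm1.3-transitivity} of \pref{thm:mainCode}.
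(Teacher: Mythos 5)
Your argument is correct and is essentially the paper's own proof, which simply observes that $(w^g(g'h_i(\a)))_\a = (w(gg'h_i(\a)))_\a \in RS(q,d_i)$ because $gg'$ is a representative of a type-$i$ edge. The extra bookkeeping you include (type preservation, independence of coset representative) is already handled in the text following \eqref{eq:localcode-iso} and does not change the substance.
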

\begin{proof}
    This is clear since for any $g'\in G$ and any $i=1,2,3$, $(w^g(g'h_i(\a))_\a = (w(gg'h_i(\a))_\a \in RS_q^{d_i}$.
\end{proof}

This establishes the transitivity of the code, as claimed in last item of \pref{thm:mainCode}. It implies that the restriction of our code to $K_i$ is isomorphic to the restriction of our code to any coset $gK_i$. To study the local view of the code at a link of a vertex it suffices to study its restriction to $K_i$ for $i=1,2,3$.

\subsection{Multiplication Property}\label{subsec:mult}
It is immediate that the codes have the multiplication property by inheritance from the Reed-Solomon code. Recall that for $w,w'\in \F^N$ we define $w''=w\odot w'\in \F^N$ by coordinate-wise product: $w''[i] = w[i]\cdot w'[i]$.
\begin{lemma}\label{lem:mult}
    Suppose $C = C_{n,d_1,d_2,d_3}$ and $C' = C_{n,d'_1,d'_2,d'_3}$, then for every $w \in C$ and $w'\in C'$, we have $w\odot w' \in C'' = C_{n,d_1+d'_1,d_2+d'_2,d_3+d'_3}$. 
\end{lemma}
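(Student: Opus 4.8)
The plan is to prove this directly from the definition \eqref{eq:RMcode} of the code as a punctured lifted Reed-Solomon code, using the fact that the Reed-Solomon code has the multiplication property. Concretely, fix $w\in C$ and $w'\in C'$, viewed as functions $w,w':\bar X_n\to\F$. Set $w'' = w\odot w'$, so that $w''(\bar x) = w(\bar x)\cdot w'(\bar x)$ for every $\bar x\in\bar X_n$. To show $w''\in C''$ we must verify, for each $i\in\{1,2,3\}$ and each affine line $\lin\in\L_n^i$, that $w''\circ\lin\in RS(q,d_i+d'_i)$.

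The key step is the pointwise identity $(w''\circ\lin)(t) = (w\circ\lin)(t)\cdot(w'\circ\lin)(t)$ for all $t\in\F$, which is immediate since $\lin(t)\in\bar X_n$ and $\odot$ is coordinate-wise. By hypothesis $w\in C = C_{n,d_1,d_2,d_3}$, so $w\circ\lin\in RS(q,d_i)$; that is, there is a univariate polynomial $p$ of degree at most $d_i$ with $(w\circ\lin)(t) = p(t)$ for all $t\in\F_q$. Similarly $w'\circ\lin\in RS(q,d'_i)$ gives a polynomial $p'$ of degree at most $d'_i$ with $(w'\circ\lin)(t) = p'(t)$. Then the product polynomial $pp'$ has degree at most $d_i+d'_i$ and agrees with $w''\circ\lin$ on all of $\F_q$, so $w''\circ\lin\in RS(q,d_i+d'_i)$. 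Since $i$ and $\lin\in\L_n^i$ were arbitrary, $w''\in C_{n,d_1+d'_1,d_2+d'_2,d_3+d'_3}$, as desired.

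There is essentially no obstacle here; the only mild subtlety to flag is that a Reed-Solomon ``codeword'' is formally a tuple of evaluations, so one should be slightly careful that the degree-$d_i$ polynomial representing $w\circ\lin$ and the degree-$d'_i$ polynomial representing $w'\circ\lin$ are honest polynomials whose product (again a polynomial, of the summed degree) represents the product tuple. This is exactly the classical multiplication property of Reed-Solomon codes, and it holds without any constraint relating $d_i+d'_i$ to $q$ in terms of whether $C''$ is a ``good'' code — the containment $w\odot w'\in C_{n,d_1+d'_1,d_2+d'_2,d_3+d'_3}$ is valid for all parameters, even when the summed degrees exceed $q$ (in which case the target code may be trivial or all of $\F^{\bar X_n}$, but the statement is still formally true). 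Alternatively, one can phrase the same argument through \eqref{eq:Tannercode} and \eqref{eq:localcode-iso}: for each edge $e$ of type $i$, $w|_{\T e}$ and $w'|_{\T e}$ are degree-$d_i$ and degree-$d'_i$ functions of the coset parameter $\a$, hence $w''|_{\T e}$ is a degree-$(d_i+d'_i)$ function of $\a$, i.e.\ $w''|_{\T e}\in C_e''$; by Claim~\ref{claim:iso} the two formulations agree, so either route suffices.
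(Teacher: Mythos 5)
Your proof is correct and follows essentially the same argument as the paper, which works through the local views at each edge via \eqref{eq:Tannercode} and uses that the coordinate-wise product of Reed--Solomon codewords of degrees $d_i$ and $d'_i$ is a Reed--Solomon codeword of degree at most $d_i+d'_i$; your primary route via the lines formulation \eqref{eq:RMcode} and your noted alternative via \eqref{eq:localcode-iso} are both just this same one-line observation.
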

\begin{proof}
For every $e\in X(1)$ the local views of $w|_{\T e},w'|_{\T e}$ are Reed-Solomon codewords of degrees $d_i,d'_i$ (relying on the definition in \eqref{eq:Tannercode}). So the coordinate-wise product, $w''|_{\T e}$, is a Reed-Solomon codeword of degree at most $d_i+d'_i$, as needed.
\end{proof}

\subsection{Distance} \label{subsec:distance}
The global code $C$ can also easily be shown to have constant relative distance,
\begin{lemma}[Distance]\label{lemma:distance}
If the relative distance of $C_e$ is at least $\delta>0$ for every $e\in X(1)$ then $C$ has relative distance at least $(\delta-2\gamma)(\delta-\gamma)\delta$. 
\end{lemma}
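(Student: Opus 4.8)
The plan is to prove the distance bound in two stages: first show that most edges are ``good'' for any nonzero codeword, and then use local spectral expansion to chain this into a genuine density bound on the support of a nonzero codeword. Fix a nonzero $w\in C$ and let $R = \sett{e\in X(1)}{w|_{\T e}\neq 0}$ be the set of edges on which the local view is a nonzero Reed-Solomon codeword.

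First I would argue that $R$ is large. Since $w\neq 0$, there is a triangle $t$ with $w(t)\neq 0$, and each of the three edges of $t$ lies in $R$, so $R\neq\emptyset$. More importantly, I want to show that $R$ is closed under the upper walk in the following density sense: if $e\in R$ then a $\beta$-fraction of the edges $e'$ reachable from $e$ via $e\upperrw e'$ also lie in $R$ (for a suitable $\beta$). The point is that if $w|_{\T e}\neq 0$, then since $C_e$ has relative distance $\ge\delta$, at least a $\delta$-fraction of the triangles $t\supset e$ have $w(t)\neq 0$; each such $t$ contributes two edges $e'\neq e$ with $e'\subset t$, and for each such $e'$ we have $w(e') \neq 0$ hence $e'\in R$. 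So $\Pr_{e\upperrw e'}[e'\in R \mid e\in R] \ge \delta$, i.e. taking the notation of the remark after \cref{lemma:updown}, for the set $R$ we get $\Pr_{t, e\neq e'\subset t}[e,e'\in R] \ge \delta\Pr[R]$... wait, this needs care: I actually want a statement of the form "the induced upper-walk density on $R$ is at least $\delta$", and then apply \cref{lemma:updown} to conclude the lower-walk density on $R$ is at least $\delta-\gamma$, and then apply Alon--Chung (\cref{lemma:AC}) to the lower-walk graph on edges to conclude $\Pr[R]\ge (\delta-\gamma)-\gamma = \delta-2\gamma$ — hmm, but that only gives a bound on the number of edges, not triangles, and the exponents in the target $(\delta-2\gamma)(\delta-\gamma)\delta$ suggest a three-step argument. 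Let me restructure.

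The cleaner route: (i) Use the distance of the local codes to say that if $e\in R$, then at least a $\delta$-fraction of $t\supset e$ are in $\mathrm{supp}(w)$, so $\Pr_{e\upperrw e'}[e'\in R\mid e\in R]\ge\delta$. (ii) Apply \cref{lemma:updown} (its "in particular" corollary) to $R$: from $\Pr_{e\upperrw e'}[e\in R\mid e'\in R]\ge\delta$ deduce $\Pr_{e\lowerrw e'}[e\in R\mid e'\in R]\ge\delta-\gamma$. This says that the graph on $X(1)$ induced by the lower walk $UD$, restricted to $R$, has average (normalized) degree $\ge\delta-\gamma$. (iii) Now I need a spectral expansion statement for the lower walk $UD$ on edges, or alternatively pass to the graph $(X(0),X(1))$: the lower walk on edges has second eigenvalue bounded by something like $\gamma$ (it factors through vertices), so by Alon--Chung $\Pr[R]\ge(\delta-\gamma)-\gamma=\delta-2\gamma$. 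So $R$ occupies at least a $(\delta-2\gamma)$-fraction of edges. (iv) Finally, translate edge-density into triangle-density: each edge $e\in R$ has $\ge\delta q$ triangles of $\mathrm{supp}(w)$ on it, and each triangle is counted at most $3$ times... but summing $\sum_{e\in R}|\mathrm{supp}(w)\cap \T e| \le 3|\mathrm{supp}(w)|$, giving $|\mathrm{supp}(w)| \ge \frac{1}{3}\cdot \delta q\cdot |R| / (q) $... I need to be careful with normalization. Each edge is in exactly $q$ triangles and each triangle has $3$ edges, so $3|X(2)| = q|X(1)|$; then $\delta q |R| \le \sum_{e\in R}|\mathrm{supp}(w)\cap\T e|\le 3|\mathrm{supp}(w)|$ gives $|\mathrm{supp}(w)|\ge \frac{\delta q}{3}|R| = \frac{\delta q}{3}(\delta-2\gamma)|X(1)| = \frac{\delta q}{3}(\delta-2\gamma)\cdot\frac{3|X(2)|}{q} = \delta(\delta-2\gamma)|X(2)|$, i.e. relative distance $\ge\delta(\delta-2\gamma)$. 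That's missing one factor of $(\delta-\gamma)$ compared to the claim, which suggests the intended argument loses a bit more somewhere (perhaps the Alon--Chung / eigenvalue step is applied with bound $\gamma$ on the vertex walk but the edge lower walk has a weaker bound, or step (iii) itself needs two applications). I would reconcile this by being more conservative: apply \cref{lemma:updown}-corollary to get lower-walk density $\ge\delta-\gamma$ on $R$, note the lower walk $UD$ on edges has its nontrivial spectrum governed by the vertex graph so $\lambda_2(UD)\le\gamma$, invoke \cref{lemma:AC} to get $\Pr[R]\ge(\delta-\gamma)-\gamma$, and if instead one wants the stated $(\delta-2\gamma)(\delta-\gamma)\delta$ one keeps a cruder intermediate bound $\Pr[\mathrm{supp}(w) \text{ among triangles}] \ge (\text{something})$ via a direct Alon--Chung on the upper walk graph.

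The main obstacle I anticipate is getting the spectral bookkeeping exactly right so that the three factors $(\delta-2\gamma)$, $(\delta-\gamma)$, $\delta$ appear with the correct $\gamma$-shifts — in particular making precise the "induced subgraph has large average degree $\Rightarrow$ large $\Rightarrow$" chain across the two walks (upper and lower on edges) and the eigenvalue bounds $\gamma$ vs $3\gamma$ from \cref{lemma:updown} and \cref{lemma:sM}. The coding-theoretic content (local distance $\Rightarrow$ edge density) and the final edge-to-triangle double counting are routine; the expansion chain is where care is needed.
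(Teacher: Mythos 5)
Your overall strategy---local distance forces density, expansion propagates density, then count triangles---is the right one, and your final edge-to-triangle double count is fine. But there is a genuine gap at the step you yourself flag as shaky, and it is fatal as written. You claim in step (iii) that the lower walk $UD$ on edges ``factors through vertices'' and therefore has second eigenvalue at most $\gamma$, so that Alon--Chung applied to it yields $\Pr[R]\ge(\delta-\gamma)-\gamma$. This is false: $UD=U^0D^1$ is positive semidefinite and has the same nonzero spectrum as $D^1U^0=\tfrac12(I+A)$, where $A$ is the vertex adjacency walk on $(X(0),X(1))$; hence $\lambda_2(UD)=\tfrac12(1+\lambda_2(A))\ge\tfrac12$, regardless of how good an expander $X$ is. Alon--Chung applied to $UD$ therefore gives only $\Pr[R]\ge(\delta-\gamma)-\tfrac12$, which is vacuous for $\delta<1/2$ (the interesting regime). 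Knowing that the lower-walk density of $R$ is $\ge\delta-\gamma$ is equivalent to $\snorm{D\one_R}\ge(\delta-\gamma)\snorm{\one_R}$, i.e.\ a second-moment condition on the vertex function $g=D\one_R$, and this alone cannot lower-bound $\Pr[R]$: $R$ could a priori be all edges at a single vertex. The only edge-level operator in the paper with an $O(\gamma)$ spectral bound is the swap walk $S_{0,1}D$ of Lemma~\ref{lemma:sM}, and your argument does not establish density of $R$ under that walk.

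The paper closes this gap by descending to the links rather than staying at the edge level. It sets $V^*=\sett{v}{x|_{\T v}\neq 0}$ and, for $v\in V^*$, looks at $A_v\subseteq X_v(0)$, the link-vertices $u$ with $x|_{\T{uv}}\neq 0$. The local distance $\delta$ of $C_{uv}$ means each $u\in A_v$ has a $\delta$-fraction of its neighbors \emph{inside the link graph $X_v$} in $A_v$; since $X_v$ genuinely is a $\gamma$-expander, Alon--Chung there gives $|A_v|\ge(\delta-\gamma)|X_v(0)|$. Then, because $u\in A_v$ forces $u\in V^*$, a second application of Alon--Chung in the vertex graph $(X(0),X(1))$ (also a $\gamma$-expander) gives $|V^*|\ge(\delta-2\gamma)|X(0)|$, and the three factors multiply out as vertices $\to$ edges $\to$ triangles. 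Note that the ``missing factor of $(\delta-\gamma)$'' you worried about is a red herring---if your route worked it would prove a \emph{stronger} bound, which is harmless---the real problem is that the edge-level expansion you invoke does not exist. To repair your proof, replace step (iii) with the two link/vertex applications of Alon--Chung above.
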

\begin{proof}
Let $0\neq x\in C$. 
Let $V^* = \sett{v\in X(0)}{x|_{\T v} \neq 0}$ and let $v\in V^*$. We will first show that at least $(\delta-\gamma)$ of the edges touching $v$ are non-zero. (An edge $e$ is non zero iff $x(e)\neq 0$.)

Let $A_v = \sett{u\in X_v(0)}{x|_{\T {uv}}\neq 0}$. Each  $u\in A_v$ has $0\neq x|_{\T {uv}}\in C_{uv}$ so $x|_{\T {uv}}$ must have at least $\delta$ fraction of non-zero entries. Each of these non-zero entries corresponds to some vertex $w$ such that $uvw\in \T{}$ with $x(uvw)\neq 0$ so $w\in A_v$. We found that for each $u\in A_v \subset X_v(0)$, at least $\delta$ of its neighbors (inside $X_v$) are in $A_v$. Since the graph $X_v$ is a $\gamma$-expander, the Alon-Chung lemma (Lemma \ref{lemma:AC}) implies that $|A_v| \geq (\delta - \gamma)\card{ X_v(0)}$. 

Observe that every $u\in A_v$ must itself be in $V^*$, so each $v\in V^*$ has at least $\delta-\gamma$ fraction of its neighbors in $V^*$. We can again apply Lemma \ref{lemma:AC}, (now using the fact that the graph $(X(0),X(1))$ is a $\gamma$-expander, to deduce $|V^*| \geq (\delta-2\gamma)|X(0)|$. We have seen that each $v\in V^*$ has at least $\delta-\gamma$ fraction of non-zero edges touching it, so the total fraction of non-zero edges is at least  $(\delta-2\gamma)(\delta-\gamma)$. Each such edge touches at least $\delta$ non-zero triangles, so the total fraction of non-zero triangles is at least $(\delta-2\gamma)(\delta-\gamma)\delta$ as claimed. 
\end{proof}
This along with the fact that $q$ is constant and $\gamma = 1/\sqrt{q}$ from Claim~\ref{claim:eigenvalue} establishes the second item in \pref{thm:mainCode}. 

We now prove a generalization of the above distance lemma to higher dimensional HDX codes

\begin{lemma}[Distance of $k$-Dimensional HDX Code] \label{lemma:hd-distance} 
    Let $X$ be a $k$-dimensional $\gamma$-one-sided local expander, and assume that for every $t \in X(k-1)$ we have a code $C_t \subset \{ f : X_{+t}(k) \rightarrow \bbF \}$ with minimum relative distance $\ge \delta > 0$. Define, for every $-1 \le i \le k-2$ and every face $t\in X(i)$, the code \[C_t = \sett{f:\T t(k)\to\F}{f|_{\T t}\in C_t\;\forall  t\in \T t(k-1)}.\]
    Then for every $-1\leq i<k-2$ and every $t\in X(i)$ the code $C_t$ has relative distance $\prod_{j = 0}^{k-1-i} (\delta - j\gamma)$. In particular, the code $C = C_\varnothing$ on the entire complex has relative distance $\prod_{j = 0}^k (\delta - j\gamma)$. 
\end{lemma}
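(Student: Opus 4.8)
The plan is to prove this by downward induction on $i$, going from $i = k-2$ down to $i = -1$, with the base case essentially being the hypothesis on $C_t$ for $t \in X(k-1)$ together with one application of the distance argument at level $k-1$. The induction mirrors the two-step argument in the proof of Lemma~\ref{lemma:distance}: there, the $2$-dimensional case used the expansion of the links $X_v$ at dimension-$0$ faces and then the expansion of the global graph $(X(0),X(1))$. Here, each step of the induction should ``peel off'' one dimension, using the local-spectral-expansion of the link $X_t$ and the Alon-Chung lemma (Lemma~\ref{lemma:AC}).

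Concretely, fix $-1 \le i < k-2$ and $t \in X(i)$, and let $0 \neq x \in C_t \subseteq \F^{\T t(k)}$. The key quantity to control is the fraction of $k$-faces $s \supseteq t$ with $x(s) \neq 0$. I would define, for a face $r$ with $t \subseteq r$, the set $V^*_r$ of vertices $v \in X_r(0)$ such that $x$ restricted to $\T {r \cup v}(k)$ is nonzero. The heart of one inductive step: take $t' \in X(i+1)$ with $t \subseteq t'$; by the inductive hypothesis (applied in the link $X_{t''}$ for appropriate sub-faces, or directly to the code $C_{t'}$), $x|_{\T {t'}(k)}$, when nonzero, has relative weight at least $\prod_{j=0}^{k-2-i}(\delta - j\gamma)$. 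Then one shows that in the link graph $X_t = (X_t(0), X_t(1))$ — which is a $\gamma$-one-sided expander — the set of vertices $v$ for which $x|_{\T {t \cup v}(k)} \neq 0$ has the property that each such vertex has a $\big(\prod_{j=0}^{k-2-i}(\delta-j\gamma) - \text{(something)}\big)$-fraction of its neighbors also in the set, so Alon-Chung forces this set to be large, namely at least $\big(\prod_{j=0}^{k-2-i}(\delta-j\gamma) - \gamma\big) \cdot |X_t(0)|$. Combining the fraction of good vertices with the per-vertex weight (itself lower-bounded by the inductive hypothesis on one-dimension-lower links) gives a total fraction of nonzero $k$-faces of at least $\prod_{j=0}^{k-1-i}(\delta - j\gamma)$, closing the induction. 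The factor count works out because each application of Alon-Chung subtracts one more $\gamma$: at level $i$ we have $k-1-i+1 = k-i$ factors $(\delta - 0\gamma),\dots,(\delta-(k-1-i)\gamma)$, matching the product $\prod_{j=0}^{k-1-i}(\delta - j\gamma)$.

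I would be careful with one subtlety present already in Lemma~\ref{lemma:distance}: the argument there actually bootstraps within a single dimension, first deducing that each nonzero vertex-link $A_v$ is large, and then deducing that $V^*$ (the set of nonzero vertices) is large, using two different graphs. In the $k$-dimensional setting the cleanest organization is probably: (i) show that if $x|_{\T {t'}(k)} \neq 0$ for some $t' \in X(j)$ with $t \subseteq t'$, $j \geq i+1$, then $x|_{\T t(k)}$ restricted to an appropriate sub-link is nonzero — i.e., nonzeroness propagates upward through subfaces — and (ii) run Alon-Chung once per dimension from $k-1$ down to $i+1$ inside successively larger links, each time using that the relevant link graph is a $\gamma$-expander (which holds because $X$ is a $\gamma$-one-sided local spectral expander, so links of links are also $\gamma$-expanders). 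Then the relative distance of $C_t$ telescopes as claimed, and setting $i = -1$ and $t = \varnothing$ gives the final statement with $k - 1 - (-1) = k$, i.e. the product $\prod_{j=0}^k (\delta - j\gamma)$.

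The main obstacle I anticipate is bookkeeping the indices and the nested links correctly — in particular making sure that at each peeling step the graph to which Alon-Chung is applied really is the link $X_r$ of some face $r$ (so that it is genuinely a $\gamma$-expander by the local-spectral-expansion hypothesis), and that the ``each nonzero vertex has many nonzero neighbors in the link'' claim is justified by the \emph{inductive} distance bound on one-dimension-lower links rather than circularly. A secondary point to get right is the condition $k - 1 - i \le$ (something) ensuring all the factors $\delta - j\gamma$ that appear stay positive, which is what makes the relative distance bound non-vacuous; this should be recorded but is exactly analogous to the requirement $\delta > 2\gamma$ in the two-dimensional case and to the fact that $\gamma = 1/\sqrt q$ is small when $q$ is large.
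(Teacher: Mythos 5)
Your plan is the paper's own proof: downward induction on $i$, with the set $A_s=\sett{u\in X_s(0)}{x|_{\T{s\cup\set u}}\neq 0}$, one application of Alon--Chung (Lemma~\ref{lemma:AC}) per dimension inside the link $X_t$, and a final multiplication of per-level fractions. There is one quantitative slip, though: as written, you feed the \emph{distance} product $\prod_{j=0}^{k-2-i}(\delta-j\gamma)$ into Alon--Chung as the fraction of neighbors of a good vertex lying in the good set, conclude that the good set has density at least that product minus $\gamma$, and then multiply by the per-vertex weight; the result is $\bigl(\prod_{j=0}^{k-2-i}(\delta-j\gamma)-\gamma\bigr)\cdot\prod_{j=0}^{k-2-i}(\delta-j\gamma)$, which does not telescope to $\prod_{j=0}^{k-1-i}(\delta-j\gamma)$. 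The fix --- which is exactly what the paper does, and which you gesture at in your ``subtlety'' paragraph --- is to carry a \emph{second} inductive invariant: whenever $x|_{\T s}\neq 0$ for $s\in X(i)$, the set $A_s$ has density at least $\delta-(k-1-i)\gamma$ in $X_s(0)$, a single linear term rather than a product. It is this density bound one level up, namely $|A_{t\cup\set v}|/|X_{t\cup\set v}(0)|\ge\delta-(k-2-i)\gamma$, that lower-bounds the fraction of good neighbors of $v$ in $X_t$ and is fed into Alon--Chung to get $|A_t|\ge(\delta-(k-1-i)\gamma)|X_t(0)|$; multiplying this one new linear factor by the inductive distance product $\prod_{j=0}^{k-2-i}(\delta-j\gamma)$ then telescopes correctly. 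With that bookkeeping correction, your argument coincides with the paper's.
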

Before giving the proof we remark that the distance in the lemma decays exponentially with the dimension, assuming that $\gamma \ll \delta$. This is necessary, as can be seen from the example the $(k+1)$-dimensional tensor code $C^{\otimes k+1}$, for $C$ any code with distance $\delta$. This code has distance $\delta^{k+1}$ and it can be viewed as an HDX code on the complete $k+1$-partite $k$-dimensional complex\footnote{There is a natural identification of $[n]^{k+1}$ with the faces of this complex. The link of every $k-1$ face is identified with a row in the appropriate direction.}. 
\begin{proof}
    Let $0 \not= x \in C_{\varnothing}$. For all $-1 \le i \le k-1$ and $s \in X(i)$ we define $A_s = \{ u \in X_s(0) ~|~ x|_{X_{+(s \cup \{ u \})}} \not= 0 \}$. We claim that if $x|_{X_{+s}} \not= 0$, then $|A_s| \ge \left( \delta - (k-1-i) \gamma \right) |X_s(0)|$, and furthermore that $C_s$ has relative distance $\prod_{j = 0}^{k-1-i} (\delta - j \gamma)$.

    To see this, we proceed by (downwards) induction on $i$. This is clearly the case for $i = k-1$. Now for $i < k-1$, let $t \in X(i)$. For $v \in X_t(0)$ such that $t \cup \{ v \} \in A_t$, we have that each $u \in A_{t \cup \{v\}}$ has $0 \not= x|_{X_{+(t \cup \{ u,v \})}} \in C_{t \cup \{ u,v \}}$ so $t \cup \{ u \} \in A_t$ as well. By the inductive hypothesis, we have that $|A_{t \cup \{ v \}}| \ge (\delta - (k-2-i)\gamma) |X_{t \cup \{ v\}}(0)|$ for all $v \in V^*_t$. Since $X_t$ is a $\gamma$-expander, the Alon-Chung lemma (Lemma~\ref{lemma:AC}) implies that $|A_t| \ge (\delta - (k-2-i)\gamma - \gamma) |X_t(0)| = (\delta - (k-1-i)\gamma) |X_t(0)|$. 

    Now, to compute the distance, we have that each $t \in A_v$ has a $\delta - (k-1-i)\gamma$ fraction of non-zero $(i+1)$-faces touching it, each of which has a $\prod_{j=0}^{k-2-i} (\delta - j\gamma)$ fraction of $k$-faces touching it, so the total fraction of non-zero $k$-faces in $X_{+t}$ is at least $\prod_{j=0}^{k-1-i} (\delta - j\gamma)$.
\end{proof}

\subsection{Local Code at a Vertex}
For each $v\in X(0)$, let
\begin{equation}
    C_v = \sett{f\in \F^{\T v}}{\forall e\ni v,\;f|_{\T e}\in C_e}.
\end{equation}
It is easy to see that our code can be written as  
\[ C =  \sett{f\in \F^{X(2)}}{\forall v,\;f|_{\T v}\in C_v}\] because we are simply aggregating the constraints differently than in \eqref{eq:Tannercode}.

\inote{ ...removed text from here.. }
\remove{Recall from Definition~\eqref{eq:Tannercode} $C^2 = \calC(X,\set{C_e})$ consists of all $f\in \F^{X(2)}$ such that $f|_{\T e}\in C_e$. 
We now define the localization of this code to a link of a vertex $v\in X(0)$. 
For a vertex $v\in X(0)$ let $\psi=\psi_v$ be a localization operator, $\psi f\in \F^{X_v(1)}$ is defined by $\psi f(uw) := f(uvw)$.  
Let
\begin{equation}
    C_v = \sett{f\in \F^{X_v(1)}}{\forall u\in X_v(0),\;f|_{E_u}\in \psi(C_{uv})}.
\end{equation} where $E_u$ are the edges of $X_v(1)$ that contain $u$.
Then
\begin{equation}
    C_v = \sett{f\in \F^{\T v}}{\forall e\ni v,\;f|_{\T e}\in C_e}.
\end{equation} }
What does $C_v$ look like when moving to $\bar X_n$?
\begin{lemma}
\label{lemma:loc-iso}
Fix $v\in X(0)$ a vertex of type $i$. \begin{itemize}
    \item $\iota(\T v)$ is a $3$ dimensional affine subspace in $\bar X_n$.
    \item The code $C_v\subset \F^{\T v}$ is isomorphic to $C_{d_x,d_y}$ for $d_x=d_{i+1}$, $d_y=d_{i-1}$, where
    we define $C_{d_x,d_y}$ by 
\begin{equation}\label{eq:def:vertexcode}
    C_{d_x,d_y} = \sett{f:\F_q^3\to \F_q}{ \forall a,b,c,\,\, deg_x(f(x,b,c))\leq d_x; \,\, deg_y (f(a,y,ay+c)) \leq d_y}. 
\end{equation} 
\end{itemize}
\end{lemma}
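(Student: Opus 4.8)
\textbf{Proof plan for Lemma~\ref{lemma:loc-iso}.}

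The plan is to unwind the coset-complex description of the link of a vertex $v$ of type $i$ and combine it with the explicit line-embedding of Claim~\ref{claim:line-embed}. By transitivity (Claim~\ref{clm:trans}) it suffices to treat $v = K_i$ for each $i$; by relabeling indices cyclically it is enough to do $i=1$, i.e.\ $v = K_1$, and the other two cases follow by the same computation after a cyclic permutation of coordinates and of the degree parameters $(d_1,d_2,d_3)$. So first I would set $v = K_1$. The triangles containing $v$ are in bijection with $K_1$ itself (Claim~\ref{claim:coset-complex}, item 1), and a generic element of $K_1$ is the matrix with entries $at, bt, ct^2$ in the three free positions, so I would parameterize $\T v \cong K_1$ by the triple $(a,b,c)\in\F_q^3$. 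This already gives a bijection $\T v \to \F_q^3$; I then need to check two things against this parameterization: (1) that $\iota$ sends $\T v$ to an affine $3$-dimensional subspace, and (2) that under the bijection $\T v\cong\F_q^3$ the code $C_v$ becomes exactly $C_{d_x,d_y}$ with $d_x = d_2$, $d_y=d_3$.

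For the first point, observe that $\iota$ restricted to $K_1$ is, in the coordinates $(a,b,c)$, literally an affine-linear map into $\F_q^{9n}$: the matrix entry $at$ contributes the vector $a\cdot\iota(t\cdot e_{12})$, the entry $bt$ contributes $b\cdot\iota(t\cdot e_{23})$, the entry $ct^2$ contributes $c\cdot\iota(t^2 e_{13})$, and the diagonal $1$'s contribute a fixed vector; since $\iota$ is linear in the coefficients of the matrix entries (stated just after \eqref{eq:embedding}) and $t, t^2$ are nonzero in $R_n$, the three direction vectors are nonzero, and moreover linearly independent because they live in disjoint blocks of coordinates of $\F_q^{9n}$ (the $(1,2)$-, $(2,3)$-, and $(1,3)$-blocks). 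Hence $\iota(\T v)$ is a genuine $3$-dimensional affine subspace. I would present this as a one-paragraph direct computation.

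For the second point — which I expect to be the only real content — I need to identify the local codes $C_e$ at the three types of edges through $v = K_1$, pulled back to the $(a,b,c)$ coordinates. An edge $e\ni v$ of type $j$ ($j\ne 1$) corresponds to a coset of $H_k$ (the third index), and $\T e$ is the line $\{gh_k(\alpha)\}_\alpha$ inside $\T v$; by Claim~\ref{claim:line-embed} this line, in the $(a,b,c)$ coordinates, is an affine line whose direction is determined by which of $H_2$ or $H_3$ we are moving along. Concretely: moving along $H_3$ (i.e.\ $e_{23}(\alpha t)$, changing the $bt$-entry) fixes $a$ and $c$ and varies $b$ linearly, so these lines are the "$x$-lines'' $\{(a,b,c):b\text{ free}\}$ once we rename $(b,a,c)\mapsto(x,\cdot,\cdot)$ appropriately — wait, more carefully I must track that multiplying $g\in K_1$ on the right by $h_k(\alpha)$ changes one free entry \emph{and possibly another} because of the off-diagonal structure. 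This is exactly where the relation $c = \alpha b + \gamma$ from \eqref{eq:edge-in-G1} enters: one family of lines through $v$ has $(b,c)$ fixed and $a$ varying (these force the degree-$d_2$ condition, giving $\deg_x f(x,b,c)\le d_x$ after renaming $a\to x$), while the other family has $a$ fixed and $(b,c)$ varying along $c = ab + \text{const}$, i.e.\ $b\to y$, $c\to ay+c_0$, giving $\deg_y f(a,y,ay+c_0)\le d_y$. Matching the two families of lines to the two monomial-degree constraints in \eqref{eq:def:vertexcode}, and checking that the Reed–Solomon degree along each line is $d_2$ resp.\ $d_3$ via \eqref{eq:localcode-iso}, exactly reproduces the defining constraints of $C_{d_x,d_y}$ with $d_x = d_2$, $d_y = d_3$. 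The main obstacle is purely bookkeeping: getting the right-multiplication by $H_2, H_3$ on a $K_1$-representative correct (including how it shifts the $ct^2$-entry), and then reading off that the induced reparameterization of one coordinate is affine \emph{in the other coordinate}, which is what produces the slanted substitution $ay+c$ rather than a product structure. Once the correspondence of line-families is pinned down, the isomorphism $C_v\cong C_{d_x,d_y}$ is immediate from comparing definitions.
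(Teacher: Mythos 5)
Your proposal is correct and follows essentially the same route as the paper: parameterize $\T v\cong K_1$ by $(a,b,c)$, note that $\iota$ is affine-linear in these coordinates with independent direction vectors, and read off the two degree constraints from the cosets of $H_2$ (varying $a$ with $(b,c)$ fixed, giving $\deg_x\le d_2$) and of $H_3$ (varying $b$ while shifting the $(1,3)$-entry to $(ay+c)t^2$, giving the skew constraint $\deg_y f(a,y,ay+c)\le d_3$). The one place you diverge is in handling $K_2$ and $K_3$ by cyclic relabeling rather than by the paper's explicit matrix computations; this works, but deserves a sentence verifying that conjugation by the $3$-cycle permutation matrix is an automorphism of $G$ permuting $H_1,H_2,H_3$ (hence $K_1,K_2,K_3$) cyclically and compatibly with the embedding, and note also that for a general coset $gK_1$ the two direction vectors $\iota(0,0,tg_2)$ and $\iota(0,0,t^2g_1)$ land in the \emph{same} column block, so their independence needs the invertibility of $g$ (or the linearity of the left $G$-action on $\F^{9n}$) rather than the disjoint-blocks argument you give for $g=\mathrm{id}$.
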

\begin{proof}
    Fix first $v = gK_1=K_1$ given by $g=id$. The elements of $\iota(K_1)$ are 
    \[\sett{\iota\left(\begin{pmatrix}
        1 & at & ct^2 \\
        0 & 1 & bt \\
        0 & 0 & 1
    \end{pmatrix}\right)}{a,b,c\in \F}\] and when we range over all possible choices of $a,b,c$ we get an $\F$-linear subspace.  If $v=gK_1$ for some $g\not\in K_1$, every element becomes
    \[\begin{pmatrix}
        | & | & | \\
        g_1 & g_2 & g_3 \\
        | & | & |
    \end{pmatrix}\cdot \begin{pmatrix}
        1 & at & ct^2 \\
        0 & 1 & bt \\
        0 & 0 & 1
    \end{pmatrix}
    = \begin{pmatrix}
        | & | & | \\
        g_1 & atg_1+g_2 & ct^2g_1+btg_2+g_3 \\
        | & | & |
    \end{pmatrix}\]
which after embedding into $\F^n$ becomes $\iota(g_1,g_2,g_3) + a\iota(0,tg_1,0) + b\iota(0,0,tg_2) + c\iota(0,0,t^2g_1)$. This is a $3$ dimensional affine subspace.
A similar proof applies to vertices of type $2,3$.

For the second item, we focus again on $v=K_1$. The code $C_v$ consists of all $f\in \F^{\T v}$ that, for each $e\ni v$, satisfy $f|_{\T e}\in C_e$. We identify functions on $\T v$ with functions on $\F^3$ through the isomorphism $\F^3\to K_1$ given by $(a,b,c) \mapsto \begin{pmatrix}
        1 & at & ct^2 \\
        0 & 1 & bt \\
        0 & 0 & 1
    \end{pmatrix}$. An edge $e\ni v$ corresponds to a coset of $H_2$ or $H_3$ in $K_1$, say $gH_2$, given by a coset representative $g= \begin{pmatrix}
        1 & 0 & ct^2 \\
        0 & 1 & bt \\
        0 & 0 & 1
    \end{pmatrix}$. The group elements of the coset are $gh_2(x)$ for all $x\in \F$,
    \begin{align*}
    \sett{\begin{pmatrix}
        1 & 0 & ct^2 \\
        0 & 1 & bt \\
        0 & 0 & 1
    \end{pmatrix} 
    \begin{pmatrix}
        1 & xt & 0 \\
        0 & 1 & 0 \\
        0 & 0 & 1
    \end{pmatrix}
    = 
    \begin{pmatrix}
        1 & xt & ct^2 \\
        0 & 1 & bt \\
        0 & 0 & 1
    \end{pmatrix}}{x\in \F_q},
    \end{align*}
    each corresponding to an triangle of $\T e$. So, by definition of $C_e$ (see \eqref{eq:localcode-iso}) the constraint $f|_{\T e}\in C_e$ translates to $f(gh_2(x))$ having degree $d_2$ in $x$. In other words, $f(x,b,c)$ must have degree at most $d_2$ in $x$.

Similarly, suppose the edge $e$ is $gH_3$ for $g= \begin{pmatrix}
    1 & at & ct^2 \\
    0 & 1 & 0 \\
    0 & 0 & 1
\end{pmatrix}$. The group elements of the coset are $gh_3(y)$ for all $y\in \F$, where 
\begin{align*}
\sett{    \begin{pmatrix}
        1 & at & ct^2 \\
        0 & 1 & 0 \\
        0 & 0 & 1
    \end{pmatrix} 
    \begin{pmatrix}
        1 & 0 & 0 \\
        0 & 1 & yt \\
        0 & 0 & 1
    \end{pmatrix}
    = 
    \begin{pmatrix}
        1 & at & (ay+c)t^2 \\
        0 & 1 & yt \\
        0 & 0 & 1
    \end{pmatrix}}{y\in \F_q}.
\end{align*}
The constraint $f|_{\T e}\in C_e$ translates to requiring that $f(a,y,ay+c)$ have degree is at most $d_3$ in $y$.

It is now clear that $C_v$ is isomorphic to $C_{d_2,d_3}$ when $v=K_1$. 
The same also holds for any $v=gK_1$ since by Claim \ref{clm:trans} the code is invariant under the action of $G$. This implies that $C_v \cong C_{v'}$, for any $v,v'$ of the same color (since the group action moves any $K_i$ coset to any other $K_i$ coset, it is thus transitive on each color class).

To complete the proof we check that for $v=K_2$ we have $C_v \cong C_{d_3,d_1}$ and for $v=K_3$  $C_v\cong C_{d_1,d_2}$.

Let us start with $v=K_2$. Our requirement is that $f:K_2\to\F_q$ evaluates to a degree $d_1$ polynomial on cosets of $H_1<K_2$ and a degree $d_3$ polynomial on cosets of $H_3<K_2$.
Fix some $g=\begin{pmatrix}
        1 & 0 & 0 \\
        ct^2 & 1 & at \\
        bt & 0 & 1
\end{pmatrix} \in K_2$. For any element $g_x = \begin{pmatrix}
        1 & 0 & 0 \\
        0 & 1 & xt \\
        0 & 0 & 1
\end{pmatrix} \in H_3$, and $g_y = \begin{pmatrix}
        1 & 0 & 0 \\
        0 & 1 & 0 \\
        yt & 0 & 1
\end{pmatrix} \in H_1$, we have 
\[ gg_x = \begin{pmatrix}
        1 & 0 & 0 \\
        ct^2 & 1 & (a+x)t \\
        bt & 0 & 1    
\end{pmatrix},\qquad gg_y = \begin{pmatrix}
        1 & 0 & 0 \\
        (c+ay)t^2 & 1 & at \\
        (b+y)t & 0 & 1    
\end{pmatrix}.\] 
Writing now $f(a,b,c) = f\left(\begin{pmatrix}
        1 & 0 & 0 \\
        ct^2 & 1 & at \\
        bt & 0 & 1
\end{pmatrix} \right)$, we require that for all $a,b,c$, \begin{itemize}
    \item $f(a+x,b,c)$ must have degree $d_3$ in $x$; and 
    \item $f(a,b+y,c+ay)$ must have degree at most $d_1$ in $y$.
\end{itemize}
This is clearly equivalent to requiring
\begin{itemize}
    \item $f(x,b,c)$ must have degree $d_3$ in $x$ for all $b,c$; and 
    \item $f(a,y,c+ay)$ must have degree at most $d_1$ in $y$ for all $a,c$.
\end{itemize} Namely, it is equivalent to requiring $f\in C_{d_3,d_1}$

Finally, for $v=K_3$, we  
fix some $g=\begin{pmatrix}
        1 & at & 0 \\
        0 & 1 & 0 \\
        bt & ct^2 & 1
\end{pmatrix} \in K_3$. For any element $g_x = \begin{pmatrix}
        1 & 0 & 0 \\
        0 & 1 & 0 \\
        xt & 0 & 1
\end{pmatrix} \in H_1$, and $g_y = \begin{pmatrix}
        1 & yt & 0 \\
        0 & 1 & 0 \\
        0 & 0 & 1
\end{pmatrix} \in H_2$, we have 
\[ gg_x = \begin{pmatrix}
        1 & at & 0 \\
        0 & 1 & 0 \\
        (b+x)t & ct^2 & 1    
\end{pmatrix},\qquad gg_y = \begin{pmatrix}
        1 & (y+a)t & 0 \\
        0 & 1 & 0 \\
        bt & (yb+c)t^2 & 1    
\end{pmatrix}.\] 
Writing now $f(a,b,c) = f\left(\begin{pmatrix}
        1 & at & 0 \\
        0 & 1 & 0 \\
        bt & ct^2 & 1
\end{pmatrix}\right)$, we require that for all $a,b,c$, \begin{itemize}
    \item $f(a,b+x,c)$ must have degree $d_1$ in $x$; and 
    \item $f(a+y,b,c+by)$ must have degree at most $d_2$ in $y$.
\end{itemize}
This is clearly equivalent to requiring that $f\in C_{d_1,d_2}$.

Moving to cosets $g K_i$, by the fact that the code is invariant under the group action, for any coset $gK_i$, the local code is isomorphic to $C_{d_{i+1},d_{i-1}}$ 
\end{proof}

\remove{
\begin{definition}[Local code at a vertex]\label{def:localcode}
We define a code $C_{q,d_x,d_y}\subseteq \set{ f:\F_q^3\to\F_q}$ by allowing a function $f(x,y,z)$  in the code if and only if
\begin{itemize}
    \item For every $b,c\in \F_q$, the function $f(x,b,c)$ has degree at most $d_x$ as a polynomial in $x$.
    \item For every $a,c\in \F_q$ the function $f(a,y,ay+c)$ has degree at most $d_y$ as a polynomial in $y$.
\end{itemize}
Formally, \[
C_{q,d_x,d_y} =\sett{f:\F_q^3\to\F_q}{\forall a,b,c\in \F_q,\quad deg_x(f(x,b,c))\leq d_x, \; deg_y(f(a,y,ay+c))\leq d_y}.
\]
\end{definition}}


\section{Rate}\label{sec:rate}
\subsection{Rate of Global Code} \label{sec:rate-global}
We analyze the rate of our code in two regimes. The first, is when the relative rate of the local codes $C_e$ is at least $2/3$. In this case a standard constraint counting implies that the global code has constant relative rate.
\begin{lemma}
    Suppose $\dim(C_e) > (2/3+\epsilon)q$ for each $e\in X(1)$. Then $\dim(C) > 3\epsilon$.
\end{lemma}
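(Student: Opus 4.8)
The plan is to use a straightforward constraint-counting (dimension-counting) argument, exploiting the description of the code as an HDX/Tanner code via Equation \eqref{eq:Tannercode}. Recall that $C = \sett{f:X_n(2)\to\F}{\forall e\in X(1),\; f|_{\T e}\in C_e}$, so $C$ is the intersection of the kernels of the parity checks coming from each local code $C_e$. The number of ambient degrees of freedom is $\card{X_n(2)}$, one per triangle. For each edge $e\in X(1)$, the constraint $f|_{\T e}\in C_e$ is a linear condition on the $\card{\T e} = q$ coordinates indexed by triangles containing $e$; since $\dim(C_e) > (2/3+\epsilon)q$, the number of independent linear constraints imposed by $C_e$ is $\card{\T e} - \dim(C_e) < q - (2/3+\epsilon)q = (1/3 - \epsilon)q$.

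First I would count the total number of linear constraints: summing over all edges $e\in X(1)$, the total is at most $\card{X_n(1)} \cdot (1/3-\epsilon)q$. Next I would relate $\card{X_n(1)}$ to $\card{X_n(2)}$ using the combinatorics of the complex from Claim \ref{claim:coset-complex}: each triangle has exactly $3$ edges, and each edge is contained in exactly $q$ triangles (that is, $\card{\T e} = \card{H_k} = q$). Double-counting incidences between edges and triangles gives $3\card{X_n(2)} = q\card{X_n(1)}$, i.e. $\card{X_n(1)} = \tfrac{3}{q}\card{X_n(2)}$. Substituting, the total number of constraints is at most $\tfrac{3}{q}\card{X_n(2)} \cdot (1/3-\epsilon)q = (1 - 3\epsilon)\card{X_n(2)}$.

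Then I would conclude: since $C$ is the solution space of a homogeneous linear system in $\card{X_n(2)}$ variables with at most $(1-3\epsilon)\card{X_n(2)}$ constraints, $\dim(C) \geq \card{X_n(2)} - (1-3\epsilon)\card{X_n(2)} = 3\epsilon\card{X_n(2)}$, so the relative dimension $\dim(C)/\card{X_n(2)} > 3\epsilon$ (interpreting $\dim(C)$ in the lemma statement as the relative dimension/rate, consistent with the usage elsewhere in the paper).

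There is essentially no serious obstacle here — this is the ``constraint counting'' argument referred to in the introduction, and the only thing to be careful about is getting the incidence count $3\card{X_n(2)} = q\card{X_n(1)}$ right and being consistent about whether $\dim$ denotes absolute or relative dimension. The one genuinely content-bearing input is the combinatorial fact, already established in Claim \ref{claim:coset-complex}, that every edge lies in exactly $q$ triangles; everything else is bookkeeping. I would state the incidence identity explicitly as a one-line sub-step so the reader can follow the substitution.
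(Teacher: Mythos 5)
Your proof is correct and is exactly the ``standard constraint counting'' the paper invokes (the paper states this lemma without writing out the argument). The incidence identity $3\card{X_n(2)} = q\card{X_n(1)}$ and the resulting bound of strictly fewer than $(1-3\epsilon)\card{X_n(2)}$ constraints give $\dim(C) > 3\epsilon\card{X_n(2)}$, matching the lemma once $\dim(C)$ is read as relative dimension, as you note.
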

The second parameter regime is when the relative rate of the local codes $C_e$ is arbitrarily small. In this case we give a non-trivial lower bound by demonstrating a collection of linearly independent codewords. These are 
\[ C' = \sett{f|_S}{f:\F^{9m}\to\F\hbox { has degree }\leq d}\] where $d = \min(d_1,d_2,d_3)$ and $S\subset \F^{9m}$ is the image of $G$ when embedded into the vector space, see \eqref{eq:embedding}.
\begin{lemma}
    $\dim(C') \geq \dim(RM_d^{3m})$. If we choose $|\bbF|\approx poly(m)$ we get polynomial rate.
\end{lemma}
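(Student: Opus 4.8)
The plan is to prove the two assertions in turn: first that $\dim(C)\ge\dim(C')$ because $C'\subseteq C$, then the main bound $\dim(C')\ge\dim(RM_d^{3m})$; the polynomial-rate remark then follows from a short size computation.

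For $C'\subseteq C$: if $f:\F^{9m}\to\F$ has total degree $\le d$ and $\ell(t)=a_0+ta_1$ is any affine line, then $t\mapsto f(\ell(t))$ is a univariate polynomial of degree $\le d<q$, hence lies in $RS(q,d)\subseteq RS(q,d_i)$ for every $i$. By the description \eqref{eq:RMcode} of $C$ as a lifted Reed--Solomon code this gives $f|_{S}\in C$, so $C'\subseteq C$ and $\dim(C)\ge\dim(C')$.

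For the main bound I would exhibit an injective linear map $\Phi:RM_d^{3m}\hookrightarrow C'$. Let $\pi:\F^{9m}\to\F^{3m}$ be the projection onto the first $3m$ coordinates, which under the embedding \eqref{eq:embedding} record the coefficient vectors of the three first-row entries $r_{11},r_{12},r_{13}$ of a matrix $g=(r_{ij})\in G$. For $p\in RM_d^{3m}$ put $\Phi(p)=(p\circ\pi)|_{S}$; since $p\circ\pi$ is a polynomial in the $9m$ ambient coordinates of total degree $\le d$, we have $\Phi(p)\in C'$, and $\Phi$ is linear. For injectivity, observe that $\pi(S)=\pi(\iota(G))$ is precisely the set of coefficient vectors of first rows of matrices in $G$. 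By Claim~\ref{claim:coset-complex}(4), $G=SL_3(R_m)$ over the field $R_m=\F_{q^m}$, and over a field every nonzero $v\in R_m^3$ is the first row of some determinant-$1$ matrix (complete $v$ to a basis and rescale the last basis vector), while a matrix with a zero row is singular; hence $\pi(S)=\F_q^{3m}\setminus\{0\}$. If $\Phi(p)=0$ then $p$ vanishes on all $q^{3m}-1$ points of $\F_q^{3m}\setminus\{0\}$; but a nonzero polynomial of total degree $\le d<q$ over $\F_q$ vanishes on at most $(d/q)\,q^{3m}<q^{3m}-1$ points by Schwartz--Zippel --- a contradiction. So $p=0$, $\Phi$ is injective, and $\dim(C')\ge\dim(RM_d^{3m})=\binom{3m+d}{d}$ (the equality uses $d<q$, so the degree-$\le d$ monomials in $3m$ variables are linearly independent as functions).

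For the polynomial-rate remark, the block length is $|S|=|X_n(2)|=|SL_3(\F_{q^m})|=\Theta(q^{8m})$, so $\log_q|S|=\Theta(m)$. Taking $q=|\F|$ to be a sufficiently large fixed polynomial in $m$ and $d=\Theta(q)$, one gets $\dim(C)\ge\binom{3m+d}{3m}\ge(d/3m)^{3m}=q^{\Theta(m)}=|S|^{\Omega(1)}$, i.e.\ polynomial rate. The only step with real content is the injectivity of $\Phi$: one cannot hope that the full restriction map $RM_d^{9m}\to\F^{S}$ is injective (its domain has dimension $\binom{9m+d}{d}$, far larger than $|S|\approx q^{8m}$ once $d$ is a constant fraction of $q$), so it is essential both to restrict to polynomials in few variables and to use the structural input that $S$, being the image of $SL_3(R_m)$, already surjects onto $R_m^3\setminus\{0\}$ in the first-row coordinates; everything else is routine Schwartz--Zippel and binomial estimates.
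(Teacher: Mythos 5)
Your proof is correct, and both the inclusion $C'\subseteq C$ (restriction of a degree-$\le d$ polynomial to an affine line lies in $RS(q,d)\subseteq RS(q,d_i)$) and the final rate computation are fine. However, your key injectivity step takes a genuinely different route from the paper. The paper observes that the unipotent upper-triangular matrices form a subset of $S$ that is \emph{exactly} a copy of $\F^{3m}$ (the three strictly upper-triangular entries, each ranging over all of $R_m\cong\F_q^m$), so any polynomial depending only on those $3m$ coordinates restricts injectively to $S$ for free --- no counting argument needed. You instead project onto the first row, show via $G=SL_3(R_m)$ that the image of $S$ is $\F_q^{3m}\setminus\{0\}$, and then invoke Schwartz--Zippel to rule out a nonzero degree-$\le d$ polynomial vanishing there. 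Both arguments rest on the same structural input ($G=SL_3(R_m)$, from Claim~\ref{claim:coset-complex}), and both yield the same bound $\binom{3m+d}{d}$; the paper's choice of coordinates makes the injectivity tautological, while yours trades that for a (correct, since $d\le q-1$ gives $dq^{3m-1}<q^{3m}-1$) Schwartz--Zippel estimate. Your closing remark about why one cannot simply restrict all of $RM_d^{9m}$ is a worthwhile observation that the paper leaves implicit.
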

\begin{proof}
    Consider the upper triangular matrices with $1$ on the diagonal. This set of matrices belongs to $S$ and is isomorphic to $\F^{3m}$, so any polynomial in $9m$ variables that depends only on these $3m$ variables and has total degree at most $d$ gives rise to a distinct codeword in $C'$.
\end{proof}
An alternative way to bound the rate is as follows. If the ring $R$ is a field (which by Claim~\ref{claim:coset-complex} is true whenever $\varphi$ is a primitive polynomial), then the fraction of matrices with determinant $0$ is about $1/|R|$, which is tiny. Moreover, let
$S_r=\sett{m \in M_{3\times 3}(R)}{\det(m)=r}$.
Since every $0\neq r\in R$ has an inverse, there is a bijection between $S_{r_1}$ and $S_{r_2}$ for every $r_1,r_2\neq 0$, given by $m_1 \leftrightarrow m_2 = \begin{pmatrix}
r_2r_1^{-1} & 0 & 0\\
0 & 1 & 0 \\
0 & 0 & 1
\end{pmatrix} \cdot m_1$. In particular, our set $S=S_1$ bijects to each $S_r$.  
Each $S_r$ for $r\neq 0$ supports a copy of our complex, obtained by a linear transformation of the entire vector space which moves shifts the lines.
We leave open the question of obtaining better bounds on the global rate. 

\subsection{Rate of Local Code at a Vertex} \label{sec:rate-local}
In this section we analyse the rate of the code $C_{q,d_x,d_y}\subseteq \set{ f:\F_q^3\to\F_q}$ defined in \eqref{eq:def:vertexcode}. In this section, we will only consider the case where $q$ is a prime $p$.

Since $f(x, b, c)$ lies on a degree $d_x$ polynomial in $x$ for any $b, c$, we can write $f(x, y, z)$ as a polynomial that is degree $d_x$ in $x$ and degree $p-1$ in $y$ and $z$:
\[
    f(x, y, z) = \sum_{\substack{0 \le i \le d_x \\ 0 \le j, k \le p-1}} c_{ijk} x^i y^j z^k.
\]

Plugging in $(x, y, xy + z)$, we get that 
\[
    f(x, y, xy + z) = \sum_{\substack{0 \le i \le d_x \\ 0 \le j, k \le p-1}} c_{ijk} x^i y^j (xy + z)^k.
\]
We can reduce $f(x, y, xy + z)$ modulo $y^p - y$ to get a polynomial $g(x, y, z)$ that is degree $\le p-1$ in $y$ and $z$, and $\le d_x+p-1$ in $x$. We let $g_j(x,z)$ and $g_{jk}(x)$ denote the coefficient of $y^j$ and $y^jz^k$ respectively, so that 
\[
    g(x, y, z) = \sum_{0 \le j \le p-1} g_j(x, z) y^j = \sum_{0 \le j, k \le p-1} g_{jk}(x) y^j z^k.
\]
Note that we can also write 
\begin{align*}
    g_j(x, z) &= \begin{cases}
        h_j(x, z) & j = 0 \\
        h_j(x, z) + h_{j+p-1}(x, z) & j \not= 0
    \end{cases} \\
    g_{jk}(x) &= \begin{cases}
        h_{j,k}(x) & j = 0 \\
        h_{j,k}(x) + h_{j+p-1,k}(x) & j \not= 0
    \end{cases}. 
\end{align*}

The condition that $f(a, y, ay+c)$ lies on a degree $d_y$ polynomial in $y$ for all $a$ and $c$ means that for all $d_y < j \le p-1$, it holds that $g_j(x, z) = 0$ for all $x, z$. In particular, for any $0 \le k \le p-1$, it must hold that $g_{jk}(x) = 0$ for all $x$. So, if we know that $g_{jk}(x)$ is degree $\le p-1$ in $x$, then in fact all coefficients in $g_{jk}(x)$ must be $0$. We will use this fact extensively in the analysis of the rate.

First, we will need the following lemma.

\begin{lemma} \cite{GesselV85} \rnote{todo: check if GesselV handles the finite field version; right now im only convinced its ok if youre working in field of prime order} \label{lemma:invertible_matrix}
    Assuming that $r \le k \le m < p$, the following matrix has full rank in $\bbF_p$:
    \[
        \begin{pmatrix}
            \binom{m}{k} & \binom{m}{k-1} & \cdots & \binom{m}{k-r} \\
            \binom{m-1}{k} & \binom{m-1}{k-1} & \cdots & \binom{m-1}{k-r} \\
            \vdots & \vdots &  & \vdots \\
            \binom{m-r}{k} & \binom{m-r}{k-1} & \cdots & \binom{m-r}{k-r}
        \end{pmatrix}
    \]
\end{lemma}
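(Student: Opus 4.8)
The plan is to prove the matrix is invertible by showing its $r+1$ columns are linearly independent over $\bbF_p$. First I would reinterpret column $j$ (for $0 \le j \le r$) as the vector of evaluations of the polynomial $b_j(x) := \binom{x}{k-j} = \frac{x(x-1)\cdots(x-(k-j)+1)}{(k-j)!} \in \bbF_p[x]$ at the $r+1$ points $x \in \{m, m-1, \dots, m-r\}$. This makes sense over $\bbF_p$ because $k-j \le k \le m < p$, so $(k-j)!$ is invertible mod $p$, and the reduction of the integer $\binom{m-i}{k-j}$ indeed equals $b_j(m-i)$ (including the degenerate cases $m-i < k-j$, where both are $0$). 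Observe also that $b_0,\dots,b_r$ have pairwise distinct degrees $k, k-1, \dots, k-r$, all below $p$, so they are linearly independent in $\bbF_p[x]$.

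Next, suppose $\sum_{j=0}^r \lambda_j b_j(m-i) = 0$ for every $i = 0,\dots,r$, and set $P(x) := \sum_{j=0}^r \lambda_j b_j(x)$, a polynomial of degree at most $k < p$. The crux is to exhibit enough roots of $P$. From the hypothesis, $P$ vanishes at the $r+1$ distinct points $m, m-1, \dots, m-r$. But there are also ``automatic'' roots: each $b_j$ with $j \le r$ vanishes at every $x \in \{0,1,\dots,k-r-1\}$, since for such $x$ one has $k-j \ge k-r > x \ge 0$ and hence $\binom{x}{k-j}=0$. Because $k \le m$, the sets $\{0,\dots,k-r-1\}$ and $\{m-r,\dots,m\}$ are disjoint, so $P$ has at least $(k-r)+(r+1) = k+1$ distinct roots in $\bbF_p$ while $\deg P \le k$; this forces $P \equiv 0$. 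Linear independence of $b_0,\dots,b_r$ then gives $\lambda_0 = \dots = \lambda_r = 0$, so the columns are independent and the matrix has full rank.

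The main obstacle is precisely the root count: a polynomial of degree $\le k$ vanishing only at the $r+1$ prescribed points $m-i$ need not be zero when $r < k$, so the naive bound of $r+1$ roots is insufficient; the fix is to notice the additional $k-r$ roots common to all the $b_j$ at the bottom of the range, and to check (this is where $k \le m$ is used) that they are genuinely new. Everything else is routine — as a sanity check, the $r=1$ case can be evaluated by hand, where the determinant equals $\frac{m!\,(m-1)!}{k!\,(k-1)!\,(m-k)!\,(m-k+1)!}$, manifestly a unit mod $p$ when $m<p$. Alternatively one could evaluate the determinant of the whole matrix by a standard lattice-path (Lindstr\"{o}m--Gessel--Viennot, as in \cite{GesselV85}) argument, as a count of non-intersecting path families, and verify that no factor in the resulting product is divisible by $p$; but the column-independence route above avoids needing any explicit product formula and is cleaner for our purposes.
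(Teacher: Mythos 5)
Your proof is correct, and it takes a genuinely different route from the paper's. The paper proves invertibility by a determinant-preserving rescaling: it divides row $i$ by $(m-i)!$ and multiplies column $j$ by $(m-k+j)!\,(k-j)!$ (all units in $\bbF_p$ because $m<p$), which turns the entry $\binom{m-i}{k-j}$ into $h_i(m-k+j)$ for the degree-$i$ falling factorial $h_i(x)=x(x-1)\cdots(x-i+1)$; the resulting generalized Vandermonde matrix on the distinct nodes $m-k,\dots,m-k+r$ is then row-reduced ($h_i\mapsto x^i$) to an honest Vandermonde. You instead show the columns are linearly independent by viewing column $j$ as the evaluation of the polynomial $b_j(x)=\binom{x}{k-j}$ at $m,\dots,m-r$ and counting roots of a putative vanishing combination $P$; the key extra observation is the $k-r$ ``automatic'' roots of every $b_j$ at $0,1,\dots,k-r-1$, which together with the $r+1$ hypothesized roots exceed $\deg P\le k$, forcing $P\equiv 0$ and then $\lambda_j=0$ by the distinct degrees of the $b_j$. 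Both arguments are elementary and use exactly the hypotheses $r\le k\le m<p$ (the paper for invertibility of the factorials and distinctness of the nodes; you for invertibility of $(k-j)!$, disjointness of the two root sets, and distinctness of residues mod $p$). The paper's version buys a one-line reduction to a standard fact (Vandermonde), while yours avoids determinant manipulations entirely and makes the role of each hypothesis more transparent; your handling of the degenerate entries $\binom{m-i}{k-j}$ with $m-i<k-j$ is also slightly more explicit than the paper's, which passes through a factorial quotient that is only literally defined in the nondegenerate case.
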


\begin{proof}
    We divide the entries of row $i$ by $(m-i)!$ (where the top row is row $0$), and multiply the entries of column $j$ by $(m-k+j)!(k-j)!$ (where the leftmost column is column $0$). Since $p > m$ and $m \ge k \ge r \ge i,j$, both $(m-i)!$ and $(m-k+j)!(k-j)!$ are nonzero in $\bbF_p$. We obtain that the rank of the above matrix is equivalent to the rank of the below matrix:
    \[
        \begin{pmatrix}
            1 & 1 & 1 & \cdots & 1 \\
            h_1(m-k) & h_1(m-k+1) & h_1(m-k+2) & \cdots & h_1(m-k+r) \\
            h_2(m-k) & h_2(m-k+1) & h_2(m-k+2) & \cdots & h_2(m-k+r) \\
            \vdots & \vdots & \vdots &  & \vdots \\
            h_r(m-k) & h_r(m-k+1) & h_r(m-k+2) & \cdots & h_r(m-k+r)
        \end{pmatrix},
    \]
    where $h_i(x) = x(x-1)(x-2)\cdots(x-i+1)$ is a degree $i$ polynomial.

    This matrix is invertible. To see this, let $\alpha_j = m-k-j$. The above matrix is rewritten as 
    \[
        \begin{pmatrix}
            1 & 1 & 1 & \cdots & 1 \\
            h_1(\alpha_0) & h_1(\alpha_1) & h_1(\alpha_2) & \cdots & h_1(\alpha_r) \\
        h_2(\alpha_0) & h_2(\alpha_1) & h_2(\alpha_2) & \cdots & h_2(\alpha_r) \\
            \vdots & \vdots & \vdots &  & \vdots \\
            h_r(\alpha_0) & h_r(\alpha_1) & h_r(\alpha_2) & \cdots & h_r(\alpha_r)
        \end{pmatrix}.
    \]
    Using the smaller degree monomials in the rows above it, the polynomial $h_i$ in each row can iteratively be made into simply the monomial $x^i$. In particular, the rank of the above matrix is the same as the rank of the Vandermonde matrix, which has full rank.
\end{proof}

\begin{lemma} \label{lem:dx+dy}
    Assuming that $p \ge d_x + d_y + 2$, then for $j + k > d_x + d_y$ and for all $0 \le i \le d_x$, it holds that $c_{ijk} = 0$. In other words, the combined degree of $y$ and $z$ in $f(x,y,z)$ is at most $d_x + d_y$.
\end{lemma}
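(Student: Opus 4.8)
The plan is to prove the lemma by downward induction on $s:=j+k$, showing that $c_{ijk}=0$ for all $i$ whenever $j+k=s>d_x+d_y$. I would start from the substitution already set up in the text: write $f(x,y,xy+z)=\sum_{i,j,k}\sum_{\ell=0}^{k}c_{ijk}\binom{k}{\ell}x^{i+\ell}y^{j+\ell}z^{k-\ell}$, reduce it modulo $y^p-y$ to obtain $g(x,y,z)$ with $\deg_y g,\deg_z g\le p-1$, and recall that the hypothesis (that $f(a,y,ay+c)$ is a polynomial of degree $\le d_y$ in $y$ for all $a,c$) is exactly the statement that $g_{jk}(x)\equiv 0$ on $\F_p$ for every $j>d_y$ and every $k$. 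The key bookkeeping fact is that, since the only powers of $y$ exceeding $p-1$ that occur are $y^{j+\ell}$ with $j+\ell\le 2(p-1)$, each reducing to $y^{j+\ell-(p-1)}$, the coefficient $g_{j'k'}(x)$ receives contributions only from the coefficients $c_{ijk}$ of $f$ whose level $j+k$ equals $j'+k'$ (the unwrapped part) or equals $j'+k'+(p-1)$ (the wrapped part).

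For the inductive step, fix $s>d_x+d_y$ and assume $c_{i'j'k'}=0$ whenever $j'+k'>s$. Then the wrapped contributions to $g_{j'k'}$ with $j'+k'=s$ --- which would originate from level $s+p-1>s$ --- all vanish, so $g_{s-k',\,k'}(x)$ depends only on the level-$s$ coefficients of $f$, and expanding gives, for each $k'$ with $s-k'>d_y$,
\[
 g_{s-k',\,k'}(x)=\sum_{i,k}c_{i,\,s-k,\,k}\binom{k}{k'}\,x^{\,i+k-k'}\equiv 0\quad\text{on }\F_p.
\]
Since this polynomial has $x$-degree at most $d_x+p-1$, one first reduces it modulo $x^p-x$ and then reads off coefficients of powers of $x$. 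Fixing a value $v$ of the index $i+k$ and collecting, over the admissible $k'$, the resulting constraints produces a linear system in the unknowns $\{c_{v-k,\,s-k,\,k}\}_k$ whose coefficient matrix has entries $\binom{k}{k'}$ --- which, after reversing the orders of rows and columns and transposing, is precisely the binomial matrix of Lemma~\ref{lemma:invertible_matrix}. Invoking that lemma, with all indices kept strictly below $p$, shows the matrix has full column rank, so $c_{v-k,\,s-k,\,k}=0$ for every $k$; letting $v$ range completes the step. The hypothesis $p\ge d_x+d_y+2$ is used to guarantee both that the admissible range $\{k':d_y<s-k'\le p-1\}$ supplies at least $d_x+1$ equations and that every binomial index occurring stays below $p$, so that Lemma~\ref{lemma:invertible_matrix} is applicable over $\F_p$.

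The main obstacle is the reduction modulo $x^p-x$. The modulo-$(y^p-y)$ wrap is disposed of cleanly by the induction, but the coefficients $c_{ijk}$ with $i+k\le d_x$ (equivalently, with small $k$) are invisible to the system above, since $\binom{k}{k'}=0$ as soon as $k<k'$; to reach them one must also use the relations $[x^{u}]g_{s-k',k'}+[x^{u+p-1}]g_{s-k',k'}=0$ coming from the $x$-wrap, which couple the small-$v$ unknowns at level $s$ with the large-$v$ unknowns at that same level. The enlarged system is once more governed by a binomial matrix of Gessel--Viennot type, and $p\ge d_x+d_y+2$ is precisely the inequality that makes it square and invertible over $\F_p$. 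Accordingly I would write up the case $i+k>d_x$ (no $x$-wrap needed) first, and then handle $i+k\le d_x$ with the wrapped equations. Having shown that all level-$s$ coefficients vanish for every $s>d_x+d_y$, the conclusion --- that the combined degree of $y$ and $z$ in $f$ is at most $d_x+d_y$ --- follows.
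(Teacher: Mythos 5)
Your outer skeleton matches the paper's proof: downward induction on $s=j+k$, the observation that (under the inductive hypothesis) $g_{j'k'}$ with $j'+k'=s$ depends only on level-$s$ coefficients of $f$ (this is the paper's Claim~\ref{claim:coeffs-in-gjk}), and linear systems with binomial coefficient matrices resolved by Lemma~\ref{lemma:invertible_matrix}. The genuine gap is in the inner step, exactly where you flag ``the main obstacle.'' Your primary argument treats each value $v=i+k$ in isolation and extracts the constraint $[x^{v-k'}]g_{s-k',k'}=0$; but since $g_{s-k',k'}$ can have $x$-degree up to $d_x+p-1\ge p$, vanishing on $\F_p$ only yields $[x^u]g+[x^{u+p-1}]g=0$, which couples level $v$ with level $v+p-1$. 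Your proposed repair --- an ``enlarged system'' of Gessel--Viennot type coupling small-$v$ and large-$v$ unknowns, asserted to be square and invertible precisely when $p\ge d_x+d_y+2$ --- is never exhibited, and its invertibility is not a consequence of Lemma~\ref{lemma:invertible_matrix} as stated. Moreover your diagnosis is off: coefficients with small $i+k$ are not ``invisible'' (they appear with nonzero entries in the columns $g_{s-k',k'}$ with \emph{small} $k'$, since the diagonal entries $\binom{k}{k}=1$ survive); the only contamination is the wrap term from level $v+p-1$.

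The paper closes this gap differently and more cheaply: it runs a \emph{second} downward induction on $t=i+k$ inside each level $s$, and for each $t$ hand-picks a block of consecutive columns $g_{d',s-d'}$ (three regimes of $t$, treated separately for $s>p-1+d_y$, $p-1<s\le p-1+d_y$, and $s\le p-1$) so that, given that all coefficients with $i'+k'>t$ already vanish, each chosen polynomial has $x$-degree strictly below $p$. Then vanishing on $\F_p$ gives coefficientwise vanishing outright, no reduction modulo $x^p-x$ is ever performed, and the resulting $m\times m$ submatrix sits above the lower-left triangle of zeros so that Lemma~\ref{lemma:invertible_matrix} applies (its hypotheses $r\le k\le m<p$ must be verified in each regime, which you also do not do). If you restructure your argument as this double induction --- processing $v$ from $d_x+p-1$ downward so the wrap terms from level $v+p-1$ are already known to vanish, and then checking column admissibility and degree bounds per regime --- your plan becomes the paper's proof; as written, the invertibility of the coupled system you invoke is unsubstantiated.
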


\begin{proof}
    Recall that we've written 
    \[
        f(x,y,z) = \sum_{0 \le i \le d_x} c_{ijk} x^iy^jz^k. 
    \]
    We will backwards induct on the value of $j + k$ to show that $c_{ijk} = 0~\forall i \in [0, d_x]$. The base case is $j + k = 2p-1$. In this case, there are no terms with $j + k = 2p-1$, so $c_{ijk} = 0$ for all $i \in [0, d_x]$.

    Now, assume that for $s > d_x + d_y$ and $j + k > s$, it holds that $c_{ijk} = 0~\forall i \le d_x$. We will prove that for all $j+k = s$, it holds that $c_{ijk} = 0$ for $i \in [0, d_x]$.

    To begin, we express the polynomial $g_{jk}(x)$ in terms of the nonzero coefficients $c_{ijk}$.

    \begin{claim} \label{claim:coeffs-in-gjk}
        Assume that $c_{ij'k'} = 0$ for $j' + k' > s$ and $i \in [0, d_x]$. Then for $j + k = s$, it holds that
        \[
            g_{jk}(x) = \sum_{\substack{0 \le i \le d_x \\ k \le k' \le p-1}} c_{i,s-k',k'} \cdot \binom{k'}{k} x^{k'-k+i}. 
        \]
    \end{claim}

    \begin{proof}
        Consider a term $c_{i'j'k'}x^{i'}y^{j'}(xy+z)^{k'}$ that appears in $f(x,y,xy+z)$. This expands to
        \[
            c_{i'j'k'} \cdot \sum_{\substack{0 \le i \le d_x \\ 0 \le k \le k'}} \binom{k'}{k} x^{k'-k+i}y^{j'+k'-k}z^{k}.
        \]
        Thus, the coefficient $c_{i'j'k'}$ appears in the expression of $g_{jk}(x)$ when $(j' + k' - k \modstar{p}) = j$. Note that all the sum of the exponents of $y$ and $z$ in every term of the above expression is $(j'+k'-k) + k = j'+k'$. Then, the coefficients that appear in the expression of $g_{jk}(x)$ must satisfy the property that $j'+k' \in \{ s, s + p-1 \}$. Since we've assumed that $c_{i'j'k'} = 0$ for $j' + k' > s$, it holds that the $c_{i'j'k'}$ that appear are precisely those where $j' + k' = s$. We thus obtain the formula in the claim.
    \end{proof}

    
        

    We now continue to prove the inductive step for $j + k = s$. We split into three cases.
    
    \vspace{0.5em}
    \begin{caseof}
    
    \hspace{-3em}
    \case{$s > p-1 + d_y$.}{
        In this case, we proceed by backwards induction on the value of $k$ to show that $c_{ijk} = 0$. For $k > p-1$, there are no such terms, so $c_{i,s-k,k} = 0$ for all $i \in [0, d_x]$.

        Now assume the inductive hypothesis that for all $k' > k$ it holds that $c_{i,s-k',k'} = 0$. Using Claim~\ref{claim:coeffs-in-gjk}, we have that 
        \begin{align*}
            g_{s-k,k}(x) 
            &= \sum_{\substack{0 \le i \le d_x \\ k \le k' \le p-1}} c_{i,s-k',k'} \cdot \binom{k'}{k} x^{k'-k+i} \\
            &= \sum_{0 \le i \le d_x} c_{i, s-k, k} \cdot x^i,
        \end{align*}
        where in the second line we've used the inductive hypothesis.
        But $s - k \ge s - (p-1) > d_y$, so $g_{s-k, k}(x) = 0~\forall x \in \bbF_p$. Since $g_{s-k,k}(x)$ is a polynomial of degree $d_x < p$, this implies that all the coefficients $c_{i, s-k, k}$ are equal to $0$. This completes the inductive step.
    }
    \vspace{0.5em}
    \case{$p-1 < s \le p-1 + d_y$.}{
        In this case, to prove that $c_{i,s-k,k} = 0$ for all $i \in [0, d_x]$ and $k \in [0,p-1]$, our strategy is to find independent linear relations between these coefficients. It will be convenient to view a matrix which represents the data given by Claim~\ref{claim:coeffs-in-gjk}, as follows:
        \[
        \kbordermatrix{
            & g_{d_y+1,s-d_y-1} & g_{d_y+2,s-d_y-2} & g_{d_y+3,s-d_y-3} & \cdots & g_{p-1,s-p+1} \\
            c_{i, s-p+1, p-1} & \binom{p-1}{s-d_y-1} x^{p-s+d_y+i} & \binom{p-1}{s-d_y-2} x^{p-s+d_y+i+1} & \binom{p-1}{s-d_y-3} x^{p-s+d_y+i+2} & \cdots & \binom{p-1}{s-p+1} x^{2p-s+i-2} \\
            c_{i, s-p+2, p-2} & \binom{p-2}{s-d_y-1} x^{p-s+d_y+i-1} & \binom{p-2}{s-d_y-2} x^{p-s+d_y+i} & \binom{p-2}{s-d_y-3} x^{p-s+d_y+i+1} & \cdots & \binom{p-2}{s-p+1} x^{2p-s+i-3} \\
            c_{i, s-p+3, p-3} & \binom{p-3}{s-d_y-1} x^{p-s+d_y+i-2} & \binom{p-3}{s-d_y-2} x^{p-s+d_y+i-1} & \binom{p-3}{s-d_y-3} x^{p-s+d_y+i} & \cdots & \binom{p-3}{s-p+1} x^{2p-s+i-4} \\
            \vdots & \vdots & \vdots & \vdots &  & \vdots \\
            c_{i,d_y+1,s-d_y-1} & \binom{s-d_y-1}{s-d_y-1} x^i & \binom{s-d_y-1}{s-d_y-2} x^{i+1} & \binom{s-d_y-1}{s-d_y-3} x^{i+2} & \cdots & \binom{s-d_y-1}{s-p+1} x^{p-d_y+i-2} \\
            c_{i,d_y+2,s-d_y-2} & 0 & \binom{s-d_y-2}{s-d_y-2} x^i & \binom{s-d_y-2}{s-d_y-3} x^{i+1} & \cdots & \binom{s-d_y-2}{s-p+1} x^{p-d_y+i-3} \\
            c_{i,d_y+3,s-d_y-3} & 0 & 0 & \binom{s-d_y-3}{s-d_y-3} & \cdots & \binom{s-d_y-3}{s-p+1} x^{p-d_y+i-4} \\
            \vdots & \vdots & \vdots & \vdots &  & \vdots \\
            c_{i,p-1,s-p+1} & 0 & 0 & 0 & \cdots & \binom{s-p-1}{s-p-1} x^i
        }
        \]
        The rows correspond to the coefficients $c_{i,s-k,k}$ for $k = p-1, \dots, s-p+1$ as labeled on the left. Each row actually corresponds to $d_x+1$ coefficients, for $i \in [0, d_x]$, but for sake of space we condense these into a single row. 
        
        The columns correspond to $g_{d',s-d'}(x)$ for $d' \in [d_y+1, p-1]$. Because $d' > d_y$, these $g_{d',s-d'}(x)$ all should be $0$. The way to read off the value of $g_{d',s-d'}(x)$ is by looking at the corresponding column, and summing the product of each entry with the corresponding row coefficient $c_{i,s-k',k'}$, remembering that each row actually corresponds to $d_x+1$ rows for $i \in [0, d_x]$. The entries of the matrix were chosen according to the expansion of $g_{d',s-d'}(x)$ given in Claim~\ref{claim:coeffs-in-gjk}.


        Now, to prove that $c_{i,s-k,k} = 0$ for all $i \in [0, d_x]$ and $k \in [0,p-1]$, we will backwards induct on the value of $i+k$, which we will denote by $t$. So assume that for $i' + k' > t$ that $c_{i',s-k',k'} = 0$. This is true for $t = d_x + (p-1)$, since there do not exist coefficients with larger values of $i' + k'$ (as $i' \le d_x$ and $k' \le p-1$).

        Consider all the coefficients with $i + k = t$, meaning coefficients $c_{i, s-t+i, t-i}$ where $i \in [0, d_x]$ and $t-i, s-t+i \in [0, p-1]$. These coefficients correspond to $\le d_x+1$ consecutive rows of the matrix with increasing values of $i$. Note that in each column, these coefficients only contribute to a single monomial $x^{i'}$: namely, in the column for $g_{d', s-d'}$, the exponents of $x$ that appear for $i+k = t$ are all equal to $i' = i + (k - (s-d')) = t - s + d'$. Thus, for any $g_{d', s-d'}(x)$ which has degree $< p$, we obtain a linear constraint on $c_{i, s-t+i, t-i}$ by restricting the corresponding column corresponding to the rows corresponding to the coefficients (we may also drop the powers of $x$, so that the coefficients of the linear constraint are simply binomial coefficients). For instance, for $t = p$, the column $g_{d', s-d'}$ gives us the constraint 
        \[
            \begin{pmatrix}
                c_{1, s-p+1, p-1} &
                c_{2, s-p+2, p-2} &
                \cdots &
                c_{d_x, s-p+d_x, p-d_x} 
            \end{pmatrix}
            \begin{pmatrix}
                \binom{p-1}{s-d'} \\
                \binom{p-2}{s-d'} \\
                \vdots \\
                \binom{p - d_x}{s - d'}
            \end{pmatrix}
            = 0.
        \]
        

        Our strategy may therefore be summarized as follows: for each $t \le p-1 + d_x$, assuming that there are $m$ coefficients of the form $c_{i,s-t+i,t-i}$, we will find $m$ consecutive columns such that the degree of $x$ in $g_{d',s-d'}(x)$ is $< p$. These columns will also satisfy that if we restrict the matrix to these columns and to the rows corresponding to the coefficients, the resulting $m \times m$ matrix has diagonal that lies above the lower left triangle of $0$'s. Then, by Lemma~\ref{lemma:invertible_matrix}, we know these $m$ linear constraints are independent, telling us that $c_{i, s-t+i, t-i} = 0$ for each $i$.

        The columns will be chosen as follows.
        \begin{itemize}
        
        \item 
            For $t \ge p-1$, we are considering the $m = p-t+d_x$ coefficients 
            \[
                c_{t-p+1, s-p+1, p-1}, ~c_{t-p+2, s-p+2, p-2}, ~\dots,~ c_{d_x, s-t+d_x, t-d_x}.
            \]
            We pick the first $p-t+d_x$ columns of the matrix, corresponding to $g_{d_y+1,s-d_y-1},~ g_{d_y+2,s-d_y-2}, ~\dots, ~g_{p+d_x+d_y-t,s+t-p-d_x-d_y}$. Each of $g_{d',s-d'}(x)$ for $d' \in [d_y+1, p+d_x+d_y-t]$ is a polynomial in $x$. Using the inductive hypothesis that $c_{i',s-k',k'} = 0$ for $i' + k' > t$, we see that the maximum degree of $x$ in any of these polynomials is $\le i + k - \min (s-d') = t - (s+t-p-d_x-d_y) = p+d_x+d_y-s < d_x+d_y < p$. 

        \item 
            For $s-d_y-1 \le t < p-1$, we are looking at the $d_x+1$ coefficients 
            \[
                c_{0,s-t,t}, c_{1,s-t+1,t-1}, \dots, c_{d_x,s-t+d_x,t-d_x}.
            \]
            Consider the $d_x+1$ columns corresponding to $g_{d_y+1, s-d_y-1}, ~\dots, ~g_{d_x+d_y+1, s-d_x-d_y-1}$. Because $t \ge s-d_y-1$, the submatrix has nonzero diagonal. The largest exponent of $x$ in any of these columns is $\le i + k - (s-d_x-d_y-1) \le t - s + d_x+d_y - 1 < d_x+d_y-1 < p$. 

        \item 
            For $t < s - d_y - 1$, let $\hat{d} = \min(d_x, p-1-s+t)$. We consider the $\hat{d} + 1$ coefficients
            \[
                c_{0,s-t,t}, ~c_{1,s-t+1,t-1}, ~\dots, ~c_{\hat{d},p-1,t-\hat{d}}.
            \]
            Look at the $\hat{d} + 1$ columns corresponding to $g_{s-t,t},~\dots,~g_{s-t+\hat{d},t-\hat{d}}$. This matrix has nonzero diagonal. The largest degree of $x$ in any of these columns is $\le i+k - (t-\hat{d}) \le \hat{d} < p$. 
            
        \end{itemize}

    }
    \vspace{0.5em}
    \case{$d_x+d_y < s \le p-1$.}{
        Similar to the previous case, we consider the following matrix, interpreted the same way as before.
        \begin{align*}
            \kbordermatrix{
            & g_{d_y+1,s-d_y-1} & g_{d_y+2,s-d_y-2} & g_{d_y+3,s-d_y-3} & \cdots & g_{s,0} \\
            c_{i,0,s} & \binom{s}{s-d_y-1} x^{i+d_y+1} & \binom{s}{s-d_y-2} x^{i+d_y+2} & \binom{s}{s-d_y-3} x^{i+d_y+3} & \cdots & \binom{s}{0} x^{i+s} \\
            c_{i,1,s-1} & \binom{s-1}{s-d_y-1} x^{i+d_y} & \binom{s-1}{s-d_y-2} x^{i+d_y+1} & \binom{s}{s-d_y-3} x^{i+d_y+2} & \cdots & \binom{s-1}{0} x^{i+s-1} \\
            c_{i,2,s-2} & \binom{s-2}{s-d_y-1} x^{i+d_y-1} & \binom{s-2}{s-d_y-2} x^{i+d_y} & \binom{s-2}{s-d_y-3} x^{i+d_y-1} & \cdots & \binom{s-2}{0} x^{i+s-2} \\
            \vdots & \vdots & \vdots &  & \vdots \\
            c_{i,d_y+1,s-d_y-1} & \binom{s-d_y-1}{s-d_y-1} x^{i} & \binom{s-d_y-1}{s-d_y-2} x^{i+1} & \binom{s-d_y-1}{s-d_y-3} x^{i+2} & \cdots & \binom{s-d_y-1}{0} x^{i+s-d_y-1} \\
            c_{i,d_y+2,s-d_y-2} & 0 & \binom{s-d_y-2}{s-d_y-2} x^i & \binom{s-d_y-2}{s-d_y-3} x^{i+1} & \cdots & \binom{s-d_y-2}{0} x^{i+s-d_y-2} \\
            c_{i,d_y+3,s-d_y-3} & 0 & 0 & \binom{s-d_y-3}{s-d_y-3} x^i & \cdots & \binom{s-d_y-3}{0} x^{i+s-d_y-3} \\
            \vdots & \vdots & \vdots &  & \vdots \\
            c_{i,s,0} & 0 & 0 & 0 & \cdots & \binom{0}{0} x^i
            }.
        \end{align*}
        As in the previous case, we will downwards induct on the value of $i + k = t$. Assume that for $i' + k' > t$ it holds that $c_{i',s-k',k'} = 0$. This is true for $t = d_x + s$ (the maximal possible value of $i+k$) since there do not exist coefficients with larger values of $i' + k'$.

        We again consider all $m$ coefficients satisfying $i + k = t$. These occupy a number of consecutive rows of the matrix. Furthermore, for any column with degree in $x$ less than $p$, restricting that column to those rows gives a linear constraint on the coefficients. Thus, we again demonstrate, for each $t$, $m$ consecutive columns that each have exponent in $x$ less than $p$. Restricting these columns to the coefficient rows will result in a $m \times m$ submatrix lying above the lower left triangle of $0$'s, which is full rank by Lemma~\ref{lemma:invertible_matrix}, implying that all $m$ coefficients must in fact be $0$.

        \begin{itemize}
        \item 
            For $t \ge s$, we are interested in the $s-t+d_x+1$ coefficients 
            \[
                c_{t-s,0,s}, ~c_{t-s+1,1,s-1}, ~\dots, ~c_{d_x,s-t+d_x,t-d_x}.
            \]
            Look at the first $s-t+d_x+1$ columns, which correspond to $g_{d', s-d'}$ for $d' \in [d_y+1, s-t+d_x+d_y+1]$. The maximum degree of any of these polynomials is $i + k - (t-d_x-d_y-1) = d_x + d_y +1 < p$.

        \item 
            For $s-d_y-1 \le t < s$, we consider the $d_x+1$ coefficients
            \[
                c_{0,s-t,t}, ~ c_{1,s-t+1,t-1}, ~ \dots, ~ c_{d_x,s-t+d_x,t-d_x}.
            \]
            The columns we are interested in are $g_{d_y+1,s-d_y-1}, ~ \dots, ~g_{d_x+d_y+1,s-d_x-d_y-1}$. Since $t \ge s-d_y-1$, the submatrix has nonzero diagonal. The maximum degree of any of these polynomials is $\le i + k - (s-d_x-d_y-1) = t - s + d_x + d_y + 1 < d_x + d_y + 1 < p$.

        \item 
            For $t < s-d_y-1$, let $\hat{d} = \min(d_x, t)$. Then we are considering the following $\hat{d} + 1$ coefficients.
            \[
                c_{0,s-t,t}, ~ c_{1,s-t+1,t-1}, ~ \dots, ~ c_{\hat{d}, s-t+\hat{d}, t-\hat{d}}.
            \]
            For these, we will look at the $\hat{d}+1$ columns corresponding to $g_{s-t,t}, ~ \dots, ~ g_{s-t+\hat{d},t-\hat{d}}$. This clearly has nonzero diagonal, and the maximum degree of $x$ in any of these polynomials is $i + k - (t-\hat{d}) = \hat{d} < p$.
        \end{itemize}
        
    }
    \end{caseof}
    
\end{proof}

\begin{theorem} \label{lemma:local-rate}
    The dimension of the local code is $\frac12 \cdot (d_x+1)(d_y+1)(d_x+d_y+2)$.
\end{theorem}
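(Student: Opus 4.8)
The plan is to compute the dimension of $C_{d_x,d_y}$ by counting the free parameters among the coefficients $c_{ijk}$, having already established the crucial structural constraints via Lemma~\ref{lem:dx+dy}. Recall that a codeword $f\in C_{d_x,d_y}$ can be written as $f(x,y,z)=\sum_{0\le i\le d_x,\;0\le j,k\le p-1}c_{ijk}x^iy^jz^k$, and that Lemma~\ref{lem:dx+dy} tells us $c_{ijk}=0$ whenever $j+k>d_x+d_y$. So a first guess for the dimension is the number of triples $(i,j,k)$ with $0\le i\le d_x$ and $j+k\le d_x+d_y$ — but this over-counts, because not every such choice of coefficients yields a codeword; we still need the condition that $f(a,y,ay+c)$ has degree $\le d_y$ in $y$, which imposes further linear constraints that Lemma~\ref{lem:dx+dy} has only partially exploited.

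\textbf{Step 1: identify the exact constraints.} I would revisit the reduction $g(x,y,z)$ of $f(x,y,xy+z)$ modulo $y^p-y$: the codeword condition is exactly that $g_j(x,z)\equiv 0$ for all $d_y<j\le p-1$. Using Lemma~\ref{lem:dx+dy}, once we know $j+k\le d_x+d_y$ for all surviving monomials of $f$, the expansion of $f(x,y,xy+z)$ has every monomial $x^{k'-k+i}y^{j+k'-k}z^k$ with $(j+k')\le d_x+d_y < p$, so no wraparound modulo $y^p-y$ occurs; hence $g_j(x,z)$ is literally the degree-$j$-in-$y$ part of $f(x,y,xy+z)$, a genuine polynomial identity. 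The remaining constraints ``$g_j\equiv 0$ for $j>d_y$'' then become: the polynomial $f(x,y,xy+z)$, expanded over $\mathbb{F}_p[x,y,z]$, has $y$-degree at most $d_y$.

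\textbf{Step 2: change of variables and count.} The map $(x,y,z)\mapsto(x,y,xy+z)$ is an invertible linear substitution on the $z$-variable (for fixed $x,y$), so $f\mapsto f(x,y,xy+z)$ is a bijection on the space of polynomials that are degree $\le d_x$ in $x$ and have bounded total structure. I would argue that the space of valid $f$ is isomorphic, via this substitution, to the space of polynomials $\tilde f(x,y,w)$ with $\deg_x\le d_x$, $\deg_y\le d_y$, and — crucially — combined degree of $y,w$ bounded appropriately so that the inverse substitution $w\mapsto z-xy$ (wait: $w=xy+z$, so $z=w-xy$) keeps things within range; the key point is that substituting back must not raise the $z$-degree beyond $p-1$ nor the $x$-degree beyond... this is where one must be careful. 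The cleanest route: show $\dim C_{d_x,d_y}$ equals the number of monomials $x^iy^jw^k$ with $0\le i\le d_x$, $0\le j\le d_y$, $0\le k$, and $j+k\le d_x+d_y$ (the last from Lemma~\ref{lem:dx+dy} applied in the $w$-coordinates, since that lemma's conclusion is symmetric under the substitution). Counting these: for each $j\in[0,d_y]$ and each $k$ with $k\le d_x+d_y-j$, there are $d_x+1$ choices of $i$ — but we also need $k\le$ something from the $x$-budget. Actually I expect the correct count comes out as
\[
\sum_{j=0}^{d_y}\sum_{k=0}^{d_x+d_y-j}(\min(d_x,d_x+d_y-j-k)+1)\;=\;\frac{1}{2}(d_x+1)(d_y+1)(d_x+d_y+2),
\]
and I would verify this identity by a direct (if slightly tedious) summation, perhaps after splitting on whether $d_x+d_y-j-k\ge d_x$, i.e. $k\le d_y-j$.

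\textbf{The main obstacle} I anticipate is pinning down \emph{exactly} which monomials are free — that is, proving both that the claimed set of $c_{ijk}$ can be chosen arbitrarily (lower bound on dimension, by exhibiting that each such $f$ is a codeword) and that there are no others (upper bound, which Lemma~\ref{lem:dx+dy} handles only up to showing $j+k\le d_x+d_y$; one still needs that within that range the degree-$\le d_y$-in-$y$ condition on $f(x,y,xy+z)$ is automatically satisfied, or else imposes no further independent constraints). The upper bound likely requires re-examining the three-case analysis in the proof of Lemma~\ref{lem:dx+dy} to extract that when $s=j+k\le d_x+d_y$ \emph{no} coefficients are killed, so the count is tight. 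I would double-check the small cases $d_y=0$ (where the formula gives $\frac12(d_x+1)\cdot 1\cdot(d_x+2)=\binom{d_x+2}{2}$, matching bivariate-polynomial-style counts) and $d_x=d_y=d$ to make sure the arithmetic closes.
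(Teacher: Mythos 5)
There is a genuine gap. Your overall plan — start from Lemma~\ref{lem:dx+dy}, then count the remaining free choices of the $c_{ijk}$ — is the paper's plan too, and your closed-form sum does equal $\frac12(d_x+1)(d_y+1)(d_x+d_y+2)$. But the two steps that would actually justify the count are wrong or missing. First, the reduction in your Step~1 is only half right: the absence of wraparound in $y$ (which the paper also uses) does \emph{not} make ``$g_j(x,z)=0$ for all $x,z\in\F_p$'' equivalent to ``$f(x,y,xy+z)$ has $y$-degree $\le d_y$ as a polynomial,'' because $g_{jk}(x)$ can have $x$-degree up to $2d_x+d_y$, which may reach or exceed $p$; vanishing on $\F_p$ is then strictly weaker than vanishing identically. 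The paper's proof works throughout with $\hat g_{jk}$, the reduction of $g_{jk}$ modulo $x^p-x$, and the linear system whose corank it computes mixes the coefficient of $x^m$ with that of $x^{m+(p-1)}$ — so the constraint matrix you would be analyzing is not the one you describe.

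Second, and independently, the proposed monomial description in the skew coordinate $w$ is false. Take $d_x=0$, $d_y=1$: the code $C_{0,1}$ contains $f=z$ (constant in $x$ on every $x$-line, degree $1$ in $y$ on every skew line) but not $f=z-xy$ (which has $x$-degree $1$). In the coordinates $\tilde f(x,y,w)=f(x,y,xy+w)$ these are $\tilde f=xy+w$ and $\tilde f=w$ respectively, so the codeword is the one with a monomial violating $i\le d_x$, while the pure monomial $w$ allowed by your count is \emph{not} a codeword. The graded dimensions happen to agree with your formula, but no monomial basis in either coordinate system realizes them, so the identity you propose to ``verify by a direct summation'' is an identity between two numbers, not a parametrization of the code. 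The content of the paper's proof is exactly the part you defer: for each graded piece $s=j+k\in(d_y,d_x+d_y]$ it runs a sequential elimination over $t=i+k$, uses Lemma~\ref{lemma:invertible_matrix} to show that precisely $s-d_y$ of the constraints on each slice are independent (and, in the range $t\le d_x+d_y$, consistent but not forcing), and concludes that the slice contributes $(d_y+1)(d_x+d_y+1-s)$ to the dimension; summing these together with the $(d_x+1)\binom{d_y+2}{2}$ unconstrained coefficients with $j+k\le d_y$ gives the theorem. Without that elimination argument (or a correct substitute) you have neither the upper nor the lower bound.
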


\begin{proof}
    From Lemma~\ref{lem:dx+dy}, we know that $c_{ijk} = 0$ for all $i \in [0, d_x]$ and $j+k > d_x + d_y$. Thus we can write 
    \begin{align*}
        f(x, y, z) &= \sum_{\substack{0 \le i \le d_x \\ 0 \le j + k \le d_x + d_y}} c_{ijk} x^i y^j z^k.
    \end{align*}
    Note that $c_{i'j'k'}x^{i'}y^{j'}(xy+z)^{k'} = \sum_{0 \le k \le k'} \binom{k'}{k} x^{k'-k+i'}y^{k'-k+j'}z^{k}$ and in particular the sum of the exponents of $y$ and $z$ is always $j' + k'$, so in particular the value of $g_{jk}(x)$ only depends on coefficients $c_{i'j'k'}$ were $j' + k' = j + k$ (note also that we're now in the setting where $j + k \le d_x + d_y < p$, so $g(x, y, z) = f(x, y, xy + z)$ without having to reduce modulo $y^p - y$).

    Again, we will use the fact that for all $j > d_y$ and $k \in [0, p-1]$, it holds that $g_{jk}(x) = 0~\forall x \in \bbF_p$. Note that $g_{jk}(x)$ could have degree as large as $2d_x + d_y$ which could be larger than $p$. We thus let $\hat{g}_{jk}(x)$ denote $g_{jk}(x)$ reduced modulo $x^p - x$. The condition that $g_{jk}(x) = 0~\forall x \in \bbF_p$ then is equivalent to $\hat{g}_{jk}(x) \equiv 0$. 

    We will work through the polynomials $\hat{g}_{jk}(x)$ and set each to $0$ by choosing coefficients $c_{ijk}$ appropriately. We will consider all the polynomials $\hat{g}_{jk}$ with a fixed value of $j+k$, denoted by $s$, simultaneously.

    As such, let $d_y < s \le d_x + d_y$. The relevant polynomials $g_{jk}(x)$ that evaluate to $0$ everywhere are $g_{d_y+1, s-d_y-1}, \dots, g_{s, 0}$, and the relevant coefficients are $c_{i, 0, s}, \dots, c_{i, s, 0}$, where $i \in [0, d_x]$. The dependency of $g_{d', s-d'}$ on the coefficients is given in the following matrix, interpretted the same as before.

    \begin{align*}
        \kbordermatrix{
        & g_{d_y+1,s-d_y-1} & g_{d_y+2,s-d_y-2} & g_{d_y+3,s-d_y-3} & \cdots & g_{s,0} \\
        c_{i,0,s} & \binom{s}{s-d_y-1} x^{i+d_y+1} & \binom{s}{s-d_y-2} x^{i+d_y+2} & \binom{s}{s-d_y-3} x^{i+d_y+3} & \cdots & \binom{s}{0} x^{i+s} \\
        c_{i,1,s-1} & \binom{s-1}{s-d_y-1} x^{i+d_y} & \binom{s-1}{s-d_y-2} x^{i+d_y+1} & \binom{s}{s-d_y-3} x^{i+d_y+2} & \cdots & \binom{s-1}{0} x^{i+s-1} \\
        c_{i,2,s-2} & \binom{s-2}{s-d_y-1} x^{i+d_y-1} & \binom{s-2}{s-d_y-2} x^{i+d_y} & \binom{s-2}{s-d_y-3} x^{i+d_y-1} & \cdots & \binom{s-2}{0} x^{i+s-2} \\
        \vdots & \vdots & \vdots &  & \vdots \\
        c_{i,d_y+1,s-d_y-1} & \binom{s-d_y-1}{s-d_y-1} x^{i} & \binom{s-d_y-1}{s-d_y-2} x^{i+1} & \binom{s-d_y-1}{s-d_y-3} x^{i+2} & \cdots & \binom{s-d_y-1}{0} x^{i+s-d_y-1} \\
        c_{i,d_y+2,s-d_y-2} & 0 & \binom{s-d_y-2}{s-d_y-2} x^i & \binom{s-d_y-2}{s-d_y-3} x^{i+1} & \cdots & \binom{s-d_y-2}{0} x^{i+s-d_y-2} \\
        c_{i,d_y+3,s-d_y-3} & 0 & 0 & \binom{s-d_y-3}{s-d_y-3} x^i & \cdots & \binom{s-d_y-3}{0} x^{i+s-d_y-3} \\
        \vdots & \vdots & \vdots &  & \vdots \\
        c_{i,s,0} & 0 & 0 & 0 & \cdots & \binom{0}{0} x^i
        }.
    \end{align*}

    We will set the coefficients $c_{i, s-k, k}$ starting with those with the largest value of $t := i + k$, which is $d_x + s$, and proceeding downwards. We will show that if we've already set all coefficients $c_{i,s-k,k}$ with $i + k > t$, then the degree of $\hat{g}_{d', s-d'}(x)$ is $\le d'+t-s$ (that is, all larger monomials have been set to $0$ by the previous choices of coefficient assignments). This is certainly satisfied in the base case: if $t = d_x + s$, then since $i \le d_x$ we have that the largest power of $x$ in any $g_{d', s-d'}(x)$ is $d_x+d' = d' + t - s$. This largest exponent can only decrease when we pass to $\hat{g}_{d', s-d'}(x)$. 
    
    Now, suppose that we're part way through this process and have already set the values for all $c_{i, s-k, k}$ where $i + k > t$. The number of ways we have to set the values $c_{t-k, s-k, k}$ is as follows.
    
    \begin{itemize}

    \item 
        For $t = d_x+s, ~d_x+s-1, ~\dots, ~d_x+d_y+1$, the coefficients of interest are $c_{t-s, 0, s}, ~c_{t-s+1,1,s-1}, ~\dots, ~c_{d_x,s+d_x-t,t-d_x}$. We will show that these must all be set to $0$. We look at the columns corresponding to $g_{d_y+1, s-d_y-1}, ~\dots, ~g_{d_x+d_y-t+s+1, t-d_x-d_y-1}$. Note that there are as many columns as coefficients. By the inductive assumption, the maximal degree of column $g_{d', s-d'}$ is $\le d' + t - s \le d_x + d_y + 1 < p$. Thus the coefficient of $x^{d'+t-s}$ in $g_{d', s-d'}(x)$, which is equal to $\sum_{t-d_x \le k \le s} \binom{k}{s-d'} c_{t-k,s-k,k}$, must also be $0$. This gives us the following linear system of equations:
        \[
            \begin{pmatrix}
                c_{t-s,0,s} \\
                c_{t-s+1,1,s-1} \\
                \vdots \\
                c_{d_x,s-t+d_x,t-d_x} 
            \end{pmatrix}^T 
            \cdot 
            \begin{pmatrix}
                \binom{s}{s-d_y-1} & \binom{s}{s-d_y-2} & \cdots & \binom{s}{t-d_x-d_y-1} \\
                \binom{s-1}{s-d_y-1} & \binom{s-1}{s-d_y-2} & \cdots & \binom{s-1}{t-d_x-d_y-1} \\
                \vdots & \vdots & & \vdots \\
                \binom{t-d_x}{s-d_y-1} & \binom{t-d_x}{s-d_y-2} & \cdots & \binom{t-d_x}{t-d_x-d_y-1} 
            \end{pmatrix}
            = 0,
        \]
        where the matrix is also a top left submatrix of the aforementioned matrix, ignoring the powers of $x$. Since $s > s-d_y-1$, this matrix has nonzero diagonal and by Lemma~\ref{lemma:invertible_matrix} is invertible. This implies that $c_{t-k,s-k,k} = 0$ for all $t-d_x \le k \le s$. 

        Now, we've set all $c_{ijk}$ with $j+k\ge t$. It remains to show that $\hat{g}_{d',s-d'}(x)$ is now degree $\le d'+t-s-1$. We already have that $\hat{g}_{d',s-d'}(x)$ was degree $\le d'+t-s$ from the inductive assumption. For any $\hat{g}_{d',s-d'}(x)$, the coefficient of the $x^{d'+t-s}$ term is the sum of the coefficients of $x^{d'+t-s}$ and $x^{d'+t-s+(p-1)}$ in $g_{d',s-d'}(x)$. But recall that we've set all $c_{ijk}$ with $j + k \ge t$ to $0$, so both these coefficients in $g_{d', s-d'}$ must be $0$. Therefore, $\hat{g}_{d', s-d'}(x)$ is degree $\le d' + t - s - 1$. 

    \item 
        Next, for $t = d_x + d_y, \dots, s+1$, we are looking at the $s + d_x - t + 1 > s-d_y$ coefficients $c_{t-s, 0, s}, \dots, c_{d_x, s-t+d_x, t-d_x}$. We have so far that all $\hat{g}_{d', s-d'}(x)$ have degree $\le d'+t-s$. The coefficient of $x^{d'+t-s}$ in $\hat{g}_{d',s-d'}(x)$ is equal to the sum of the coefficients of $x^{d'+t-s}$ and $x^{d'+t-s+(p-1)}$ in $g_{d', s-d'}(x)$, which in turn is equal to
        \[
            \sum_{t-d_x \le k \le s} \binom{k}{s-d'} \cdot c_{t-k, s-k, k} + \sum_{0 \le k \le s} \binom{k}{s-d'} \cdot c_{t-k+(p-1),s-k,k}, \numberthis \label{eqn:hatg1}
        \]
        where in the second summation the terms $c_{t-k+(p-1),s-k,k}$ that don't exist are understood to be $0$. Note that we've already set the values of $c_{t-k+(p-1),s-k,k}$. Thus, in order to set~\eqref{eqn:hatg1} to $0$, we need to choose $c_{t-k,s-k,k}$, $t-d_x \le k \le s$ so that 
        \[
            \sum_{t-d_x \le k \le s} \binom{k}{s-d'} \cdot c_{t-k, s-k, k} = -\sum_{0 \le k \le s} \binom{k}{s-d'} \cdot c_{t-k+(p-1),s-k,k}.
        \]
        There are $s-d_y$ such equations for the $s-d_y$ polynomials $\hat{g}_{d', s-d'}(x)$, which are all independent by Lemma~\ref{lemma:invertible_matrix} since $s \ge s-d_y-1$. Then, there are $p^{s+d_x-t+1-(s-d_y)} = p^{d_x+d_y+1-t}$ ways to choose the coefficients $c_{t-k,s-k,k}$. We remark also that once $c_{t-s,0,s},\dots,c_{d_x,s-t+d_x,t-d_x}$ are set, all $\hat{g}_{d',s-d'}(x)$ must have degree $\le d'+t-s-1$ since we chose the values so that the coefficient of $x^{d'+t-s}$ was $0$ for all columns.
        
    \item 
        The next case is $t = s, ~s-1, ~\dots, ~d_x$. In this case, we are looking at the $d_x + 1$ coefficients $c_{0,s-t,t}, \dots, c_{d_x, s-t+d_x, t-d_x}$. We have so far that all $\hat{g}_{d',s-d'}(x)$ have degree $\le d' + t - s$, and the coefficient of $x^{d' + t - s}$ in $\hat{g}_{d',s-d'}(x)$ is equal to the sum of the coefficients of $x^{d'+t-s}$ and $x^{d'+t-s+(p-1)}$ in $g_{d',s-d'}(x)$. Similar to the previous case, this results in $s-d_y$ independent linear equations (by Lemma~\ref{lemma:invertible_matrix}, using the fact that $t \ge s-d_y-1$. Thus, there are $p^{d_x+d_y+1-s}$ ways to set $c_{0,s-t,t}, \dots, c_{d_x, s-t+d_x, t-d_x}$. Note that after we've set these coefficients, the degree of $\hat{g}_{d',s-d'}(x)$ necessarily must be $\le d'+t-s-1$ by choice of these coefficients.

    \item 
        If $t = d_x-1, \dots, s-d_y$, then the coefficients we care about are $c_{0,s-t,t}, \dots, c_{t,s,0}$. We have that $\hat{g}_{d',s-d'}(x)$ has degree $\le d'+t-s$ and wish to set the $c_{t-k,s-k,k}$ so that the coefficient of $x^{d'+t-s}$ is $0$. As before, this results in $s-d_y$ linearly independent equations, which are independent because $t \ge s-d_y-1$. Thus, there are $p^{t+1-s+d_y}$ ways to set the coefficients $c_{t-k,s-k,k}$. The degrees of all $s-d_y$ polynomials $\hat{g}_{t-k,s-k,k}(x)$ are now $\le d' + t - s - 1$.

    \item 
        For $t = s-d_y-1, \dots, 0$, we are considering the $t+1$ coefficients $c_{0,s-t,t}, \dots, c_{t,s,0}$. Note also that for $d' < s-t$, we must already have that $\hat{g}_{d',s-d'}(x) \equiv 0$ since the maximum degree, if it exists, is already $\le d'+t-s$. So, we look at the $t+1$ polynomials $\hat{g}_{s-t,t}(x), \dots, \hat{g}_{s,0}(x)$. Since we want to set the coefficient of $x^{d'+t-s}$ to be $0$, this results in $t+1$ linearly independent equations in $c_{0,s-t,t},\dots, c_{t,s,0}$. So there is exactly one way to set $c_{0,s-t,t}, \dots, c_{t,s,0}$ to make all these coefficients $0$.
        
    \end{itemize}

    In total, for $d_y < s \le d_x + d_y$, if $C_s$ is the number of ways to assign all the coefficients $c_{i,s-k,k}$, then 
    \begin{align*}
        \log_p C_s 
        &= 0 + \sum_{t=s+1}^{d_x+d_y} (d_x+d_y+1-t) + \sum_{t=d_x}^s (d_x+d_y+1-s) + \sum_{t=s-d_y}^{d_x-1} (t+1-s+d_y) + 0 \\
        &= (d_y+1)(d_x+d_y+1-s).
    \end{align*}
    Furthermore, if $j+k = s \le d_y$, then $f(x,y,xy+z)$ is always degree $\le d_y$ in $y$, so all such coefficients $c_{ijk}$ are permissible. There are $(d_x+1)\cdot \binom{d_y+2}{2}$ such coefficients. In total, this gives that the dimension of the local code is 
    \begin{align*}
        (d_x+1) \cdot \binom{d_y+2}{2} + \sum_{s=d_y+1}^{d_x+d_y} \left( (d_y+1) (d_x+d_y+1-s) \right) 
        = \frac12 \cdot (d_x+1)(d_y+1)(d_x+d_y+2). 
    \end{align*}

\end{proof}

\section{Code Testability}
In this section, we will work with the field size $q = p$ being a prime. The main reason is that we will need results from Section~\ref{sec:rate-local} about the degree of codewords in $C_{d_x, d_y}$, viewed as low degree polynomials. \inote{explained degree of a codeword} \rnote{added}

\subsection{Testability of the Local Code at a Vertex}

In this section we prove that $C_v \cong C_{d_x,d_y}$ (as defined in Lemma~\ref{lemma:loc-iso}) is agreement-testable whenever $d_x + d_y < p/2$. \rnote{changed, i think this is the right condition?} Our definition of agreement testability (see Definition \ref{def:agrtest}) applies to HDX codes. Let us restate it in a form that is specialized for $C_v$:
\begin{definition}  
The local code $C_v$ defined in \eqref{eq:def:vertexcode} is $(\epsilon,\rho(\cdot))$-agreement testable if whenever we are given a collection of $z_e\in C_e$ for each $e\ni v$, such that 
    \[\a := \Pr_{uw\in X_v(1)}[z_{uv}(T_{uvw})\neq z_{wv}(T_{uvw})] < \epsilon\] then there exists some $x\in C_v$ such that 
\[ \Pr_{u\in X(0)}(z_{uv} \neq x|_{T_{uv}}) \le \rho (\alpha) . \]
\end{definition} 
Recall $C_v \cong C_{d_x,d_y}$. The local views have two kinds. Let us see the case where $v=K_1$: Cosets of $H_2$ correspond to lines $(x,b,c)$ for all $b,c\in \F$. Cosets of $H_3$ correspond to lines $(a,y,ay+c)$ for all $a,c\in \F$.
Therefore, the local views can be packaged through $X,Y$, which will be collections of degree $d_2$ and $d_3$ polynomials on the lines corresponding to cosets of $H_2$ and $H_3$, respectively.

The following theorem will immediately imply local testability. 
\begin{theorem}
\label{thm:local-testability}
    Let $X,Y: \F^3\to\F$ be such that
\begin{itemize}
    \item For each  $b,c$, the $x$ degree of $X(x,b,c)$ is at most $d_x$. 
    \item For each $a,c$, the $y$ degree of $Y(a,y,ay+c)$ is at most $d_y$. 
\end{itemize}
Then if $\bbP[X(x,y,z) \not= Y(x,y,z)] = \delta^3$ so that $p \ge 2(d_x + d_y) + 5\delta p$, there exists codeword $Q(x,y,z) \in C_{d_x,d_y}$ such that
\[\bbP_{b,c}[X(x,b,c) \not= Q(x,b,c)] + \bbP_{a,c}[Y(a,y,ay+c) \not= Q(a,y,ay+c)] \le 4\delta.\]
\end{theorem}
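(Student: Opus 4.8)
The plan is to adapt the Polishchuk--Spielman analysis of bivariate low-degree testing \cite{PolishchukS94}, but to replace their constraint-counting bound on the degree of the error-locating polynomial by the exact rate bound of Theorem~\ref{lemma:local-rate}; this substitution is exactly what makes the argument insensitive to the (small) rate of the local code. Throughout, for $(b,c)\in\F^2$ let $P_{b,c}(x)$ be the degree-$\le d_x$ polynomial with $P_{b,c}\equiv X(\cdot,b,c)$, and for $(a,c)\in\F^2$ let $R_{a,c}(y)$ be the degree-$\le d_y$ polynomial with $R_{a,c}(y)\equiv Y(a,y,ay+c)$; both exist by hypothesis. Write $D=\{(x,y,z)\in\F^3:X(x,y,z)\ne Y(x,y,z)\}$, so $\lvert D\rvert=\delta^3 p^3$.

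First I would build an error locator. Choose $k$ of order $\delta p$ with $(k+1)^3>\lvert D\rvert$. Since $\dim C_{k,k}=(k+1)^3$ by Theorem~\ref{lemma:local-rate}, and vanishing on $D$ imposes $\lvert D\rvert$ linear constraints on $C_{k,k}$, there is a nonzero $E\in C_{k,k}$ with $E|_D\equiv 0$; as $2k+2\le p$, Lemma~\ref{lem:dx+dy} makes $E$ a genuine polynomial with $\deg_x E\le k$ and combined $(y,z)$-degree $\le 2k$. Now set $H(x,y,z)=E(x,y,z)X(x,y,z)$ as a function on $\F^3$; because $E$ kills $D$ we also have $H=E\cdot Y$ pointwise. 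On every $x$-line $H$ restricts to $E(\cdot,b,c)\,P_{b,c}$, of $x$-degree $\le k+d_x$, and on every $y$-line it restricts to $E(a,\cdot,a\cdot{+}c)\,R_{a,c}$, of $y$-degree $\le k+d_y$, so $H\in C_{d_x+k,\,d_y+k}$. The hypothesis $p\ge 2(d_x+d_y)+5\delta p$ guarantees $p\ge(d_x+k)+(d_y+k)+2$, so by Lemma~\ref{lem:dx+dy} $H$ is a polynomial with $\deg_x H\le d_x+k$ and combined $(y,z)$-degree $\le d_x+d_y+2k$.

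Next I would recover $Q$ by division. Dividing $H$ by $E$ in the variable $x$ over $\F(y,z)$ gives $H=EQ+S$ with $\deg_x S<\deg_x E$. For all but an $O(\delta)$-fraction of $(b,c)$ (those with $E(\cdot,b,c)\not\equiv 0$ and no leading-coefficient degeneracy) we have $H(\cdot,b,c)=E(\cdot,b,c)\,P_{b,c}$, hence $S(\cdot,b,c)\equiv 0$; since every $x$-coefficient of $S$ is a polynomial in $(y,z)$ of total degree below $p(1-O(\delta))$ that vanishes on a $(1-O(\delta))$-fraction of $\F^2$, Schwartz--Zippel forces $S\equiv 0$. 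A content (Gauss's lemma) argument then yields $Q\in\F[x,y,z]$ with $\deg_x Q\le d_x$, combined $(y,z)$-degree $\le d_x+d_y$, and $H=EQ$ on $\F^3$. On any $x$-line with $E(\cdot,b,c)\not\equiv 0$ we get $Q(\cdot,b,c)=P_{b,c}$, so $Q$ has $x$-degree $\le d_x$ on all but an $O(\delta)$-fraction of $x$-lines; running the same division with the $y$-lines straightened by the substitution $z\mapsto ay+c$ shows $Q$ has $y$-degree $\le d_y$ on all but an $O(\delta)$-fraction of $y$-lines. Finally, for each $j>d_y$ the coefficient of $y^j$ in $Q(a,y,ay+c)$ is a polynomial $\Phi_j(a,c)$ of total degree $\le 2(d_x+d_y)<p(1-O(\delta))$ that vanishes on a $(1-O(\delta))$-fraction of $\F^2$, hence $\Phi_j\equiv 0$; so $\deg_y Q(a,y,ay+c)\le d_y$ for \emph{every} $(a,c)$, which together with $\deg_x Q(x,b,c)\le d_x$ gives $Q\in C_{d_x,d_y}$.

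The distance bound is then immediate: whenever $E(\cdot,b,c)\not\equiv 0$ the line polynomials $Q(\cdot,b,c)=H(\cdot,b,c)/E(\cdot,b,c)=P_{b,c}$ and $X(\cdot,b,c)=P_{b,c}$ coincide on the whole line, so $X\ne Q$ only on $x$-lines where all $k+1$ coefficients $E_i(y,z)$ (each of total degree $\le 2k$, not all zero) vanish, a set of density $\le 2k/p$; similarly, in the straightened coordinates, $Y\ne Q$ only on a set of $y$-lines of density $O(k/p)$, and with $k$ chosen as small as $(k+1)^3>\lvert D\rvert$ permits, the slack in $p\ge 2(d_x+d_y)+5\delta p$ makes these two densities sum to at most $4\delta$. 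The main obstacle is the recovery step: making the $x$-division rigorous (controlling the vanishing locus of the leading coefficient of $E$ and carrying out the content argument) and, above all, upgrading ``$Q$ is low-degree on almost every line of each family'' to ``$Q$ is a genuine codeword of $C_{d_x,d_y}$.'' It is exactly here that the coefficient degrees must be budgeted against the $(1-O(\delta))$-density of good lines, which is what forces the $p\ge 2(d_x+d_y)+5\delta p$ hypothesis; and because the $y$-line family is sheared rather than axis-parallel, one cannot invoke a bivariate testing lemma as a black box but must redo the interpolation by hand as above.
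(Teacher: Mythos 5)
Your overall strategy is the same as the paper's: use Theorem~\ref{lemma:local-rate} to get the dimension $(k+1)^3$ of $C_{k,k}$, extract a nonzero error-locator $E$ vanishing on the disagreement set, observe $EX=EY=:H\in C_{d_x+k,d_y+k}$ with degrees controlled by Lemma~\ref{lem:dx+dy}, divide to get $Q$, and charge the disagreement of $X$ (resp.\ $Y$) with $Q$ to the $\le 2\delta$-fraction of axis (resp.\ sheared) lines on which $E$ vanishes identically. The construction of $E$, the degree bookkeeping for $H$, and the final $4\delta$ accounting all match the paper.

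However, the divisibility step --- the heart of the argument --- is where your route breaks. You propose to divide $H$ by $E$ in $x$ over $\F(y,z)$, specialize to lines $(\cdot,b,c)$ where the leading coefficient of $E$ is nonzero to conclude the remainder $S$ vanishes there, and then kill $S$ by Schwartz--Zippel because its coefficients ``have total degree below $p(1-O(\delta))$.'' That degree claim is false in the relevant parameter range. The coefficients of $S$ are rational functions whose denominators are powers of the leading coefficient $E_{k'}(y,z)$ (itself of degree up to $2k\approx 2\delta p$); clearing denominators, $L^{m}H=E\tilde Q+\tilde S$ with $m=\deg_x H-\deg_x E+1\le d_x+1$, and the $(y,z)$-degrees of the coefficients of $\tilde S$ grow by roughly $2k$ per division step, ending up at $\Theta(d_x\cdot\delta p)$. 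Since the theorem must cover $d_x+d_y$ up to nearly $p/2$, this is $\Theta(\delta p^2)\gg p$, so a bivariate polynomial of that degree vanishing on a $(1-O(\delta))$-fraction of $\F^2$ need not be zero, and the Schwartz--Zippel step gives nothing. This is precisely why the paper (following Polishchuk--Spielman) replaces direct division by the resultant of $P$ and $E$ as polynomials in $x$: the resultant is a \emph{single} bivariate polynomial of degree at most $a\cdot 2(d_y+b)+(a+d_x)\cdot 2b$, and on each of the $|Y_x|\cdot|Z_x|$ good grid points it vanishes with multiplicity $a$ (all $a$ of the $P$-rows of the Sylvester matrix become dependent on the $E$-rows). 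It is this multiplicity-$a$ amplification, absent from your argument, that makes the zero count exceed the degree and forces the resultant to vanish identically, whence $E\mid P$. Your proposal flags the division as ``the main obstacle'' but the fix is not a more careful content/leading-coefficient analysis; it is a structurally different (resultant-based) argument.

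Two smaller remarks. First, your endgame for membership $Q\in C_{d_x,d_y}$ (interpolating the excess coefficients $\Phi_j(a,c)$ of the sheared restrictions and killing them by a degree-versus-density comparison using $2(d_x+d_y)<p(1-5\delta)$) is sound and is in fact slightly more careful than what the paper writes. Second, your agreement bound correctly mirrors the paper's: lines where $E$ does not vanish identically force $Q=P_{b,c}$, and $E$ can vanish on at most a $2\delta$-fraction of lines in each family by its combined $(y,z)$-degree bound from Lemma~\ref{lem:dx+dy}.
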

Before proving Theorem~\ref{thm:local-testability}, let us see that it immediately implies agreement testability. 

\begin{corollary} \label{cor:agr-test-local}
    Assuming that $d_x + d_y < \frac{p}{2}$, the local code $C_v = C_{d_x, d_y}$ is $\left( \left( \frac{p - 2(d_x + d_y)}{5p} \right)^3, 4 (\cdot)^{1/3} \right)$-agreement testable. \rnote{changed so holds for general $d_x + d_y$ up to $p/2$}
\end{corollary}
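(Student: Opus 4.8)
The plan is to reduce the corollary to Theorem~\ref{thm:local-testability} by repackaging an agreement instance for $C_v$ into a single pair of functions on $\F^3$, invoking the theorem, and then translating the conclusion back to the link. Since the code is transitive on each colour class and the three colours play symmetric roles, I would fix $v=K_1$ and use the explicit picture from (the proof of) Lemma~\ref{lemma:loc-iso}: under $\T v\cong\F^3$, the edges $e\ni v$ split into cosets of $H_2$, indexed by $(b,c)\in\F^2$ and carrying the line $\{(x,b,c):x\in\F\}$, and cosets of $H_3$, indexed by $(a,c)\in\F^2$ and carrying the line $\{(a,y,ay+c):y\in\F\}$. Given $\{z_e\in C_e\}_{e\ni v}$, I would define $X:\F^3\to\F$ by declaring $X(\cdot,b,c)$ to be the degree-$\le d_x$ polynomial $z_e$ on the $H_2$-coset $e=(*,b,c)$, and $Y:\F^3\to\F$ by declaring $y\mapsto Y(a,y,ay+c)$ to be the degree-$\le d_y$ polynomial $z_{e'}$ on the $H_3$-coset $e'=(a,*,c)$. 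These are well defined because the $H_2$-lines (respectively the $H_3$-lines) partition $\F^3$, and by construction $X$ and $Y$ satisfy exactly the two degree hypotheses of Theorem~\ref{thm:local-testability} with $(d_x,d_y)=(d_2,d_3)$.

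Next I would match the error parameters. Each point $(a,b,c)\in\F^3$ lies on a unique $H_2$-line, namely $(*,b,c)$, and a unique $H_3$-line, namely $(a,*,c-ab)$, and these two vertices of the link $X_v$ are joined by an edge whose unique common triangle is $(a,b,c)$ itself; conversely every edge of $X_v$ arises in this way, so a uniformly random edge of the link corresponds to a uniformly random point of $\F^3$. Under this identification $z_{uv}$ evaluated at the common triangle equals $X(a,b,c)$ and $z_{wv}$ evaluated there equals $Y(a,b,c)$, so the agreement-test rejection probability is $\alpha=\Pr_{uw\in X_v(1)}[z_{uv}(T_{uvw})\neq z_{wv}(T_{uvw})]=\Pr_{(a,b,c)}[X(a,b,c)\neq Y(a,b,c)]=:\delta^3$, i.e.\ $\delta=\alpha^{1/3}$. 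The hypothesis $\alpha<\epsilon:=\bigl((p-2(d_x+d_y))/5p\bigr)^3$, which is a genuinely positive number precisely when $d_x+d_y<p/2$, is then equivalent to $5\delta p<p-2(d_x+d_y)$, which is exactly the condition $p\ge 2(d_x+d_y)+5\delta p$ under which Theorem~\ref{thm:local-testability} applies.

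Finally I would translate the conclusion back. The theorem produces $Q\in C_{d_x,d_y}$ with $\Pr_{b,c}[X(x,b,c)\neq Q(x,b,c)]+\Pr_{a,c}[Y(a,y,ay+c)\neq Q(a,y,ay+c)]\le 4\delta$, where each probability counts the fraction of lines on which two univariate polynomials of degree below $p$ differ (equivalently, differ everywhere). Letting $x\in C_v$ be the codeword corresponding to $Q$ under $C_v\cong C_{d_x,d_y}$, the first term is the fraction of $H_2$-vertices $u$ of the link with $z_{uv}\neq x|_{T_{uv}}$ and the second is the fraction of $H_3$-vertices with that property; since the link $X_v$ is bipartite with both sides of size $q^2$, a uniformly random vertex of $X_v$ is of each type with probability $\tfrac12$, so $\Pr_u[z_{uv}\neq x|_{T_{uv}}]\le\tfrac12\cdot 4\delta=2\delta\le 4\alpha^{1/3}$, the asserted bound $\rho(\alpha)=4\alpha^{1/3}$. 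I expect the only real subtlety to be the setup: pinning down the dictionary between the link $X_v$ and the pair $(\F^3,\text{its two rulings by lines})$ and verifying that it carries the uniform distribution on link-edges to the uniform distribution on points of $\F^3$; once that is in place the rest is bookkeeping, and the slack factor of $2$ above confirms that the constant $4$ is not tight.
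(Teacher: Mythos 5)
Your proposal is correct and follows essentially the same route as the paper: fix $v=K_1$ by transitivity, package the local views into the pair $(X,Y)$ on $\F^3$ via the two rulings by $H_2$- and $H_3$-lines, identify link-edge disagreement with pointwise disagreement of $X$ and $Y$, invoke Theorem~\ref{thm:local-testability}, and pull the resulting $Q$ back to a codeword of $C_v$. Your extra care with the edge-to-point dictionary and the bipartite factor of $2$ only makes explicit what the paper leaves implicit.
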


\begin{proof}
    Without loss of generality, let $v$ be the coset $K_1$ in $X(G; K_1,K_2,K_3)$ (the argument applies to other choices of $v$ since $G$ acts transitively on the code $C_v$, see  \cref{clm:trans}). 
    Then every edge $e\ni v$ is of type $2$ or type $3$. Recall from Theorem~\ref{thm:mainX} that type 2 edges $e$ are cosets $\{gh_2(x)\}_x$ while type 3 edges $e$ correpsond to cosets of the form $\{gh_3(y)\}_y$.
    Given a collection of local views $z_e\in C_e$, we define functions $X,Y: \F^3 \to \F$ in the following way:
    \begin{align*}
        &X(x,b,c) = z_{gh_2}(gh_2(x)) ~\text{where } g = \iota^{-1}(0,b,c), \\
        &Y(a,y,ay+c) = z_{gh_3}(gh_3(y)) ~\text{where } g = \iota^{-1}(a,0,c).
    \end{align*}
    Here $\iota$ is the embedding of elements in $K_1$ to $\F^3$ as used in the proof of \cref{lemma:loc-iso}. Since $z_{gh_2}(gh_2(x))$ has degree at most $d_2$ and $z_{gh_3}(gh_3(y))$ has degree at most $d_3$, $X$ and $Y$ satisfy the conditions in the theorem statement for $d_x = d_2, d_y =d_3$. Also by construction of $X$ and $Y$
    \begin{align*}
        \Pr_{uw\in X_v(1)}[z_{uv}(T_{uvw})\neq z_{wv}(T_{uvw})] = \bbP_{x,y,z}[X(x,y,z) \not= Y(x,y,z)]
    \end{align*}

    Similarly, given a codeword $Q\in C_{d_x.d_y}$, by \cref{lemma:loc-iso} we can define its corresponding codeword $f \in C_v$ as 
    \[f(g) = Q(\iota(g)).\] 
    Then the disagreement probability between $z_e$ and $f$ satisfies:
    \[ 2\Pr_{u\in X(0)}(z_{uv} \neq f|_{T_{uv}}) = \bbP_{b,c}[X(x,b,c) \not= Q(x,b,c)] + \bbP_{a,c}[Y(a,y,ay+c) \not= Q(a,y,ay+c)].\]
    
    Now applying \cref{thm:local-testability} to $z_e$ and $f$ we have that the local code $C_v$ is $\left( \left( \frac{p - 2(d_x + d_y)}{5p} \right)^3, 4 (\cdot)^{1/3} \right)$-agreement testable.

\end{proof}


We now move to prove Theorem~\ref{thm:local-testability}.
Let $S$ denote the set of all points $(x,y,z)$ on which $X(x,y,z) \not= Y(x,y,z)$. Consider all polynomials $e(x,y,z)$ that are degree $\delta p$ in $x$, and degree $\delta p$ in $y$ in $e(x,y,xy+z)$. By \cref{lemma:local-rate} , the dimension of the space of such polynomials is $(\delta p + 1)^3$. Thus, by dimension counting, there is a nonzero polynomial $E(x,y,z)$ that evaluates to $0$ on all points of $S$. Then, we have that for all $x,y,z$, $X(x,y,z) E(x,y,z) = Y(x,y,z) E(x,y,z)$.

We have that $X(x,y,z) E(x,y,z)$ is degree $d_x + \delta p$ in $x$, and $Y(x,y,xy+z) E(x,y,xy+z)$ is degree $d_y + \delta p$ in $y$. Thus there is a polynomial $P(x,y,z)$ that is degree $d_x + \delta p$ in $x$, and degree $d_y + \delta p$ in $y$ in $P(x,y,xy+z)$, that agrees on all points. That is, 
\[
    X(x,y,z) E(x,y,z) = Y(x,y,z) E(x,y,z) = P(x,y,z) ~\forall x,y,z.
\]
We'd like to formally divide $P(x,y,z)$ by $E(x,y,z)$ to obtain a polynomial $Q(x,y,z)$ that is degree $d_x$ in $x$ and degree $d_y$ in the skew-$y$ direction. 


\begin{lemma}
    Let $E(x,y,z)$ be a polynomial of degree $(a,b)$ in the $x$ and skew-$y$ directions, and let $P(x,y,z)$ be a polynomial of degree $(a+d_x, b+d_y)$ in the $x$ and skew-$y$ directions. If there exists $Y_x, Z_x \subseteq \bbF_p$ such that $E(x,y_0,z_0)$ divides $P(x,y_0,z_0)$ for $(y_0, z_0) \in Y_x \times Z_x$, and $X_y, Z_y \subseteq \bbF_p$ such that $E(x_0, y, x_0 y + z_0)$ divides $P(x_0, y, x_0 y + z_0)$ for $(x_0, z_0) \in X_y \times Z_y$, and if 
    \[
        \min(|Y_x|, |Z_x|) \ge 2(d_x + d_y + 2b) + a ~~~\text{and}~~~ \min(|X_y|, |Z_y|) \ge 2(d_x + d_y + 2a) + b,
    \] 
    then $E(x,y,z)$ divides $P(x,y,z)$.
\end{lemma}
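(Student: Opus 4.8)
The plan is to adapt the Polishchuk--Spielman bivariate division argument (as used above in the local-testing analysis) to the skew coordinate system $(x,y,z)\mapsto(x,y,xy+z)$ underlying $C_{d_x,d_y}$. The goal is to show that the formal quotient $P/E\in\F_p(x,y,z)$, which is a priori only a rational function, is actually a polynomial; equivalently, writing $P/E=N/D$ in lowest terms in the UFD $\F_p[x,y,z]$ (so that $D=E/\gcd(E,P)$ and $E\mid P$ iff $D$ is a constant), that $D$ is a constant. The two hypotheses will be used to bound $D$ in the $x$-direction and in the skew-$y$-direction separately, and one final use of the hypotheses will kill whatever common part survives.

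First step ($x$-direction). Let $\ell(y,z)$ be the leading coefficient of $E$ regarded as a polynomial in $x$. Pseudo-dividing $P$ by $E$ over $\F_p[y,z]$ produces an identity
\[
\ell(y,z)^{m}\,P(x,y,z)=Q(x,y,z)\,E(x,y,z)+R(x,y,z),\qquad \deg_x R<\deg_x E,
\]
with $Q,R\in\F_p[x,y,z]$ and $m\le d_x+1$. For any $(y_0,z_0)\in Y_x\times Z_x$: if $\ell(y_0,z_0)\ne 0$ then $E(\cdot,y_0,z_0)$ keeps its $x$-degree and, by hypothesis, divides $P(\cdot,y_0,z_0)$, hence it divides $R(\cdot,y_0,z_0)$, and $\deg_x R<\deg_x E$ forces $R(\cdot,y_0,z_0)\equiv 0$; and if $\ell(y_0,z_0)=0$ then $\ell R$ vanishes on that line trivially. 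So the polynomial $\ell\cdot R$ vanishes on the entire grid $Y_x\times Z_x$ for every value of $x$. Bounding $\deg_y(\ell R)$ and $\deg_z(\ell R)$ — using Lemma~\ref{lem:dx+dy} to convert the ``skew-$y$-degree $\le b$'' hypothesis on $E$ (and ``$\le b+d_y$'' on $P$) into a bound on the total $(y,z)$-degree of $E$ and $P$, and tracking how this propagates into $Q$ and $R$ — shows that these degrees lie below $\min(|Y_x|,|Z_x|)$ as soon as $\min(|Y_x|,|Z_x|)\ge 2(d_x+d_y+2b)+a$, so $R\equiv 0$ and therefore $E\mid \ell^{\,m}P$ with $\ell\in\F_p[y,z]$. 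Running the mirror-image argument in the skew-$y$ variable (after the linear substitution $w=z-xy$, which turns skew-$y$-lines into coordinate lines, and using the symmetric form of Lemma~\ref{lem:dx+dy}) gives $E\mid (\ell')^{\,m'}P$ with $m'\le d_y+1$ and $\ell'\in\F_p[x,z]$ the leading coefficient in $y$ of $E(x,y,xy+w)$.

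Combining. The two divisibilities give $D\mid \ell^{\,m}$ and $D\mid (\ell')^{\,m'}$, so $D\mid\gcd(\ell^{\,m},(\ell')^{\,m'})$. Any irreducible factor of $\ell$ lies in $\F_p[y,z]$ and any irreducible factor of $\ell'$ lies in $\F_p[x,z]$, so a common irreducible factor lies in $\F_p[z]$; hence $D=\prod_i(z-c_i)^{f_i}$ for some $c_i\in\F_p$. Finally, if $(z-c)\mid D\mid E$, then on each skew-$y$ line $\{(x_0,y,x_0y+z_0):y\in\F_p\}$ with $(x_0,z_0)\in X_y\times Z_y$ and $x_0\ne0$, the linear factor $x_0y+z_0-c$ of $E(x_0,y,x_0y+z_0)$ divides $P(x_0,y,x_0y+z_0)$, so $P\big(x_0,(c-z_0)/x_0,\,c\big)=0$; as $(x_0,z_0)$ ranges over the grid this forces $P(x,y,c)\equiv0$ (using $|Z_y|>\deg_y P$ and $|X_y|>\deg_x P+1$, both implied by $\min(|X_y|,|Z_y|)\ge 2(d_x+d_y+2a)+b$), i.e. $(z-c)\mid P$; iterating this on multiplicities yields $v_{z-c}(P)\ge v_{z-c}(E)$, so $(z-c)\nmid D$. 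Hence $D$ is a constant and $E\mid P$.

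The step I expect to be the real work is the degree accounting in the first step. Pseudo-division multiplies through by $\ell^{m}$, and the only degrees one directly controls are the $x$-degree and the \emph{skew}-$y$-degree rather than the honest $y$- and $z$-degrees, so one must pass through Lemma~\ref{lem:dx+dy} (and its mirror image for the skew-$y$ pass) to turn the skew degree into a bound on the combined $(y,z)$-degree, and then carefully propagate that bound into $Q$ and $R$; this is exactly where the somewhat unusual shape $2(d_x+d_y+2b)+a$ (resp.\ $2(d_x+d_y+2a)+b$) of the required grid sizes comes from. A secondary point requiring care is setting up the substitution $w=z-xy$ so that the two passes are genuinely symmetric, and checking that the concluding ``$(z-c)\mid P$'' argument has enough room, which is what the extra slack in the grid-size bounds provides.
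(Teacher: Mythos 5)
Your proposal takes a genuinely different route from the paper: you localize via pseudo-division by the leading coefficient $\ell$ in each direction and then intersect the two "exceptional" factors, whereas the paper (following Polishchuk--Spielman) first reduces to the case $\gcd(P,E)=1$, forms the $(2a+d_x)\times(2a+d_x)$ Sylvester-type matrix of $P$ and $E$ viewed as polynomials in $x$, and shows its determinant (the resultant $R(P,E)(y,z)$) vanishes identically. Unfortunately, the step you yourself flag as "the real work" --- the degree accounting for the pseudo-remainder --- is where the argument breaks. Pseudo-division performs up to $m\le d_x+1$ reduction steps, each of which multiplies through by $\ell$, so the $(y,z)$-degree of $R=\ell^m P-QE$ is only bounded by roughly $\deg_{(y,z)}(P)+(d_x+1)\deg_{(y,z)}(E)$, i.e. it grows \emph{multiplicatively} like $d_x\cdot b$. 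The hypothesis only supplies grids of size $2(d_x+d_y+2b)+a$, which is \emph{additive} in $d_x$ and $b$; already for $b\ge 2$ and $d_x$ large the remainder's degree exceeds the grid size and you cannot conclude $R\equiv 0$. (In the intended application $a=b=\delta p$ and $d_x$ is a constant fraction of $p$, so the discrepancy is enormous.) The paper's resultant argument avoids this precisely because the divisibility at a grid point forces all $a$ of the $P$-rows of the Sylvester matrix to be linear combinations of the $E$-rows, so $R(P,E)$ vanishes to order $a$ at each grid point; comparing $a\cdot\min(|Y_x|,|Z_x|)$ against $\deg R(P,E)\le a\cdot 2(d_y+b)+(a+d_x)\cdot 2b$ is what produces the shape $2(d_x+d_y+2b)+a$. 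Your pseudo-division route has no analogous multiplicity gain, so the stated hypotheses are simply not strong enough for it.

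Two smaller points. First, in the combining step, a common irreducible factor of $\ell^m\in\F_p[y,z]$ and $(\ell')^{m'}\in\F_p[x,z]$ indeed lies in $\F_p[z]$, but it need not be linear over $\F_p$; your concluding argument evaluates at the root $y=(c-z_0)/x_0$ and so silently assumes every irreducible factor splits over $\F_p$. This is repairable (pass to $\overline{\F_p}$, or run the vanishing argument with $g(x_0y+z_0)$ for $g$ irreducible of higher degree), but as written it is a gap. Second, the reduction to coprime $P,E$ that the paper carries out carefully (tracking how the grids $Y_x,Z_x,X_y,Z_y$ shrink when the common factor is divided out) is needed in some form in your multiplicity-iteration step as well, and is glossed over. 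The main obstruction, however, is the degree accounting above.
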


\begin{proof}
    Assume without loss of generality that $a \ge b$ (otherwise we can consider the change of basis $P'(x,y,z) = P(y,x,xy-z)$ and $E'(x,y,z) = E(y,x,xy-z)$). We can also assume that $P$ and $E$ share no common factors, as follows: Let $F(x,y,z)$ be the largest common factor of $P(x,y,z)$ and $E(x,y,z)$. Assume by way of contradiction that $F \not\equiv E$ and that $F(x,y,z)$ has degree $(e,f)$ in the $x$ and skew-$y$ directions. Set
    \[
        P(x,y,z) \equiv \bar{P}(x,y,z) F(x,y,z) ~\text{and}~ E(x,y,z) \equiv \bar{E}(x,y,z) F(x,y,z). 
    \]
    We now divide $P$ and $E$ by $F$ and apply the lemma to $\bar{P}$ and $\bar{E}$. Let $\bar{Y}_x, \bar{Z}_x$ be such that $\bar{E}(x,y_0,z_0) | \bar{P}(x,y_0,z_0)$ for all $(y_0, z_0) \in \bar{Y}_x \times \bar{Z}_x$ and let $\bar{X}_y, \bar{Z}_y$ be such that $\bar{E}(x_0, y, x_0y+z_0) | \bar{P}(x_0,y,x_0y+z_0)$ for all $(x_0, z_0) \in \bar{X}_y \times \bar{Z}_y$. Then note that for any $y_0 \in Y_x$ either $\bar{E}(x,y_0,z_0) | \bar{P}(x,y_0,z_0)$ for all $z_0 \in Z_x$ or $F(x, y_0, z) = 0$ for all $x \in \bbF$ and $z \in Z_x$. If $\Sigma_y$ denotes all the values $y_0$ for which the latter occurs, since $|Z_x| > e + f$ which is at least the degree of $z$ in $F(x, y, z)$ by Lemma~\ref{lem:dx+dy}, we can write $F(x, y, z) = \prod_{y_0 \in \Sigma_y}(y - y_0) \cdot F'(x,y,z)$, from which it follows that $|\Sigma_y| \le f$. This implies that $|\bar{Y}_x| \ge |Y_x| - f$. Similarly we can show that $|\bar{Z}_x| \ge |Z_x| - f$, $|\bar{X}_y| \ge |X_y| - e$, and $|\bar{Z}_y| \ge |Z_y| - e$. The conditions of the lemma are satisfied because $\min(|\bar{Y}_x|, |\bar{Z}_x|) \ge 2(d_x + d_y + 2b) + a - f \ge 2(d_x + d_y + 2(b-f)) + (a-e)$ and $\min(|\bar{X}_y|, |\bar{Z}_y|) \ge 2(d_x + d_y + 2a) + b - e \ge 2(d_x + d_y + 2(a-e)) + (b - f)$. \rnote{changed the lemma assumption and this entire paragraph -- i think there was an error earlier but hopefully its ok now}
    

    So, we can assume that $P(x,y,z)$ and $E(x,y,z)$ have no common factors. We will use this assumption to obtain a contradiction. Write 
    \begin{align*}
        P(x,y,z) &\equiv P_0(y,z) + P_1(y,z) x + \cdots + P_{a+d_x}(y,z) x^{a+d_x} \\
        E(x,y,z) &\equiv E_0(y,z) + E_1(y,z) x + \cdots + E_a(y,z) x^a.
    \end{align*}
    Then, $P(x,y,z)$ and $E(x,y,z)$ have a common factor if there exists $A(x,y,z)$ that is degree $\le a-1$ in $x$ and $B(x,y,z)$ that is degree $\le a+d_x-1$ in $x$ such that 
    \[
        P(x,y,z) A(x,y,z) = E(x,y,z) B(x,y,z),
    \]
    or 
    \begin{align*}
        P_{a+d_x} A_{a-1} &= E_a B_{a+d_x-1} \\
        P_{a+d_x-1} A_{a-1} + P_{a+d_x} A_{a-2} &= E_{a-1} B_{a+d_x-1} + E_a B_{a+d_x-2} \\
        &\vdots \\
        P_0 A_0 &= E_0 B_0.
    \end{align*}
    These equations are linear in $A_i$ and $-B_i$ and can be summarized by the following $(2a + d_x) \times (2a + d_x)$ matrix: 
    \[
        M(P,E)(y,z) = 
        \begin{pmatrix}
            P_{a+d_x} & P_{a+d_x-1} & \cdots & \cdots & \cdots & P_0 & 0 & \cdots & 0 \\
            0 & P_{a+d_x} &  & \cdots & \cdots & P_1 & P_0 & \cdots & 0  \\
            \vdots & \ddots & \ddots & & & & \ddots & \ddots & \vdots \\
            0 & \cdots & 0 & P_{a+d_x} & \cdots & \cdots & \cdots & P_1 & P_0 \\
            E_a & E_{a-1} & \cdots & E_1 & E_0 & 0 & \cdots & \cdots & 0 \\
            \vdots & \ddots & & & & & \ddots & & \vdots \\
            \vdots & & \ddots & & & & & \ddots & 0 \\
            0 & \cdots & \cdots & \cdots & 0 & E_a & E_{a-1} & \cdots & E_0 
        \end{pmatrix}
    \]
    consisting of $a$ rows with $P$ entries and $a+d_x$ rows of $E$ entries.
    
    We define $R(P,E)(y,z)$, the resultant of $P$ and $E$, to be the polynomial in the coefficients of $P$ and $E$ (viewing both as a polynomial in $x$) obtained by taking the determinant of $M(P,E)$. Solutions $A$ and $B$ exist if $R(P,E) = 0$, which would give our contradiction. 

    Viewing $R(P,E)$ as a polynomial in $y$ and $z$, and recalling by Lemma~\ref{lem:dx+dy} that $P$ (resp. $E$) is total degree $\le 2(d_y+b)$ (resp. $\le 2b$) in $y$ and $z$, we see that $R(P,E)$ has degree $\le a \cdot 2(d_y+b) + (a+d_x) \cdot 2b$.

    Meanwhile, for each $(y_0, z_0) \in Y_x \times Z_x$, we have that $E(x, y_0, z_0) | P(x, y_0, z_0)$, so each of the top $a$ rows are linear combinations of the bottom $a+d_x$ rows. Thus, the polynomial $R(P,E)$ has a zero at each $(y_0, z_0) \in Y_x \times Z_x$ of multiplicity $a$. Because we assumed that 
    \[
        a \cdot \min (|Y_x|, |Z_x|) 
        \ge a \cdot \left( 2(d_x + d_y + 2b) + a \right)
        > a \cdot 2(d_y+b) + (a+d_x) \cdot 2b ,
    \]
    $R(P,E)(y,z)$ must be the zero polynomial by Schwarz-Zippel. This implies that $P(x,y,z)$ and $E(x,y,z)$ must have non-trivial common factor when considered as polynomials in $x$, which implies (since we assumed that they're coprime) that $E(x,y,z) | P(x,y,z)$ in $\mathbb{F}(y,z)$. Then, by Gauss' Lemma, this implies that $E(x,y,z) | P(x,y,z)$ when considered as polynomials in $\mathbb{F}[y,z]$, the ring of polynomials in $x$.

\end{proof}

Thus, $Q(x,y,z)$ that is degree $d_x$ in $x$ and degree $d_y$ in the skew-$y$ directions exists. The final step in the proof of Theorem~\ref{thm:local-testability} is to analyze the probability of the local views $X$ and $Y$ disagreeing with $Q$.

We have that
\[
    X(x,y,z)E(x,y,z) = Y(x,y,z)E(x,y,z) = Q(x,y,z)E(x,y,z). 
\]
Thus, for any $(b,c)$ for which $E(x, b, c)$ is nonzero, we have that $Q$ agrees with $X$ on the entire row. Since $E$ can be zero on at most $2\delta p^2$ rows (since the combined degree of $y$ and $z$ in $E(x, y, z)$ is at most $2\delta p$ by Lemma~\ref{lem:dx+dy}), this means that \snote{should it be $\delta p^2$?} \rnote{oops, good point. i think you can show $2\delta p^2$: view $E(x,y,z)$ as a $p \times p$ grid in terms of $y \times z$ with entries in $\bbF(x)$. Then the maximal number of zeros is at most max($x$ degree, $y$ degree) $\times p$ by that one theorem which i can't remember the name of ...}
\[
    \Pr_{b,c} [X(x,b,c) \not= Q(x, b, c)] \le 2\delta.
\]
Similarly, we have that $\Pr_{a,c}[Y(a,y,ay+c) \not= Q(a,y,ay+c)] \le 2\delta$. This proves Theorem~\ref{thm:local-testability}. \rnote{rewrote this to be shorter, i think there also might've been an issue with the previous proof?}

\remove{

----------

\begin{lemma}
    For any vertex $v$, the code $C_v$ is isomorphic to an HDX code
    $\calC[X_v,\set{C_u^{loc}}]$ for an appropriate choice of codes $C^{loc}_u$ for each vertex $u\in X_v(0)$.
\end{lemma}
\begin{proof}
    Let $G^{loc} = (V^{loc},E^{loc})= X_v$ be the link of $v$.
    Let $f\in C_{n,d_1,d_2,d_3}|_{T_v}$, and consider the bijection
    \[\psi:\F^{T_v}\to \F^{E^{loc}}\] given by $\psi f(uw) = f(uvw)$. For each vertex $u\in V^{loc}$ let \[C^{loc}_u = \psi(C_{uv})\subset\F^{E_u^{loc}}\]
    where $E^{loc}_u$ is the set of edges in $E^{loc}$ that contain $u$.
    Clearly $C^{loc}_u \cong C_{uv}$.
    So by construction $\psi f$ is contained in the expander code $\calC[G^{loc},\set{C_u^{loc}}]$.
    It is easy to see as well that any $f'\in \calC[G^{loc},\set{C_u^{loc}}]$ can be written as $f' = \psi f$ for some $f\in C_{n,d_1,d_2,d_3}|_{T_v}$, by setting $f(uvw) := f'(uw)$, 
    so the codes are isomorphic.
\end{proof}
----------

}\label{sec:locLTC}
\subsection{Local Testability: From Local to Global}

Our main theorem is that if the local codes around an edge have distance $\delta$, and the local codes around a vertex are agreement-testable, then the global code $C$ is agreement-testable, and therefore robustly locally testable.
\begin{theorem}\label{thm:loc2glob}
Let $\epsilon_0,\delta>0$, let $\rho_0(\cdot)$ be a monotone increasing function and assume that $\gamma < \min(\frac \delta 8,\frac {\delta^2}{128}\min(\epsilon_0,\rho_0^{-1}(\frac \delta 4)))$. 
Let $X$ be a two-dimensional bounded-degree $\gamma$-one-sided link expander. 
Suppose we are given codes $C_e\subset \bits^{\T e}$ with relative minimum distance at least $\delta$ for each edge $e\in X(1)$, and let $C_v$ and $C$ be as defined above. 
If $C_v$ is $(\epsilon_0,\rho_0(\cdot))$-agreement-testable for all $v\in X(0)$ then $C$ is $(\epsilon,\rho(\cdot))$-agreement-testable, where 
$\epsilon=\frac {\delta^2}{128}\min(\epsilon_0,\rho_0^{-1}(\frac \delta 4))$ and where $\rho(t) = Dt$ for $D$ the maximal degree of a vertex in $X$.  Namely, given any collection of local views $\sett{z_v\in C_v}{v\in X(0)}$, if 
\[ \alpha(z)=\Pr_{uv\in X(1)}[z_u(\T{uv}) \neq z_v(\T{uv})] < \epsilon\]
then there is some $x\in C$ such that
\[\Pr_v[x|_{\T v} \neq z_v] \leq D\cdot \alpha(z).\]
\end{theorem}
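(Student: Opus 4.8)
The plan is to run the standard ``local-to-global'' argument for cosystolic/agreement expansion, adapted to the weaker (quadratic) local agreement-testability we have. First I would fix a collection of local views $\{z_v\in C_v\}$ with $\alpha=\alpha(z)<\epsilon$, and define the ``global disagreement'' edge set $B=\{uv\in X(1): z_u(\T{uv})\neq z_v(\T{uv})\}$, so $\Pr[B]=\alpha$. The goal is to produce $x\in C$ with $\Pr_v[x|_{\T v}\neq z_v]\le D\alpha$. The natural candidate for $x$ is obtained vertex-by-vertex: for each vertex $v$, the local views $\{z_u|_{\T{uv}}\}_{u\in X_v(0)}$ together with the local code structure at $v$ should, if the local disagreement around $v$ is small, be ``corrected'' via agreement-testability of $C_v$ to a single codeword $\hat z_v\in C_v$; then one argues these $\hat z_v$'s glue into a global codeword.

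The key steps, in order. \textbf{(1) Pass from the global edge-disagreement to local edge-disagreement inside most links.} For a vertex $v$, let $\beta_v=\Pr_{uw\in X_v(1)}[z_u(\T{uvw})\neq z_w(\T{uvw})]$ be the disagreement seen inside the link of $v$. Using that $X$ is a $\gamma$-link-expander one bounds $\E_v[\beta_v]$ in terms of $\alpha$ plus $\gamma$-error: the walk that picks a vertex $v$, then two edges $uv,wv$ in $\T v$, is (up to the $\gamma$ correction from Lemma~\ref{lemma:updown}) close to picking a random triangle and two of its edges. This is where I would invoke Lemma~\ref{lemma:updown} / Lemma~\ref{lemma:sM}: a Markov-type inequality then says that for all but an $O(\gamma/\delta^2)$-ish fraction of vertices $v$, $\beta_v$ is below $\epsilon_0$ and below $\rho_0^{-1}(\delta/4)$, so that agreement-testability of $C_v$ applies and yields $\hat z_v\in C_v$ with $\Pr_u[z_u|_{\T{uv}}\neq \hat z_v|_{\T{uv}}]\le \rho_0(\beta_v)\le \delta/4$. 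Call these the good vertices $V^{\ast}$.

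\textbf{(2) Show the $\hat z_v$'s agree across good edges, hence define a consistent $x$.} If $uv$ is an edge with both $u,v\in V^{\ast}$, then on the edge $uv$ we are comparing $\hat z_u|_{\T{uv}}$ and $\hat z_v|_{\T{uv}}$; both are close (in the $\delta/4$ sense, aggregated over the link) to the original $z$-data, and both $\hat z_u|_{\T{uv}},\hat z_v|_{\T{uv}}$ lie in $C_{uv}$ which has distance $\delta$. A counting/expander-mixing argument over the links of $u$ and of $v$ shows that $\hat z_u$ and $\hat z_v$ restrict to words on $\T{uv}$ that disagree on fewer than $\delta$ of the triangles of that edge, hence they are equal there. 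So there is a well-defined $x$ on the triangles covered by good vertices, and one checks $x|_{\T v}=\hat z_v$ for $v\in V^{\ast}$; extending $x$ arbitrarily (say by $0$) on the remaining triangles. \textbf{(3) Show $x\in C$ and bound $\dist$.} Local-code membership $x|_{\T e}\in C_e$ for every edge $e$ follows because each edge $e=uv$ has, by the expansion of $X$ and the smallness of $\epsilon$, at least one good vertex among $u,v$ whose link certifies $x|_{\T e}\in C_e$ (again using distance $\delta$ of $C_e$ to rule out the bad alternative); this is exactly the place where the quantitative hypothesis $\gamma<\min(\delta/8,\dots)$ is used to guarantee $V^{\ast}$ touches every edge ``enough''. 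Finally, $\Pr_v[x|_{\T v}\neq z_v]\le \Pr[v\notin V^{\ast}] + \E_{v\in V^{\ast}}[\Pr_u(z_u|_{\T{uv}}\neq \hat z_v|_{\T{uv}})]$, and both terms are $O(\gamma\alpha/\delta^2)$-type quantities that, after the dust settles and one absorbs constants into the hypothesis on $\gamma$, are bounded by $D\alpha$. (The factor $D$ comes from converting ``$z_v$ and $\hat z_v$ disagree on some triangle of $\T v$'' statements back and forth between edges and triangles in the link.)

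\textbf{Main obstacle.} The delicate point is Step~(2)–(3): propagating the locally-corrected words $\hat z_v$ into a globally consistent codeword while only having the \emph{quadratic} local testability guarantee $\dist\le\rho_0(\beta_v)$ rather than a linear one. One must be careful that the errors $\rho_0(\beta_v)$, when aggregated over a path/expander argument across links, do not blow up; this is controlled precisely because $C_{uv}$ has true distance $\delta$, so as long as the aggregated disagreement on any single edge is below $\delta$ we get exact equality (not merely closeness), which stops error accumulation. Making the thresholds line up — i.e.\ choosing the ``good vertex'' cutoffs so that both $\epsilon_0$ and $\rho_0^{-1}(\delta/4)$ are respected and simultaneously $V^{\ast}$ is dense enough to hit every edge — is the real content, and is exactly what forces the stated bound $\gamma<\min(\frac\delta8,\frac{\delta^2}{128}\min(\epsilon_0,\rho_0^{-1}(\frac\delta4)))$ and $\epsilon=\frac{\delta^2}{128}\min(\epsilon_0,\rho_0^{-1}(\frac\delta4))$.
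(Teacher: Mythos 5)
Your proposal follows the ``direct gluing'' route: correct each local view to $\hat z_v$ via agreement-testability of $C_v$, argue the $\hat z_v$ agree across good edges, and glue. The paper's proof is structurally different: it runs an iterative \emph{local correction algorithm} (replace $z_v$ by any $z'_v\in C_v$ that strictly decreases $\alpha(z)$, until no improvement is possible), and then argues by dichotomy. If the final $\alpha(z)=0$, the glued word is a codeword \emph{by construction} and the bound $\Pr_v[x|_{\T v}\neq z^0_v]\le D\alpha(z^0)$ falls out of counting the number of correction steps (at most $\alpha(z^0)|X(1)|$), which is where the factor $D$ actually comes from. If $\alpha(z)>0$, one derives a contradiction by showing the residual disagreement set $R$ must have density $>\epsilon$: the distance of $C_e$ gives $\Pr_{e\upperrw e'}[e\in R\mid e'\in R]\ge\delta/2$, Lemma~\ref{lemma:updown} transfers this to the lower walk, Claim~\ref{claim:LTC} uses agreement-testability of $C_v$ \emph{together with the halting condition} (no $z'_v$ improves $\alpha$) to show that vertices seeing many $R$-edges have large in-link disagreement, and the swap-walk bound of Lemma~\ref{lemma:sM} converts this into $|R|\ge\epsilon|X(1)|$.

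Your route has genuine gaps that the algorithmic approach is designed to avoid. First, in Step~(2)--(3) you need $x|_{\T e}\in C_e$ for \emph{every} edge $e$, but a Markov bound on $\Pr[v\notin V^\ast]$ cannot guarantee that every edge has a good endpoint; a sparse bad set may contain both endpoints of some edge, and extending $x$ by $0$ on uncovered triangles destroys codeword membership on the edges touching them. There is no clean fix here short of iterating the correction, which is exactly the paper's algorithm. Second, the claimed output bound $\rho(t)=Dt$ is not reachable by your accounting: the averaging/Markov steps cost factors of $1/\big(\delta^2\min(\epsilon_0,\rho_0^{-1}(\delta/4))\big)$, not $D$, and moreover agreement-testability controls the distance from $\hat z_v$ to the neighbors' views $z_u|_{\T{uv}}$, not to $z_v$ itself; since $\Pr_v[x|_{\T v}\neq z_v]$ counts \emph{exact} inequality of whole local views, you would additionally need $\hat z_v=z_v$ for most $v$, which requires a separate argument via the distance of $C_v$ and further vertex exclusions. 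So while your high-level intuition about where the thresholds $\epsilon_0$, $\rho_0^{-1}(\delta/4)$, and $\gamma<\delta/8$ enter is sound, the proof as proposed does not close, and the decisive ideas of the paper --- the terminating local algorithm, the exactness of the $\alpha(z)=0$ case, and the contradiction via the lower/swap walks when $\alpha(z)>0$ --- are missing.
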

\begin{corollary}
    Under the assumptions above, and assuming that each local code $C_v$ is defined by at most $m_0$ parity checks each looking at most $q_0$ bits, the code $C$ is $(d+2, \epsilon/m_0, \rho'(\cdot))$ locally testable under Definition~\ref{def:LTC-classic} where $\rho'(t) = \rho(m_0 t)$.
\end{corollary}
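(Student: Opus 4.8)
The plan is to obtain the corollary as a routine composition of three facts that are already in place: Theorem~\ref{thm:loc2glob} (agreement testability of $C$), Claim~\ref{claim:agr} (agreement testability $\Rightarrow$ local testability in the sense of Definition~\ref{def:LTC}), and Claim~\ref{claim:loctest-to-loctest} (local testability of an LDPC HDX code $\Rightarrow$ bounded-query local testability per Definition~\ref{def:LTC-classic}). Since the hypotheses here are precisely those of Theorem~\ref{thm:loc2glob}, that theorem gives that $C$ is $(\epsilon,\rho(\cdot))$-agreement-testable with $\epsilon=\frac{\delta^2}{128}\min(\epsilon_0,\rho_0^{-1}(\delta/4))$ and $\rho(t)=Dt$. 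Feeding this into Claim~\ref{claim:agr} — whose structural hypothesis ``every vertex lies in the same number of $2$-faces'' holds for our coset complex, where every vertex lies in $|K_i|=q^3$ triangles by Claim~\ref{claim:coset-complex} — we get that $C$ is $(\epsilon/2,\rho_1(\cdot))$-locally testable as in Definition~\ref{def:LTC}, with $\rho_1(\xi)=\rho(2\xi)+\xi=(2D+1)\xi$.

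The remaining point is to replace the canonical vertex-tester of Definition~\ref{def:LTC}, which reads all $q^3$ entries of $f|_{\T v}$, by one reading only $d+2$ entries, via Claim~\ref{claim:loctest-to-loctest}. For this we must realize each $C_v$ as an LDPC code with parity checks of weight $d+2$. Recall $C_v=\sett{f}{\forall e\ni v,\ f|_{\T e}\in C_e}$ and that each $C_e$ is isomorphic to a Reed--Solomon code $RS(q,d_i)$ of length $q$ and degree $d_i\le d$; a Reed--Solomon code of degree $d_i$ admits a parity-check matrix all of whose rows have Hamming weight $d_i+2$ (e.g.\ the order-$(d_i+1)$ finite-difference relations along arithmetic progressions in $\F_q$), so $C_e$ is cut out by parity checks of weight $\le d+2$ on the coordinates $\T e\subset X(2)$. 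Aggregating over the $|X_v(0)|=2q^2$ edges $e\ni v$ exhibits $C_v$ as defined by $m_0\le 2q^2(q-1)$ parity checks, each of weight $\le d+2$. Applying Claim~\ref{claim:loctest-to-loctest} with $q_0=d+2$ to the $(\epsilon/2,\rho_1)$-local testability above yields that $C$ is $\bigl(d+2,\ \tfrac{\epsilon}{2m_0},\ \rho_1(m_0\,\cdot)\bigr)$-locally testable per Definition~\ref{def:LTC-classic}, which is the asserted statement up to the explicit constants (one may freely weaken the soundness threshold, and $\rho_1(m_0 t)=(2D+1)m_0 t = O(D m_0 t)$, so $\rho'(t)=\rho(m_0 t)$ is the bound up to a constant factor). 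Concretely the resulting tester picks a uniformly random edge $e\in X(1)$ and a uniformly random parity check of $C_e$, reads the $\le d+2$ coordinates of $f$ involved, and rejects iff that check fails; since each parity check of $C_e$ is also a parity check of $C_v$ for the two vertices $v\in e$, this is exactly the ``random $v$, then random check of $C_v$'' tester analyzed in Claim~\ref{claim:loctest-to-loctest}.

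There is no genuine obstacle — all the content sits in Theorem~\ref{thm:loc2glob}, and the corollary is bookkeeping. The two things to be careful about are: (i) verifying that the canonical tester's checks really are weight-$(d+2)$ checks on $f\in\F^{X(2)}$ rather than on some derived object, which is immediate from the Reed--Solomon description of the $C_e$; and (ii) tracking how the soundness parameter and the function $\rho'$ degrade through the two reductions (a factor $2m_0$ in the threshold and a factor $m_0$ in $\rho$), which only costs the constants recorded above.
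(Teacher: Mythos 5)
Your proposal is correct and follows essentially the same route as the paper, which proves the corollary by exactly this composition: Theorem~\ref{thm:loc2glob} for agreement testability, then Claim~\ref{claim:agr}, then Claim~\ref{claim:loctest-to-loctest} with the weight-$(d+2)$ Reed--Solomon parity checks. Your extra care with the factor-of-$2$ losses in the threshold and in $\rho$ is a fair observation; the paper states the corollary with those constants absorbed.
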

The corollary follows from a standard conversion from agreement-testability to robust testability, see Section~\ref{sec:agr}, particularly Claims~\ref{claim:agr} and~\ref{claim:loctest-to-loctest}. \inote{which corollary do we want to quote here? regarding LTC with $d+2$ queries?} \rnote{yeah i think so, edited to reflect}

\def\vrej{\rho_0}
\def\erej{\rho_1}
\begin{proof}[Proof of Theorem~\ref{thm:loc2glob}]
Fix $z^0 = \set{z^0_v}$ a collection of local views, such that $\alpha(z^0)<\epsilon$. Run the following local correction algorithm:

\vspace{0.5em}
\noindent
{\bf Local Algorithm:} If there is a vertex $v$ and a choice $z'_v\in C_v$ that reduces $\alpha(z)$ then replace $z_v$ by $z'_v$ and repeat. 
\vspace{0.5em}

Let $z=\set{z_v}$ be the final collection, after the termination of the local algorithm. 
The algorithm must halt in at most $\alpha(z^0)|X(1)|$ steps. At this point, either $\alpha(z)=0$, or $\alpha(z)>0$. In the first case, we will show a nearby codeword. In the second case, we will show that in fact $\alpha(z)>\epsilon$, a contradiction since $\alpha(z) \leq \alpha(z^0) \leq \epsilon$.
\begin{claim} \label{claim:alpha(z)=0}
If $ \alpha(z)=0$ then $z$ corresponds to a codeword $\hat x\in C$ such that $\Pr_{v \in X(0)}[z_v^0 \not= \hat{x}|_{X_{+v}}] \le D \alpha(z^0)$,
where $D$ bounds the maximal degree of a vertex in $X$.
\end{claim}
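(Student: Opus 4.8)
The plan is to exploit that $\alpha(z)=0$ says precisely that the local views $\{z_v\}_{v\in X(0)}$ are pairwise consistent, so they glue into a single word on $X(2)$, and then to read off the distance bound from the number of steps taken by the local correction algorithm.

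First I would construct $\hat x$. Since $\alpha(z)=0$, for every edge $uv\in X(1)$ and every triangle $t\in\T{uv}$ we have $z_u(t)=z_v(t)$. Pick, for each $t\in X(2)$, an arbitrary vertex $v(t)\in t$ and set $\hat x(t):=z_{v(t)}(t)$. I claim $\hat x|_{\T v}=z_v$ for every $v\in X(0)$: given $t\in\T v$, both $v$ and $v(t)$ lie in $t$, so either $v=v(t)$, in which case $\hat x(t)=z_v(t)$ by definition, or $e:=\{v,v(t)\}$ is an edge of $X$ with $t\in\T e$, in which case $z_{v(t)}(t)=z_v(t)$ because $\alpha(z)=0$. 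In particular $\hat x|_{\T v}=z_v\in C_v$ for every $v$, which is exactly the condition defining $C$, so $\hat x\in C$.

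Next I would bound $\Pr_v[z^0_v\neq\hat x|_{\T v}]=\Pr_v[z^0_v\neq z_v]$. Each iteration of the local algorithm strictly decreases the quantity $\alpha(z)\cdot|X(1)|$, which is a nonnegative integer (the number of edges on which the two incident local views disagree), so the algorithm performs at most $\alpha(z^0)\cdot|X(1)|$ iterations; since each iteration alters the local view of exactly one vertex, the set $B:=\{v:z_v\neq z^0_v\}$ satisfies $|B|\le\alpha(z^0)\cdot|X(1)|$. Counting incidences, $2|X(1)|=\sum_{v\in X(0)}\deg(v)\le D|X(0)|$, where $\deg(v)$ is the number of edges at $v$ and $D$ bounds the maximal vertex degree, so $|X(1)|\le\tfrac{D}{2}|X(0)|$. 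Hence
\[\Pr_{v\in X(0)}[z^0_v\neq\hat x|_{\T v}]=\frac{|B|}{|X(0)|}\le\frac{\alpha(z^0)\,|X(1)|}{|X(0)|}\le\frac{D}{2}\,\alpha(z^0)\le D\,\alpha(z^0),\]
which is the claimed bound.

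I do not expect a real obstacle in this direction: the only points to double-check are that the gluing of the $z_v$'s is well defined (immediate from $\alpha(z)=0$) and that the termination bound of the local algorithm translates into a bound on the number of \emph{distinct} vertices whose view changed (true because exactly one vertex is modified per step). The substance of Theorem~\ref{thm:loc2glob} lies in the opposite case $\alpha(z)>0$, where one argues via the expansion of $X$ and the agreement-testability of the $C_v$ that $\alpha(z)$ would have to exceed $\epsilon$, contradicting $\alpha(z)\le\alpha(z^0)<\epsilon$.
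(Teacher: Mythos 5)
Your proof is correct and follows essentially the same route as the paper: glue the pairwise-consistent local views into $\hat x$, bound the number of changed views by the number of steps of the local algorithm (at most $\alpha(z^0)|X(1)|$), and convert that edge count into a vertex fraction. Your final counting step is in fact slightly more careful than the paper's (which writes $\alpha(z^0)|E|/|V| = D\alpha(z^0)$ rather than $\le$), but the conclusion is the same.
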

\begin{proof}
For each triangle $t\in X(2)$ choose arbitrarily a vertex $v\in t$ and set $\hat x(t) = z_{v}(t)$. This choice does not depend on the choice of $v$ because $\alpha(z)=0$ implies that $z_{v}(t) = z_{v'}(t)$ for any $v,v'\in t$.

At every step of the local algorithm, one local view changes. So the number of local views where $\hat x|_{\T{v}}\neq z^0_v$ is at most the number of steps of the algorithm, which is at most $\alpha(z^0)|E|$. So 
\[ \Pr [\hat x|_{\T{v}}\neq z^0_v] \leq \frac{\alpha(z^0)|E|}{|V|} = D\cdot \alpha(z^0) .\] 
\end{proof}
Assume $\alpha(z)>0$, and let
\[R = \sett{uv\in X(1)}{z_u|_{\T{uv}}\neq  z_v|_{\T{uv}}}.\] 
The rest of the proof will show that $R$ has large size.
First, we claim that \begin{equation}\label{eq:updown}
    \Pr_{e \upperrw e' } [ e \in R | e'\in R ] \geq \delta/2.
\end{equation}
where $e \upperrw e'$ is shorthand for the distribution of selecting $e,e'$ from the upper random walk, namely, first choose a random triangle and then choose two distinct edges in it.

Fix an edge $e=uv\in R$. By definition, $(z_u)|_{\T{uv}}\neq (z_v)|_{\T{uv}}$. Since both are codewords in $C_{uv}$, for at least $\delta$ fraction of the triangles $uvw\in \T{uv}$ either  $z_u(uvw)\neq z_w(uvw)$ or $z_v(uvw)\neq z_w(uvw)$ and in particular either $uw\in R$ or $vw\in R$. So the random walk from $uv$ to a triangle $uvw$ and then to $uw$ or $vw$ has probability at least $\delta/2$ of staying in $R$. This establishes \eqref{eq:updown}.
By Lemma \ref{lemma:updown} we further deduce that 
\begin{equation}\label{eq:downup}
    \Pr_{e \lowerrw e' } [ e \in R | e'\in R ] \geq \delta/2-\gamma.
\end{equation}
where $e \lowerrw e'$ is shorthand for the distribution of the lower random walk, namely, selecting a two random edges that intersect on a vertex.

The next step is to focus on the neighborhood of a fixed vertex. Fix $v\in V$. For every neighbor $u$ of $v$ let $y_u = z_u|_{\T{uv}} \in C_{uv}$. This gives us a local view for each neighbor of $v$, which may or may not agree with $(z_v)|_{\T{uv}}$. Let $R(v) = R\cap X_v(1) \supset \{ uw \in X_v(1) ~|~ y_u(uvw) \not= y_w(uvw)\}$. 
We next show that vertices $v$ with small $R(v)$ have few $R$ edges adjacent to them. Here we use the agreement testability of the code $C_v$.
\begin{claim}\label{claim:LTC}
Fix $v\in V$ such that 
$\epsilon_v \eqdef |R(v)|/|X_v(1)| \leq \epsilon_0$. Then $\Pr_{u\in X_v(0)}[uv\in R] \leq \rho_0(\epsilon_v)$.
Contrapositively, if $\Pr_{u\in X_v(0)}[uv\in R]\geq \tau$, then $\epsilon_v \geq \min(\epsilon_0,\rho_0^{-1}(\tau))$.
\end{claim}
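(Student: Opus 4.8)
The plan is to derive Claim~\ref{claim:LTC} directly from the agreement testability of the single local code $C_v$, the only extra ingredient being that $z=\{z_v\}$ is a fixed point of the Local Algorithm. Fix $v$ with $\epsilon_v=\card{R(v)}/\card{X_v(1)}\le\epsilon_0$. The first step is to observe that the family $\{y_u\}_{u\in X_v(0)}$, where $y_u:=z_u|_{\T{uv}}\in C_{uv}$, is exactly a legal input to the agreement tester for $C_v$ in the form specialized to $C_v$ above: one codeword per edge $e\ni v$, indexed by the neighbours $u$ of $v$. Its disagreement parameter is
\[\alpha_v:=\Pr_{uw\in X_v(1)}\big[y_u(uvw)\neq y_w(uvw)\big],\]
and since $y_u(uvw)=z_u(uvw)$ and $y_w(uvw)=z_w(uvw)$, any link-edge $uw$ on which the $y$'s disagree has $z_u|_{\T{uw}}\neq z_w|_{\T{uw}}$, i.e.\ $uw\in R$; by the definition of $R(v)$ recalled in the setup, the disagreeing link-edges form a subset of $R(v)$, so $\alpha_v\le\epsilon_v\le\epsilon_0$.

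Next I would apply the $(\epsilon_0,\rho_0(\cdot))$-agreement testability of $C_v$ to $\{y_u\}$: since $\alpha_v\le\epsilon_0$, there is a codeword $x\in C_v$ with $\Pr_{u\in X_v(0)}[y_u\neq x|_{\T{uv}}]\le\rho_0(\alpha_v)\le\rho_0(\epsilon_v)$, using monotonicity of $\rho_0$. The crucial step is then to invoke the fact that the Local Algorithm has terminated: $x$ is a legal alternative local view at $v$, and replacing $z_v$ by $x$ changes the disagreement status only of edges incident to $v$, so it changes $\alpha(z)$ by $\tfrac{1}{\card{X(1)}}\big(\card{\{u: x|_{\T{uv}}\neq y_u\}}-\card{\{u: z_v|_{\T{uv}}\neq y_u\}}\big)$. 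Termination forces this difference to be nonnegative, hence $\card{\{u:uv\in R\}}=\card{\{u: z_v|_{\T{uv}}\neq y_u\}}\le\card{\{u: x|_{\T{uv}}\neq y_u\}}\le\rho_0(\epsilon_v)\,\card{X_v(0)}$; dividing by $\card{X_v(0)}$ gives $\Pr_{u\in X_v(0)}[uv\in R]\le\rho_0(\epsilon_v)$, which is the first assertion. The contrapositive is then immediate: if $\Pr_u[uv\in R]\ge\tau$, then either $\epsilon_v>\epsilon_0$, or else $\epsilon_v\le\epsilon_0$ and $\tau\le\rho_0(\epsilon_v)$ yields $\epsilon_v\ge\rho_0^{-1}(\tau)$; in both cases $\epsilon_v\ge\min(\epsilon_0,\rho_0^{-1}(\tau))$.

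I expect the main obstacle to be purely bookkeeping rather than conceptual: one must check that $\{y_u\}$ genuinely matches the format of the agreement-testability definition specialized to $C_v$ (one codeword per edge through $v$, indexed by neighbours $u$), and that ``the algorithm cannot strictly reduce $\alpha(z)$ by changing $z_v$'' translates cleanly into ``$z_v$ is consistent with at least as many of the $y_u$ as $x$ is''. The passage between edge/neighbour counts and probabilities uses only regularity of $X$ (so the fraction of edges through $v$ equals the fraction of neighbours of $v$), which is already assumed; notably this claim itself needs no expansion of $X$ — the expansion enters only later when Claim~\ref{claim:LTC} is combined with \eqref{eq:updown}--\eqref{eq:downup}.
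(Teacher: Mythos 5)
Your proof is correct and follows essentially the same route as the paper's: bound the link-level disagreement of $\{y_u\}$ by $\epsilon_v$ via $R(v)$, invoke $(\epsilon_0,\rho_0)$-agreement testability of $C_v$ to get a codeword $\hat z_v$ close to the $y_u$'s, and use termination of the Local Algorithm to conclude $z_v$ is at least as consistent with the $y_u$'s as $\hat z_v$, giving the bound $\rho_0(\epsilon_v)$. Your extra care in spelling out why termination yields the comparison between $z_v$ and the recovered codeword matches the paper's (more terse) reasoning exactly.
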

\begin{proof}
For every neighbor $u$ of $v$ , $y_u = z_u|_{\T{uv}} \in C_{uv}$ either agrees or disagrees with $z_v$. This is measured by the fraction of $R$ edges touching $v$,
\begin{equation}\label{eq:toR}
     \Pr_{u\in X_v(0)}[uv\in R] = \Pr_{u\in X_v(0)}[z_v|_{\T{uv}} \neq y_u] .
\end{equation}
Note also that for $uw\in X_v(1)$, if $y_u(uvw)\neq y_{w}(uvw)$ then $uw\in R(v)$ and so $uw\in R$. The assumption of the claim implies that $Pr_{uw\in X_v(1)}[y_u(uvw)\neq y_{w}(uvw) ] \leq \epsilon_v\leq \epsilon_0$.
The $(\epsilon_0,\rho_0(\cdot))$ agreement-testability of $C_v$ (see Definition \ref{def:agrtest}) guarantees that in this case there exists a codeword $\hat z_v \in C_v$ such that
\begin{equation}\label{eq:roof1}
\Pr_{u\in X_v(0)}[ \hat z_v|_{\T{uv}}\neq y_u] \leq \rho_0(\epsilon_v)
\end{equation} 
where we have used the monotonicity of $\rho_0(\cdot)$. 

Since the local algorithm halted without changing $z_v$ to $\hat z_v$, we conclude, together with \eqref{eq:toR} and \eqref{eq:roof1}, that
\[
  \Pr_{u\in X_v(0)}[uv\in R] =\Pr_{u\in X_v(0)}[ z_v|_{\T{uv}}\neq y_u] \leq \Pr_{u\in X_v(0)}[ \hat z_v|_{\T{uv}}\neq y_u] \leq \rho_0(\epsilon_v).
\]
To prove the contrapositive notice that if $\epsilon_v \geq \epsilon_0$ it is immediate, and if not, $\tau\leq \rho_0(\epsilon_v)$ which means that $\rho_0^{-1}(\tau)\leq \epsilon_v$ as needed ($\rho_0$ is invertible since it is monotone).
\end{proof}

Let $f=\one_R$. By \eqref{eq:downup},
\begin{equation}
    \label{eq:DD}
\iprod{Df,Df}=\iprod{f,UDf} \ge (\delta/2-\gamma)\norm f^2 .
\end{equation}
Observe also that 
\[ \E_v Df(v) = \E_e f(e) = \E_e f(e)^2 = \snorm{f}.\]
\begin{claim}
    For any $\tau>0$ let $V_\tau = \sett{v\in V}{Df(v)> \tau}$. Then 
    \[\Pr_v(V_\tau) \geq (\delta/2 -\gamma-\tau)\snorm f\]
\end{claim}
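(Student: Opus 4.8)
The statement is a routine ``reverse Markov'' estimate, exploiting that $f=\one_R$ is $\{0,1\}$-valued so $Df(v)=\E_{e\ni v}[f(e)]\in[0,1]$ for every $v$. The plan is to start from the second-moment bound \eqref{eq:DD}, namely $\iprod{Df,Df}\ge(\delta/2-\gamma)\snorm f$, and split the expectation $\E_v[Df(v)^2]$ according to whether $v\in V_\tau$ or not.

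First I would bound the ``heavy'' part: since $Df(v)\le 1$ we have $Df(v)^2\le Df(v)\le 1$, so
\[
\E_v\!\left[Df(v)^2\,\one_{v\in V_\tau}\right]\ \le\ \Pr_v(V_\tau).
\]
Next the ``light'' part: when $v\notin V_\tau$ we have $Df(v)\le\tau$, hence $Df(v)^2\le\tau\, Df(v)$, and therefore
\[
\E_v\!\left[Df(v)^2\,\one_{v\notin V_\tau}\right]\ \le\ \tau\,\E_v\!\left[Df(v)\,\one_{v\notin V_\tau}\right]\ \le\ \tau\,\E_v[Df(v)]\ =\ \tau\,\snorm f,
\]
using the identity $\E_v Df(v)=\E_e f(e)=\E_e f(e)^2=\snorm f$ (already noted above, valid since $f$ is $\{0,1\}$-valued and by the regularity of $X$).

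Adding the two bounds gives $\iprod{Df,Df}=\E_v[Df(v)^2]\le \Pr_v(V_\tau)+\tau\,\snorm f$. Combining this with \eqref{eq:DD} yields
\[
(\delta/2-\gamma)\snorm f\ \le\ \Pr_v(V_\tau)+\tau\,\snorm f,
\]
and rearranging gives $\Pr_v(V_\tau)\ge(\delta/2-\gamma-\tau)\snorm f$, as claimed. There is no real obstacle here; the only thing to be careful about is invoking the right normalization (so that $\E_v$ and $\E_e$ are w.r.t.\ the uniform/regular distributions, which lets $\E_v Df(v)=\E_e f(e)$ hold) and the elementary bound $Df(v)^2\le\min(Df(v),\tau\, Df(v))$ on the two parts.
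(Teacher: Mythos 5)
Your proof is correct and follows exactly the paper's argument: split $\snorm{Df}=\E_v[Df(v)^2]$ over $V_\tau$ and its complement, bound the heavy part by $\Pr_v(V_\tau)$ via $Df(v)\le 1$ and the light part by $\tau\snorm f$ via $Df(v)\le\tau$ together with $\E_v Df(v)=\snorm f$, then combine with \eqref{eq:DD}. Nothing to add.
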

\begin{proof} Let $\mu(v)$ denote the probability of a vertex.
\begin{align*}
    (\delta/2-\gamma) \snorm f &\leq \snorm {Df}\\
    &= \sum_{v\in V_\tau} \mu(v) Df(v)^2 + \sum_{v\not\in V_\tau} \mu(v) Df(v)^2 \\
    &\leq  \sum_{v\in V_\tau} \mu(v) \cdot 1 + \sum_{v\not\in V_\tau} \mu(v) Df(v)\cdot\tau\\
    &\leq \sum_{v\in V_\tau} \mu(v) \cdot 1 + \sum_{v\in V} \mu(v) Df(v)\cdot \tau\\
    &=  \Pr_v[V_\tau] + \tau\snorm{f}
\end{align*}
where we have used \eqref{eq:DD} in the first inequality, the definition of $V_\tau$ in the next inequality, and $Df(v)\geq 0$ in the last one.
\end{proof}
Let $\sM = S_{1,0}D$ \inote{changed text description of RW here as well} be the Markov operator corresponding to the random walk that starts at an edge $e$, selects a random vertex such that $e\cup \set v\in X(2)$ (we denote this condition by $e \perp v$), and then chooses a random $e'\ni v$. 
Then, 
\begin{align*}
    \iprod{f,\sM  f}  = \iprod{Df,S_{0,1}f} &= \sum_v \mu(v) Df(v)\cdot S_{0,1} f(v) \\ 
    &= \sum_v \mu(v)( \E_{e\ni v}f(e))( \E_{e'\perp v} f(e'))\\
    &\ge \sum_v \mu(v)( \E_{e\ni v}f(e))\cdot \epsilon_v\\
    &\geq \sum_{v\in V_\tau} \mu(v) \cdot\tau \cdot \min(\epsilon_0,\rho_0^{-1}(\tau))    \\
    &= \tau \min(\epsilon_0,\rho_0^{-1}(\tau)) \Pr[V_\tau]\\
    &\geq \tau \min(\epsilon_0,\rho_0^{-1}(\tau))(\delta/2-\gamma-\tau)\snorm f.
    \end{align*}
where the second inequality is because whenever $v\in V_\tau$, Claim \ref{claim:LTC} implies that $\epsilon_v \geq \min(\epsilon_0,\rho_0^{-1}(\tau))$.

Lemma \ref{lemma:sM} gives a bound of $3\gamma$  on the second largest eigenvalue of $\sM$. We finally apply Lemma \ref{lemma:AC} on the graph corresponding to $\sM$ to deduce that 
\begin{equation}\label{eq:Rbound}
    |R| \ge \big( \tau(\delta/2-\gamma-\tau) \min(\epsilon_0,\rho_0^{-1}(\tau))-3\gamma \big)|X(1)|.
\end{equation}
Choosing for example $\tau=\delta/4$ and as long as $\gamma < \min(\delta/8,\frac {\delta^2}{128}\min(\epsilon_0,\rho_0^{-1}(\frac \delta 4))$ we deduce $|R| > \frac {\delta^2}{128}\min(\epsilon_0,\rho_0^{-1}(\frac \delta 4))\cdot|X(1)| = \epsilon|X(1)|$.
\end{proof}

\subsection{Testability of HDX Codes in Dimensions $k>2$}
In this section we prove a ``trickle-down'' statement for agreement-testability. We show that if the local codes at $i$-links are agreement-testable, then this implies agreement-testability for the codes at $i-1$ links. 
\begin{lemma}\label{lem:testability-trickle-down}
    Let $\delta,\gamma>0$, and let $X$ be a $k$-dimensional $\gamma$-one-sided local expander, and assume that for every $t\in X(k-1)$ we have a code $C_t \subset \set{f:\T t(k)\to\F}$ with minimum relative distance $\delta$. Let $D_i$ denote the maximal number of $i+1$ faces that touch an $i$ face in $X$.
    Define, for every $-1 \le i \le k-2$ face $s\in X(i)$, the code \[C_s = \sett{f:\T s(k)\to\F}{f|_{\T t}\in C_t\;\forall  t\in \T s(k-1)},\]
    and assume that for $s \in X(i)$ the code $C_s$ has minimum relative distance $\delta_i$.
    If there is $\epsilon_{k-2}>0$ and a monotone increasing function $\rho_{k-2}(\cdot)$ such that for every $t\in X(k-2)$ the code $C_t$ is $(\epsilon_{k-2},\rho_{k-2}(\cdot))$-agreement testable, then for every $-1\leq i<k-2$ and every $s\in X(i)$ the code $C_s$ is $(\epsilon_{i},\rho_{i}(\cdot))$-agreement testable with 
    \[
        \epsilon_i = \frac{\delta_{i+1}^2}{128} \min \left( \epsilon_{i+1}, \rho_{i+1}^{-1} \left( \frac{\delta_{i+1}}{4} \right) \right), \qquad \rho_i(t) = D_{i+1} t
    \]
    as long as $\gamma < \min\left( \frac{\delta_{i+1}}{8}, \frac{\delta_{i+1}^2}{128} \min \left( \epsilon_{i+1}, \rho_{i+1}^{-1}(\delta_{i+1}/4) \right) \right)$ for all $i$.
    In particular, the code $C_{\emptyset} \subset \set{f:X(k)\to\F}$ is $(\epsilon_{-1},\rho_{-1}(\cdot))$-agreement testable. 
\end{lemma}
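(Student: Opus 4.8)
The statement to prove, \pref{lem:testability-trickle-down}, is a higher-dimensional generalization of \pref{thm:loc2glob}. The natural plan is to run an induction on the dimension $i$, downward from $i = k-2$, where at each step we apply essentially the same argument as in the proof of \pref{thm:loc2glob}, but relativized to the link $X_s$ of a face $s \in X(i)$.

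The plan is as follows. First I would fix $-1 \le i < k-2$ and a face $s \in X(i)$, and assume inductively that the code $C_t$ is $(\epsilon_{i+1}, \rho_{i+1}(\cdot))$-agreement testable for every $t \in X(i+1)$ (the base case $i+1 = k-2$ is the hypothesis). Now observe that the code $C_s$ is itself an HDX code on the link $X_s$, which is a $(k-i-1)$-dimensional $\gamma$-one-sided local expander: the local codes at the top are the $C_t$ for $t \in \T s(k-1)$, and the codes at \emph{vertices} of $X_s$ — i.e. at faces $s \cup \{v\}$ for $v \in X_s(0)$, which live in $X(i+1)$ — are exactly the $C_{s \cup \{v\}}$. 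So, after reindexing (a vertex of $X_s$ plays the role of ``$v$'', an edge of $X_s$ the role of an edge, etc.), the hypothesis of \pref{thm:loc2glob} is met for the two-dimensional ``skeleton picture'': the local codes around edges of $X_s$ — that is, the $C_{s\cup e'}$ for $e' \in X_s(1)$, equivalently $C_t$ for $t \in \T s(i+2)$ — have relative distance... wait, here I need the distance $\delta_{i+2}$ of the codimension-$(k-i-2)$ codes, and the agreement-testability hypothesis is on the \emph{vertex} codes $C_{s \cup \{v\}} = C_u$ for $u \in X(i+1)$. So the inputs to the \pref{thm:loc2glob} machinery applied to $X_s$ are: (a) codes around ``edges'' of $X_s$ with distance $\ge \delta_{i+2}$, and (b) codes around ``vertices'' of $X_s$ that are $(\epsilon_{i+1}, \rho_{i+1})$-agreement testable. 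But \pref{thm:loc2glob} as stated wants the distance parameter $\delta$ and the agreement parameters to interlock: it concludes $(\epsilon_{i+1\text{-like}}, \rho)$ with $\epsilon$ depending on the \emph{edge} distance. Thus I would want to invoke \pref{thm:loc2glob} with its ``$\delta$'' set to $\delta_{i+1}$ (the distance of $C_s$ itself? no — the distance of the codes around edges of $X_s$). Let me be careful: in \pref{thm:loc2glob}, ``$\delta$'' is the distance of the \emph{edge-local} codes $C_e$, and ``$C_v$'' has no a-priori distance assumption. Mapping to $X_s$: ``$C_e$'' $\leftrightarrow C_{s \cup e'}$ for $e' \in X_s(1)$, which is a face in $X(i+2)$ of the complex, so its distance is $\delta_{i+2}$. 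Hmm, but the Lemma statement writes $\delta_{i+1}$ everywhere. I'll resolve this by noting that \pref{lemma:hd-distance} gives $\delta_j \ge \prod(\delta - \ell\gamma)$ and in particular the distances are all comparable; but to match the Lemma's exact constants I would track that the relevant ``edge distance'' for the step producing $\epsilon_i$ is the distance of the codes around edges in $X_s$, and the Lemma's $\delta_{i+1}$ should be read as that quantity — I'd flag this and use whichever $\delta$ the cited \pref{thm:loc2glob} actually consumes, applying it verbatim.

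Concretely, the proof body would read: \textbf{Induction on $k-2-i$.} For the inductive step at $s \in X(i)$, apply \pref{thm:loc2glob} to the two-dimensional data consisting of the complex $X_s$ (which is $\gamma$-one-sided link expander, hence its $1$-skeleton satisfies what \pref{thm:loc2glob} needs — strictly, \pref{thm:loc2glob} wants a two-dimensional $\gamma$-link-expander, and $X_s$ is $(k-i-1)$-dimensional, but \pref{thm:loc2glob}'s proof only ever uses the walks $UD, M^+, S_{0,1}D$ on vertices/edges/triangles of $X_s$, which only see the $2$-skeleton; since the $2$-skeleton of a $\gamma$-link-expander is a $\gamma$-link-expander, this is fine), the edge-codes being the $C_{s \cup e'}$ for $e' \in X_s(1)$ with distance $\delta$ (the appropriate codimension distance), and the vertex-codes being $C_{s\cup\{v\}} = C_u$, $u \in X(i+1)$, which are $(\epsilon_{i+1}, \rho_{i+1}(\cdot))$-agreement testable by the inductive hypothesis. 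The ``global code'' that \pref{thm:loc2glob} then certifies agreement-testable, i.e. $\sett{f : X_s(2\text{-faces of }X_s)\to\F}{f|_{\T{e'}}\in C_{s\cup e'}}$, is — because $C_s$'s definition aggregates exactly these constraints — isomorphic to $C_s$ restricted appropriately; more precisely $C_s$ as a code on $\T s(k)$ and the codimension-truncation give the same object after identifying $X_s$'s top faces with $\T s(k)$. \pref{thm:loc2glob} outputs $(\epsilon', \rho')$-agreement testability with $\epsilon' = \frac{\delta^2}{128}\min(\epsilon_{i+1}, \rho_{i+1}^{-1}(\delta/4))$ and $\rho'(t) = D' t$ with $D'$ the max vertex-degree in $X_s$, which equals $D_{i+1}$ (the max number of $(i+1)$-faces through an $i$-face). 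This is exactly $\epsilon_i, \rho_i$ as claimed, and the side condition on $\gamma$ is exactly the one in \pref{thm:loc2glob}. Iterating down to $i = -1$ gives agreement-testability of $C_\varnothing$.

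The main obstacle I anticipate is \emph{not} any new mathematical content — it is all bookkeeping — but rather getting the indices and the precise meaning of ``$\delta_{i+1}$'' versus ``distance of the edge-local codes in $X_s$'' to line up with the constants as literally written in the Lemma, and justifying cleanly that \pref{thm:loc2glob}, which is phrased for a two-dimensional complex, applies to the $2$-skeleton of the higher-dimensional link $X_s$ (one must check that the walks and \pref{lemma:updown}, \pref{lemma:sM}, \pref{lemma:AC} only ever reference the $2$-skeleton, which they do). A secondary point to get right is that the inductive hypothesis is applied ``locally'' for \emph{every} $t \in X(i+1)$ simultaneously, so the constants $\epsilon_{i+1}, \rho_{i+1}$ must be uniform over all such $t$ — which they are, since $X$ is vertex-transitive-ish enough / the hypothesis is stated uniformly. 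I would also remark, as the Lemma's proof should, that the relative distances $\delta_i$ used here are controlled by \pref{lemma:hd-distance}, so the chain of side conditions on $\gamma$ can be met by taking $\gamma$ small enough depending only on $\delta$ and $k$.
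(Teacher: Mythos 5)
Your proposal is essentially the paper's proof: the paper fixes $s\in X(i)$, identifies $\T s$ with the link $X_s$, and re-runs the argument of \pref{thm:loc2glob} there (the local correction algorithm followed by the same two-case analysis), with downward induction supplying the agreement-testability of the vertex codes $C_{s\cup\set v}$, exactly as you outline. Your two flagged bookkeeping points are resolved correctly and consistently with the paper: since $C_s$ lives on the top faces $\T s(k)$ rather than on the triangles of $X_s$, \pref{thm:loc2glob} is not invoked as a literal black box but rather its proof is repeated (the spectral lemmas indeed only see the $2$-skeleton, while the distance and agreement-testability steps lift verbatim to top faces), and the appearance of $\delta_{i+1}$ in place of the edge-code distance $\delta_{i+2}$ is harmless because $\delta_{i+1}\le\delta_{i+2}$, so it only weakens the stated conclusion and strengthens the hypothesis on $\gamma$.
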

\begin{proof}
    The proof is very similar to the proof of \pref{thm:loc2glob}. Fix $s\in X(i)$. Our goal is to prove that $C_s$ is testable assuming that for all $v\in X_s(0)$, $C_{s\cup\set v}$ is testable. By bijecting $\T s$ to $X_s$ (mapping each $t\in \T s$ to $t\setminus s$), we can move to the case where $s=\emptyset$, and our codes sit on the faces of dimension $k-i-1$, namely $C_v\subseteq \bits^{\T v(k-i-1)}$ and $C_\phi \subseteq \bits^{X(k-i-1)}$. The complex $X_s$ is a $\gamma$ high dimensional expander, and each vertex $v$ touches at most $D_{i+1}$ edges \snote{Since $s\in X(i)$, each $v\in X_s(0)$ touches $D_{i+1}$ edges}. We also let $\delta_{i+1}$ denote the distance of the code $C_{s \cup \{v\}}$ (see Lemma~\ref{lemma:hd-distance} for a calculation). 

    When $i=k-3$, this was done in the previous section. So the only difference is that now $C_v$ itself is a $(k-i-1)$-dimensional HDX code for possibly $k-i-1>1$. In fact we will see that this makes almost no difference.

    We start with $z^0 = \{ z^0_v \in C_v \}_{v \in X(0)}$ a collection of local views, and define $\alpha(z^0) = \Pr_{uv \in X(1)} [z_u(X_{+uv}) \not= z_v(X_{+uv})]$. Same as in the proof of Theorem~\ref{thm:loc2glob}, we run the local algorithm, replacing $z_v$ with $z'_v \in C_v$ whenever it reduces $\alpha(z)$, until no more changes can be made. Let $z = \{ z_v \}_{v \in X(0)}$ be the final collection.

    Then, by following the same steps as in the proof of Theorem~\ref{thm:loc2glob}, we have the following:
    \begin{itemize}
    \item 
        If $\alpha(z) = 0$, then just as in Claim~\ref{claim:alpha(z)=0} we have that $z$ corresponds to a codeword $\hat{x} \in C$ satisfying $\Pr_{v \in X(0)} [z^0_v \not= \hat{x}|_{X_{+v}}] \le D_{i+1} \alpha(z^0)$. \snote{Should be $D_{i+1}$}
    \item 
        If $\alpha(z) \not= 0$, then the goal is to show that there must have been many edge disagreements to begin with. Define $R = \{ uv \in X(1) ~|~ z_u|_{X_{+uv}} \not= z_v|_{X_{+uv}} \}$. Then as long as $\gamma < \min\left( \frac{\delta_{i+1}}{8}, \frac{\delta_{i+1}^2}{128} \min \left( \epsilon_{i+1}, \rho_{i+1}^{-1}(\delta_{i+1}/4) \right) \right)$, it holds that 
        \[
            |R| \ge \frac{\delta_{i+1}^2}{128} \min \left( \epsilon_{i+1}, \rho_{i+1}^{-1} \left( \frac{\delta_{i+1}}{4} \right)\right) \cdot |X(1)| = \epsilon_{i} |X(1)|.
        \]
        Therefore, the number of edge disagreements to begin with in the original ensemble $z^0$ must have been at least $\epsilon_i |X(1)|$ also.
    \end{itemize}

\end{proof}\label{sec:globLTC}

\section{Codes in Higher Dimensions}\label{sec:higher}
The coset complexes considered here have a higher dimensional version as follows. Fix $k>2$ and define $H_i = \sett{h_i(\alpha)}{\alpha \in \bbF}$ where $h_i(\alpha) = e_{i,i+1}(\alpha t) + I_{k+1}$ 
and let $K_i = span(H_j : j\neq i)$ and $G = span (H_1,\ldots,H_{k+1})$. 
Let $X$ be the $k$-dimensional complex $X[G;K_1,\ldots,K_{k+1}]$.

We have the following properties of $X$:
\begin{itemize}
\item 
    $X$ is a $\gamma$-expander, where $\gamma =  \frac{1}{\sqrt{\card{\F}} - (k-1)}$. We use the trickle down theorem together with the fact that the link of any $t\in X(d-2)$ is either a $\frac{1}{\sqrt{\card{\F}}}$-expander (\cref{claim:eigenvalue}) or a complete bipartite graph (justification in proof of \cref{lem:k-2-face}).
\item 
    For any $t \in X(i)$, the number of $(i+1)$-faces that touch $t$ is at most $D_i = (k-i) \cdot |\bbF|^{k-i-1}$. The reason is that $t \in X(i)$ corresponds to a coset of the group generated by $k-i$ subgroups $H_{j_1}, \dots, H_{j_{k-i}}$. Each $(i+1)$ face that touches $t$ is a coset of the group generated by $k-i-1$ of those subgroups. For each such collection of $k-i-1$ subgroups, there are at most $|\bbF|^{k-i-1}$ cosets of the resulting group contained within $t$.
\end{itemize}

Define like before an embedding of the group elements of $X$ into a vector space
\[ G\to R^{(k+1)^2} \cong \F^{n(k+1)^2}\]
so that the elements in each $gH_i$ embed to an entire affine line. Here we will be working with fields $\bbF = \bbF_p$ that have prime order. Fix a degree parameter $k\gamma |\bbF| < d< \card \bbF/4$ \rnote{changed conditions on $d$} and define for every $t\in X(k-1)$ the code $C_t$ to be the Reed-Solomon code $RS(\card\F,d)$. Further define, for every $i < k-1$ and every face $w\in X(i)$, 
\[C_w = \sett{f:\T w(k)\to\F~}{f|_{\T t}\in C_t\;\forall  t\in \T w(k-1)}.\]
It is immediate that the code $C=\calC^k[X,\set{C_v}_{v\in X(0)}]\subseteq \F^{X(k)}$ is an HDX code as defined in Section \ref{sec:agr}. Furthermore, by Lemma~\ref{lemma:hd-distance}, for any $w \in X(i)$, the code $C_w$ has distance $\ge \delta_i = \prod_{j = 0}^{k-1-i} (\delta - j\gamma)$.

\begin{theorem} \label{thm:loc2glob-highdim}
    Let $\bbF$ be a field of prime order, and let $X = X[G; K_1, \dots, K_{k+1}]$ be a $k$-dimensional complex. If $k\gamma |\bbF| < d < \left( \frac14 - \frac{5}{64} \delta_{k-2} \right) |\bbF|$ and \inote{there was some condition below on $\delta_{k-2}$ and it seems that's it} \rnote{don't know what other conditions we need?}, then for every $i< k-1$ and every $w\in X(i)$, the code $C_w$ is $(\epsilon_i, \rho_i)$-agreement testable, where
    \[
        \epsilon_{k-2} = \left( \frac{p - 4d}{5p} \right)^3 ~~~\text{and}~~~ \rho_{k-2}(\cdot) = 4(\cdot)^{1/3},
    \]
    and for $-1 \le i < k-2$,
    \[
        \epsilon_i = \frac{\delta_{k-2}^3 \cdot \prod_{j = i+1}^{k-2} \delta_j^2}{2^{5} \cdot 2^{7(k-i-1)} \cdot (k+1) \cdot |\bbF|^k} ~~~\text{and}~~~ \rho_i = D_{i+1} \cdot (\cdot),
    \]
    where $\delta_i = \prod_{j = 0}^{k-1-i} (\delta - j\gamma)$ and $D_i = (k-i) \cdot |\bbF|^{k-i-1}$.

    In particular, the code $C_\phi \subset \set{f:X(k)\to\F}$ is $(\epsilon_{-1},\rho_{-1}(\cdot))$-agreement testable.
\end{theorem}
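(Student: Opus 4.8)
The plan is to obtain \cref{thm:loc2glob-highdim} as the combination of a base case at dimension $k-2$ with the trickle-down statement \cref{lem:testability-trickle-down}: once every $C_t$ with $t\in X(k-2)$ is shown to be $(\epsilon_{k-2},\rho_{k-2}(\cdot))$-agreement testable for $\epsilon_{k-2}=\big(\tfrac{p-4d}{5p}\big)^3$ and $\rho_{k-2}(\cdot)=4(\cdot)^{1/3}$, \cref{lem:testability-trickle-down} propagates agreement testability down to every $C_w$ with $w\in X(i)$, $-1\le i<k-2$, with the stated $\epsilon_i,\rho_i$ — provided its $\gamma$-smallness hypothesis holds at each level, which has to be verified in the parameter window $k\gamma|\F|<d<\big(\tfrac14-\tfrac5{64}\delta_{k-2}\big)|\F|$.

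For the base case, fix $t\in X(k-2)$. Its link $X_t$ is a bipartite graph, and unwinding the aggregation $C_t$ is exactly the expander/HDX code on $X_t$ with a copy of $RS(|\F|,d)$ at each vertex of $X_t$ (each such vertex being a $(k-1)$-face through $t$). A commutator computation with the generators $h_i(\alpha)=e_{i,i+1}(\alpha t)+I$ — the Steinberg relations already used in \cref{claim:coset-complex} — shows that $X_t$ depends only on the two ``missing colors'' of $t$: when these are consecutive, $X_t$ is isomorphic to the graph $G_1$ of \cref{claim:eigenvalue}, and the coordinate change of \cref{lemma:loc-iso} together with transitivity of the code (\cref{clm:trans}) identifies $C_t$ with $C_{d_x,d_y}$ for $d_x=d_y=d$; since $d<|\F|/4$ gives $d_x+d_y<|\F|/2$, \cref{cor:agr-test-local} yields precisely the claimed $(\epsilon_{k-2},\rho_{k-2})$-agreement testability. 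When the two missing colors are non-consecutive the corresponding subgroups commute, so $X_t$ is the complete bipartite graph $K_{|\F|,|\F|}$ and $C_t$ is the tensor code $RS(|\F|,d)\otimes RS(|\F|,d)$; its agreement testability is the bivariate low-degree test, i.e. the unskewed special case of \cref{thm:local-testability}, which holds with the same parameters by the same divide-by-a-low-degree-error argument (this is the original Polishchuk--Spielman setting). This is the content of \cref{lem:k-2-face}, which I would invoke here.

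It then remains to apply \cref{lem:testability-trickle-down}. First I would record that in the stated window $\delta=1-d/|\F|$ is a constant bounded away from $0$ (from $d<|\F|/4$), and $d>k\gamma|\F|$ keeps every factor $\delta-j\gamma$, $0\le j\le k$, bounded away from $0$, so each $\delta_i=\prod_{j=0}^{k-1-i}(\delta-j\gamma)$ of \cref{lemma:hd-distance} is a positive constant; together with the bound $d<\big(\tfrac14-\tfrac5{64}\delta_{k-2}\big)|\F|$, which forces $\epsilon_{k-2}\ge(\delta_{k-2}/16)^3$, and with $\gamma=\tfrac1{\sqrt{|\F|}-(k-1)}$, one checks the inequality $\gamma<\min\!\big(\tfrac{\delta_{i+1}}8,\tfrac{\delta_{i+1}^2}{128}\min(\epsilon_{i+1},\rho_{i+1}^{-1}(\delta_{i+1}/4))\big)$ needed at every $i$. \cref{lem:testability-trickle-down} then gives the recursion $\epsilon_i=\tfrac{\delta_{i+1}^2}{128}\min(\epsilon_{i+1},\rho_{i+1}^{-1}(\delta_{i+1}/4))$ with $\rho_i(x)=D_{i+1}x$. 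Unrolling from $i=k-2$ downward — using $128=2^7$, $\rho_{k-2}^{-1}(x)=(x/4)^3$, $\rho_{i+1}^{-1}(x)=x/D_{i+2}$ for $i+1<k-2$, and the uniform bound $D_{i+2}\le(k+1)|\F|^k$ — collects the $\delta$-factors into $\delta_{k-2}^3\prod_{j=i+1}^{k-2}\delta_j^2$ and the powers of $2$ into $2^5\cdot2^{7(k-i-1)}$, yielding the stated lower bound $\epsilon_i=\dfrac{\delta_{k-2}^3\prod_{j=i+1}^{k-2}\delta_j^2}{2^5\cdot2^{7(k-i-1)}\cdot(k+1)\cdot|\F|^k}$ and $\rho_i=D_{i+1}(\cdot)$; the case $i=-1$ is the final assertion about $C_\phi$.

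The main obstacle is the base case. Two sub-points need genuine care: first, the group-theoretic identification of $X_t$ for every $t\in X(k-2)$ — pinning down exactly when two generating subgroups commute and, in the non-commuting case, checking the resulting graph is $G_1$ up to isomorphism so that $C_{d,d}$ appears on the nose; second, the complete-bipartite case, which needs the bivariate agreement test for a tensor of two Reed-Solomon codes, so one must re-read the argument of \cref{thm:local-testability} in the unskewed setting and confirm the parameters survive. Beyond that the remaining work is bookkeeping: unrolling the recursion to match the closed forms, and — the most delicate piece — verifying that the $\gamma$-smallness hypothesis of \cref{lem:testability-trickle-down} is met at all $k-1$ levels, which is exactly where the precise parameter window on $d$ is used.
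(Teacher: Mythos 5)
Your proposal follows the paper's own route exactly: the base case is the paper's Lemma~\ref{lem:k-2-face} (splitting $(k-2)$-faces by whether the two missing colors are consecutive, giving $C_{d,d}$ via Corollary~\ref{cor:agr-test-local} or the Reed--Solomon tensor code via Polishchuk--Spielman), and the descent to lower dimensions is the paper's Lemma~\ref{lem:ind}, i.e.\ repeated application of Lemma~\ref{lem:testability-trickle-down} with the same unrolling of the recursion for $\epsilon_i$ and $\rho_i$. The proposal is correct and essentially identical to the paper's argument.
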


The rest of this section is dedicated to proving Theorem~\ref{thm:loc2glob-highdim}. Suppose first that $i=k-2$ and $s\in X(k-2)$. Recall that $X$ is $(k+1)$-partite, and denote $color(s)\subset[k+1]$ the set of colors of $s$. For $k > 2$ we observe two kinds of links $X_s$, and therefore two kinds of codes $C_s$, depending on the color of $s$.
\begin{lemma} \label{lem:k-2-face}
    Let $s \in X(k-2)$. Let $color(s) = [k+1]\setminus \set{i,j}$. If $|i-j|\equiv1 \mod (k+1)$ then $C_s$ is isomorphic to $C_{d,d}$ as defined in \eqref{eq:def:vertexcode}.
    Otherwise, $C_s \cong RS(\card\F,r)^{\otimes 2}$.

    In both cases $C_s$ is $\left( \left( \frac{p - 4d}{5p} \right)^3, 4 (\cdot)^{1/3} \right)$-agreement testable.
\end{lemma}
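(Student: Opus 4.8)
The plan is to unpack the combinatorial structure of the link $X_s$ for $s \in X(k-2)$, exactly as was done in Lemma \ref{lemma:loc-iso} for the $k=2$ case, by working out which subgroups $H_i$ sit inside the group generated by $H_{j_1}, H_{j_2}$ (where $\{i_1, i_2\} = [k+1] \setminus color(s)$ are the two "missing" colors). A face $s \in X(k-2)$ of color $[k+1]\setminus\{i,j\}$ corresponds to a coset of the group $\langle H_{i}', H_{j}'\rangle$ — more precisely, the link $X_s$ is itself a coset complex on the two subgroups indexed by the missing colors, and its triangles (which are the $X(k)$-faces through $s$) correspond to group elements. First I would reduce to a canonical representative via the transitive group action (Claim \ref{clm:trans}), so it suffices to analyze $s$ equal to a fixed coset; then compute the commutator $[h_i(\alpha), h_j(\beta)]$ using the Steinberg relations recorded in the proof of Claim \ref{claim:coset-complex}. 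The key dichotomy is: if $|i-j| \equiv 1 \pmod{k+1}$, then $[e_{i,i+1}, e_{j,j+1}]$ is a non-trivial elementary matrix (the two subgroups "interact" like $H_2, H_3$ did inside $K_1$), and one gets exactly the skewed constraint structure of $C_{d_x,d_y}$ with $d_x = d_y = d$; otherwise the two subgroups commute and the link is a complete bipartite graph, so the local code is simply the tensor product $RS(\card\F, d)^{\otimes 2}$.

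Concretely, for the first case, I would mimic the matrix computations in the proof of Lemma \ref{lemma:loc-iso}: choose coordinates $(x, y, z)$ on the triangles through $s$ so that one family of $(k-1)$-faces through $s$ imposes "degree $\le d$ in $x$ for fixed $y,z$" and the other imposes "degree $\le d$ in $y$ along the skew line $z \mapsto xy + z$" (possibly after an affine reparametrization, which does not change the degree, as noted after \eqref{eq:localcode-iso}). This gives $C_s \cong C_{d,d}$ in the notation of \eqref{eq:def:vertexcode}. For the second case, when the two missing-color subgroups commute, the set of group elements in the link factors as a product, and the two degree constraints are "degree $\le d$ in $x$" and "degree $\le d$ in $y$" with no skew, which is precisely $RS(\card\F, d)^{\otimes 2} = RS(\card\F, r)^{\otimes 2}$ with $r = d$.

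For the agreement-testability conclusion: in the skew case $C_{d,d}$, apply Corollary \ref{cor:agr-test-local} (equivalently Theorem \ref{thm:local-testability}) with $d_x = d_y = d$, which gives $\left(\left(\frac{p - 2(d_x+d_y)}{5p}\right)^3, 4(\cdot)^{1/3}\right) = \left(\left(\frac{p-4d}{5p}\right)^3, 4(\cdot)^{1/3}\right)$-agreement testability, valid since $d < p/4$ ensures $d_x + d_y = 2d < p/2$. In the tensor case, $RS(\card\F,d)^{\otimes 2}$ is agreement-testable by the Polishchuk--Spielman bivariate low-degree test, which gives at least as good parameters (a linear, not quadratic, relation), so the stated bound holds a fortiori. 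The main obstacle I anticipate is bookkeeping: carefully verifying the $\pmod{k+1}$ index condition against the Steinberg relations — i.e., checking that adjacent indices $i, i+1$ produce the skew interaction and non-adjacent ones commute — and making sure the reparametrization of lines in the link really does land in the exact form of \eqref{eq:def:vertexcode} rather than some affinely-equivalent variant. This is routine but must be done with care about which conjugate of $H_i$ actually lies in the link's ambient group, since the subgroups defining the link are cosets-conjugates, not the original $H_i$ themselves.
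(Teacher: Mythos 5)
Your proposal follows essentially the same route as the paper's proof: the same commutator dichotomy via the Steinberg relations (adjacent indices give the skew interaction reducing to the $k=2$ local code $C_{d,d}$, non-adjacent indices commute giving $RS(\card\F,d)^{\otimes 2}$), with Corollary~\ref{cor:agr-test-local} handling the skew case and Polishchuk--Spielman handling the tensor case. Your explicit remark that the Polishchuk--Spielman parameters dominate the stated ones a fortiori is a detail the paper leaves implicit, but otherwise the arguments coincide.
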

\begin{proof}
    When $\card{i-j}>1\mod (k+1)$ the subgroups $H_i$ and $H_j$ commute since $[h_i(\alpha), h_j(\beta)] = 0$. Then $span(H_i,H_j) \cong \F^2$, and $C_s \cong RS(\card\F,d)^{\otimes 2}$. Also, $\left(\left( \frac{p-2d}{2p} \right)^2, 2(\cdot) \right)$-agreement testability was proven in \cite{PolishchukS94} as long as $d< \card\F/2$.

    When $\card {i-j}=1\mod (k+1)$, we can ignore a large part of the matrices (which is identity). For example, if $i=1$ and $j=2$ then we can restrict attention to the first $3$ rows and columns of the matrices. Thus, we are back in the $k=2$ case, with $span(H_i,H_j)$ isomorphic to the group $K_{6-i-j}$ defined in Section \ref{subsec:X}, and the code $C_s$ is isomorphic to $C_{d,d}$ as defined in \eqref{eq:def:vertexcode}. In this case, by Corollary~\ref{cor:agr-test-local} $C_s$ is $\left( \left( \frac{p - 4d}{5p} \right)^3, 4 (\cdot)^{1/3} \right)$-agreement testable when $d < |\bbF|/4$.
\end{proof}

Moving to $i<k-2$, our proof relies on reverse induction, deducing agreement testability of the level $i$ codes from agreement testability of the level $i+1$ codes.

\begin{lemma}\label{lem:ind}
%
    If $d < \left( \frac14 - \frac{5}{64} \delta_{k-2} \right) |\bbF|$, then for every $-1\leq i<k-2$ and every $s\in X(i)$ the code $C_s$ is $(\epsilon_{i},\rho_{i}(\cdot))$-agreement testable with 
    \[
        \epsilon_i = \frac{\delta_{k-2}^3 \cdot \prod_{j = i+1}^{k-2} \delta_j^2}{2^{5} \cdot 2^{7(k-i-1)} \cdot (k+1) \cdot |\bbF|^k} ~~\text{and}~~ \rho_i(x) = D_{i+1} \cdot x,
    \]
    where $\delta_i = \prod_{j = 0}^{k-1-i} (\delta - j\gamma)$ and $D_i = (k-i) \cdot |\bbF|^{k-i-1}$.
\end{lemma}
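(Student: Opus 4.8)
The plan is to obtain \Cref{lem:ind} as a direct instantiation of the agreement-testability ``trickle-down'' statement \Cref{lem:testability-trickle-down}, with the base of the reverse induction supplied by \Cref{lem:k-2-face}. First I would record the base case $i=k-2$: for every $s\in X(k-2)$, \Cref{lem:k-2-face} shows $C_s$ is isomorphic either to $C_{d,d}$ or to $RS(\card{\F},d)^{\otimes 2}$, and in both cases it is $\bigl((\tfrac{p-4d}{5p})^3,\,4(\cdot)^{1/3}\bigr)$-agreement testable --- for the tensor case this is the Polishchuk--Spielman bound of \cite{PolishchukS94} (which is in fact stronger, both in the threshold and in the decay function), and for the $C_{d,d}$ case it is \Cref{cor:agr-test-local}, valid since $d_x=d_y=d<\card{\F}/4$. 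Hence the (weaker) pair $\epsilon_{k-2}=(\tfrac{p-4d}{5p})^3$, $\rho_{k-2}(x)=4x^{1/3}$ works uniformly over all $(k-2)$-faces, matching the parameters in the statement.

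Next I would feed these into \Cref{lem:testability-trickle-down} and run reverse induction on $i$ from $k-2$ down to $-1$. At each step the lemma produces, for every $s\in X(i)$, a code $C_s$ that is $(\epsilon_i,\rho_i)$-agreement testable with $\rho_i(t)=D_{i+1}t$ and $\epsilon_i=\tfrac{\delta_{i+1}^2}{128}\min\bigl(\epsilon_{i+1},\rho_{i+1}^{-1}(\delta_{i+1}/4)\bigr)$, and the closed form in the statement is just the solution of this recursion. The one nonroutine simplification is at the first step: the hypothesis $d<\bigl(\tfrac14-\tfrac{5}{64}\delta_{k-2}\bigr)\card{\F}$ rearranges to $\tfrac{p-4d}{5p}>\tfrac{\delta_{k-2}}{16}$, hence $\epsilon_{k-2}>(\delta_{k-2}/16)^3=\rho_{k-2}^{-1}(\delta_{k-2}/4)$, so the minimum at level $k-2$ is attained by the $\rho_{k-2}^{-1}$ term and contributes the clean factor $\delta_{k-2}^3$. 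At all later steps $\rho_{i+1}$ is linear, so $\rho_{i+1}^{-1}$ merely divides by the degree parameters $D_j=(k-j)\card{\F}^{\,k-j-1}$; bounding the accumulated product of these crudely yields the denominator $2^{5}\cdot 2^{7(k-i-1)}\cdot(k+1)\cdot\card{\F}^{\,k}$ together with the numerator $\delta_{k-2}^3\prod_{j=i+1}^{k-2}\delta_j^2$, as claimed. Taking $i=-1$ (so $s=\emptyset$) then gives the assertion for the code $C_\emptyset$ on the whole complex.

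The step I expect to be the main obstacle is verifying the side condition of \Cref{lem:testability-trickle-down}, namely $\gamma<\min\bigl(\tfrac{\delta_{i+1}}{8},\,\epsilon_i\bigr)$, at every level $-1\le i<k-2$. Since $\gamma=\tfrac1{\sqrt{\card{\F}}-(k-1)}$ and both $\delta_{i+1}$ and $\epsilon_i$ shrink as $i$ decreases, the binding instance is $i=-1$, where $\epsilon_{-1}$ has accumulated the whole product $\prod_{j\ge0}\delta_j^2$ of trickled-down distances along with the $\card{\F}^{\,k}$ factor. Controlling $\gamma$ against this quantity is precisely where the parameter regime is used: note that $k\gamma\card{\F}<d<\card{\F}/4$ already forces $\sqrt{\card{\F}}>5k-1$, which in turn keeps each $\delta_j=\prod_{\ell=0}^{k-1-j}(\delta-\ell\gamma)$ bounded away from $0$, and one then has to check that $\card{\F}$ is large enough (as a function of $k$ and the chosen $d$) for the inequality $\gamma<\epsilon_{-1}$ to hold. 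I would isolate this into a single arithmetic claim comparing $\gamma$ with $\epsilon_{-1}$; once that is in hand, the reverse induction goes through verbatim and yields $(\epsilon_i,\rho_i)$-agreement testability of $C_s$ for all $-1\le i<k-2$.
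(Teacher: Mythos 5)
Your proposal is correct and takes essentially the same route as the paper: both use \Cref{lem:k-2-face} as the base case, run the reverse induction via the trickle-down \Cref{lem:testability-trickle-down}, and use the hypothesis $d<\bigl(\tfrac14-\tfrac{5}{64}\delta_{k-2}\bigr)\card{\F}$ exactly to resolve the minimum at level $k-2$ in favor of $\rho_{k-2}^{-1}(\delta_{k-2}/4)=(\delta_{k-2}/16)^3$, with the minimum at all lower levels attained by $\epsilon_{i+1}$, yielding the same closed-form recursion. Your additional concern about verifying the side condition $\gamma<\min(\delta_{i+1}/8,\epsilon_i)$ at every level is legitimate but is treated by the paper as a hypothesis carried over from \Cref{lem:testability-trickle-down} (and the standing assumption $k\gamma\card{\F}<d$) rather than something re-derived inside this proof.
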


\begin{proof}
    We have from Lemma~\ref{lem:k-2-face} that for any $s \in X(k-2)$ $C_s$ is $(\epsilon_{k-2}, \rho_{k-2}(\cdot))$-agreement testable, where $\epsilon_{k-2} = \left( \frac{p - 4d}{5p} \right)^3$ and $\rho_{k-2}(\cdot) = 4 (\cdot)^{1/3}$. Then for $s \in X(k-3)$, we have by Lemma~\ref{lem:testability-trickle-down} that $C_s$ is $(\epsilon_{k-3}, \rho_{k-3}(\cdot))$-agreement testable where $\rho_{k-3}(x) = D_{k-2} \cdot x$ and
    \begin{align*}
        \epsilon_{k-3} 
        &= \frac{\delta_{k-2}^2}{128} \min \left( \epsilon_{k-2}, \rho_{k-2}^{-1} \left( \delta_{k-2}/4 \right) \right) \\
        &= \frac{\delta_{k-2}^2}{128} \min \left( \left( \frac{p - 4d}{5p} \right)^3, \left( \delta_{k-2}/16 \right)^3 \right) \\
        &= \frac{\delta^5_{k-2}}{2^{19}} \\
        &\ge \frac{\delta_{k-2}^5}{2^{19} D_{-1}},
    \end{align*}
    where the third equality holds whenever $\frac{p-4d}{5p} > \frac{\delta_{k-2}}{16}$, which happens whenever $d < \left( \frac14 - \frac{5}{64} \delta_{k-2} \right) |\bbF|$. \inote{strange that we want $\delta_{k-2}$ to be small here; anyhow, need to add this condition upstairs}

    In general, for $s \in X(i)$, the code $C_s$ is $(\epsilon_i, \rho_i(\cdot))$-agreement testable where $\rho_i(x) = D_{i+1} \cdot x$ and
    \begin{align*}
        \epsilon_i &= \frac{\delta_{i+1}^2}{128} \min \left( \epsilon_{i+1}, \frac{\delta_{i+1}}{4D_{i+2}} \right) 
        \ge \frac{\delta_{k-2}^3 \cdot \prod_{j = i+1}^{k-2} \delta_j^2}{2^{5} \cdot 2^{7(k-i-1)} \cdot D_{-1}} 
    \end{align*}
    since 
    \[
        \epsilon_{i+1} 
        = \frac{\delta_{k-2}^3 \cdot \prod_{j = i+2}^{k-2} \delta_j^2}{2^5 \cdot 2^{k-i-2} \cdot D_{-1}} 
        < \frac{\delta_{i+1}}{4D_{i+2}}. 
    \]

    Plugging in $D_{-1} = (k+1)|\bbF|^k$ finishes the calculation.

    \inote{yay}
\end{proof}

\bibliographystyle{alpha}
\bibliography{refs}

\end{document}